\definecolor{refColor}{HTML}{0083FF}
\definecolor{figColor}{HTML}{000000}
\definecolor{urlColor}{HTML}{EF002B}
\newcommand{\Ket}[1]{\left|#1\right>}
\newcommand{\ket}[1]{\mathinner{|{#1}\rangle}}
\newcommand{\ol}{\overline}
\newcommand{\bvec}[1]{\mathbf{#1}}
\newcommand{\com}[2]{\left[#1,#2\right]}
\newcommand{\hc}{\mathrm{h.c.}}
\newcommand{\cmark}{\text{\ding{51}}}%
\newcommand{\xmark}{\text{\ding{55}}}%
\newcommand{\CaptionMark}[1]{\textit{#1}}
\newcommand{\tzCaptionLabel}[1]{\textbf{#1}}
\newcommand{\tdec}{t_\text{dec}}
\newcommand{\perr}{p^\xmark}
\newcommand{\pterrcrit}{\tilde p^\xmark_c}
\newcommand{\perrmic}{p^\xmark_0}
\newcommand{\perrcrit}{p_\text{c}^\xmark}
\newcommand{\psuccdec}{P_\text{dec}^\cmark }
\newcommand{\psuccdecbar}{\overline{P}_\text{dec}^\cmark }
\newcommand{\perrdec}{P_\text{dec}^\xmark }
\newcommand{\perrdecbar}{\overline{P}_\text{dec}^\xmark }
\newcommand{\Xok}{\mathcal{X}^\cmark}
\newcommand{\Xfail}{\mathcal{X}^\xmark}
\newcommand{\XCok}{\bar{\mathcal{X}}^\cmark}
\newcommand{\XCfail}{\bar{\mathcal{X}}^\xmark}
\newcommand{\X}{\mathcal{X}}
\newcommand{\I}{\mathcal{I}}
\newcommand{\bZ}{\mathbb{Z}}
\newcommand{\bN}{\mathbb{N}}
\newcommand{\K}{\mathcal{K}}
\newcommand{\fI}{\mathfrak{I}}
\renewcommand{\P}[1]{\operatorname{Pr}_{\perrmic}\left(#1\right)}
\newcommand{\const}{\mathrm{const}}
\newcommand{\CA}{\Gamma}
\renewcommand{\L}{\mathcal{L}}
\newcommand{\maj}[1]{\operatorname{maj}\left[#1\right]}
\newcommand{\bmaj}[1]{\operatorname{\bf{maj}}\left[#1\right]}
\newcommand{\h}{\frac{1}{2}}
\newcommand{\TLV}{\operatorname{TLV}}
\newcommand{\TLVsm}{\protect\overrightarrow{\operatorname{TLV}}}
\newcommand{\TLVdm}{\overline{\operatorname{TLV}}}
\newcommand{\bx}{\bvec x}
\newcommand{\El}{\bx\setminus\I^\bx}
\newcommand{\tmax}{t_\mathrm{max}}
\newcommand{\Z}{\mathbb{Z}}
\newcommand{\GKL}{\mathrm{GKL}}
\newcommand{\floor}[1]{\lfloor #1 \rfloor}
\newcommand{\bt}{\{t\}}
\newcommand{\tL}{t_L^*}
\newtheorem{corollary}{Corollary}
\newtheorem{definition}{Definition}
\newtheorem{lemma}{Lemma}
\newtheorem{proposition}{Proposition}
\newcommand*\patchAmsMathEnvironmentForLineno[1]{%
    \expandafter\let\csname old#1\expandafter\endcsname\csname #1\endcsname
    \expandafter\let\csname oldend#1\expandafter\endcsname\csname end#1\endcsname
    \renewenvironment{#1}%
    {\linenomath\csname old#1\endcsname}%
    {\csname oldend#1\endcsname\endlinenomath}%
}% 
\newcommand*\patchBothAmsMathEnvironmentsForLineno[1]{%
    \patchAmsMathEnvironmentForLineno{#1}%
    \patchAmsMathEnvironmentForLineno{#1*}%
}%
\begin{document}

\begin{center}{\Large \textbf{
        Strictly local one-dimensional topological quantum error correction with symmetry-constrained cellular automata
    }
}
\end{center}

\begin{center}
    N. Lang* and H.P. Büchler
\end{center}

\begin{center}
    Institute for Theoretical Physics III\\
    and Center for Integrated Quantum Science and Technology,\\
    University of Stuttgart, 70550 Stuttgart, Germany
    \\
    * nicolai@itp3.uni-stuttgart.de
\end{center}

\begin{center}
    \today
\end{center}

% Line numbers
% \linenumbers

\section*{Abstract}
{\bf 
    Active quantum error correction on topological codes is one of the most promising routes to long-term qubit storage.
    In view of future applications,
    the scalability of the used decoding algorithms in physical implementations is crucial.
    In this work, we focus on the one-dimensional Majorana chain and construct a strictly local decoder
    based on a self-dual cellular automaton.
    We study numerically and analytically its performance and exploit these results to 
    contrive a scalable decoder with exponentially growing decoherence times in the presence of noise.
    Our results pave the way for scalable and modular designs of actively corrected 
    one-dimensional topological quantum memories.
}

\vspace{10pt}
\noindent\rule{\textwidth}{1pt}
\tableofcontents\thispagestyle{fancy}
\noindent\rule{\textwidth}{1pt}
\vspace{10pt}

%% ================================================================================================
\FloatBarrier
\section{Introduction}
\label{sec:introduction}
%% ================================================================================================

Storing quantum information in a noisy, classical environment is essential for scalable quantum computation and
communication~\cite{DiVincenzo2000}. Kick-started by Shor's 9-qubit code~\cite{Shor1995}, quantum error correction comes to the rescue:
Logical qubits are stored in virtual subsystems~\cite{Zanardi2001} 
that  decouple from typical environmental perturbations and allow for error detection and correction~\cite{Wootton2012,Terhal2015}.
Quantum error correction codes come in two flavors: The conventional ones (e.g., Shor's code) have no
physical interpretation and are treated as abstract entities, isolated from the underlying computational architecture 
(much like \textit{classical} error correction codes).
\textit{Topological} quantum codes, in contrast, are tied to the real world
in that they are realized as ground state manifolds of local Hamiltonians
and thereby inherit the geometry of their environment.
Familiar examples are the Majorana chain (a $p$-wave superconductor) in one~\cite{Kitaev2001,Bravyi2010a}
and the toric code in two spatial dimensions~\cite{Kitaev2006,Dennis2002}, both of which have seen experimental
progress in the last years, see e.g.~\cite{Yao2012,Barends2014,Kelly2015,Albrecht2016} and references therein.
In this manuscript, we are interested in such topological codes in one dimension and present a method to stabilize 
them using only strictly local resources.

Topological codes allow, in principle, for two modes of operation:
Taking their realization as ground states seriously entails the intriguing concept of
\textit{self-correction} where errors appear as excitations that are energetically suppressed by the parent
Hamiltonian~\cite{Bravyi2013a,Michnicki2014,Brell2016}.
In contrast, \textit{active} error correction adopts the algorithmic scheme 
of conventional codes, i.e., an external decoder is fed with measured syndromes and computes compatible corrections.
The fragility of low-dimensional topological order to 
thermal excitations~\cite{Castelnovo2007,Hastings2011,Pedrocchi2015},
and the so-far unsettled quest for realizable self-correcting codes~\cite{Brown2016},
makes \textit{active} error correction on topological codes one of the most promising routes to long-term qubit
coherence~\cite{Raussendorf2007,Katzgraber2009,Duclos2010}. As realizable quantum architectures loom on the horizon~\cite{Ladd2010}, convenient abstractions 
face the intricacies of reality: Can active error correction be implemented efficiently? How can it be scaled up when
it is cast into hardware? 
Since space and time constraints can rule out implementations of otherwise promising
algorithms, it is a crucial question whether and how topological quantum codes can be
stabilized by manifestly local decoders.
For the toric code, this has been tackled with a completely local but hierarchical decoder in~\cite{Harrington2004} 
(inspired by \cite{Gacs2001}), with translationally invariant cellular automata~\cite{Herold2014,Herold2017},
with a modular setup of simple units connected by noisy links in~\cite{Nickerson2013},
and with optimized versions of minimum-weight perfect matching~\cite{Fowler2012,Fowler2012a,Fowler2012b,Fowler2015}.
Prolonging the lifetime of certain stabilizer codes by local unitary operations (instead of full-fledged
error correction) may be a viable alternative~\cite{Freeman2017}.
However, rigorous results on the performance of decoders with strict space and time constraints are scarce.

In this work, we focus on the simplest case of a one-dimensional topological quantum code, defined by the ground state
space of the Majorana chain~\cite{Kitaev2001}, and remodel a known (classical) cellular
automaton~\cite{Toom1995,Park1997} to contrive a convenient, strictly local quantum decoder. 
We prove that both the probability for successful decoding and the time
required to do so scales favorably with the chain length, surpassing conventional \textit{global} decoding schemes.
For realistic error rates, this allows for the stabilization of logical qubits in the presence of
continuous (uncorrelated) noise using shallow, translationally invariant circuits with local wiring only.
This paves the way for scalable and modular on-chip realizations of actively corrected topological quantum memories
based on one-dimensional $p$-wave superconductors.
In the following, we provide a detailed outline of the methods and approaches used to derive these results:

In Subsec.~\ref{subsec:mc} we start with a description of the quantum code defined by the degenerate ground state space 
of the Majorana chain, where dephasing is topologically suppressed and depolarizing errors are
forbidden by fermionic parity superselection (which can be violated in real setups due to quasiparticle poisoning~\cite{Rainis2012,Karzig2017}).
This paradigmatic model exemplifies topological quantum error correction and relates to the familiar
toric code via Jordan-Wigner transformation in the degenerate case of a $L\times 1$ square lattice with open
boundaries.
The syndromes of the Majorana chain quantum code (MCQC) are fermionic quasiparticles flanking strings of parity-preserving errors.
Maximum-likelihood decoding therefore requires pairing quasiparticles with minimum-length error strings; this
scheme is known as \textit{minimum-weight perfect matching} (MWPM)~\cite{Edmonds1965} for the toric code 
and reduces in one dimension to simple \textit{majority voting}, the decoding scheme used for classical repetition codes.
In Subsec.~\ref{subsec:global}, we review the known result that applying majority voting at a fixed rate to the MCQC leads 
to an exponentially growing lifetime of the encoded logical qubit with the chain length $L$.
This is true for continuous, uncorrelated (Bernoulli) noise on the physical qubits with arbitrary on-site error probability
$\perrmic$---except for the singular, completely mixing channel with $\perrmic=\h$; there is no non-trivial error
threshold, in contrast to ``true'' two-dimensional MWPM for the toric code~\cite{Wang2003,Fowler2012}.
However, global majority voting violates locality as it requires \textit{space} for each logic gate and \textit{time}
for communication between them. This raises the question whether this extraordinary robustness of majority voting survives in realistic setups.
In Subsec.~\ref{subsec:locality} we argue that low-level decoders of quantum memories must be realized \textit{in hardware} and close to
the coherent subsystem (here the Majorana chain) to allow for modularity and scalability, both in the number of chains
and their length. Then, collecting the syndromes of an extended chain in a central processing unit, and distributing 
corrections afterwards requires \textit{time}---which scales with the system size $L$. 
We demonstrate that this important feature of global majority voting precludes its application at a fixed rate
for $L\to\infty$, and thereby spoils the favorable scaling of decay times.

This line of thought motivates our search for a manifestly \textit{local} decoder of the MCQC, taking finite
communication speed and spatial extent seriously. Then, locality implies that restrictions on the time granted for decoding
translates into restrictions on the syndromes that can influence a local correction. We derive a generic upper bound on
the success probability for decoding the MCQC with local decoders 
and discuss implications for the scaling of the decoding time with the chain length.

After setting the scene (and sketching what we \textit{can} expect and what we \textit{cannot}), 
we aim for a feasible local decoder of the MCQC.
To this end, Subsec.~\ref{subsec:properties} introduces the concept of \textit{cellular automata} (CA) as well-developed prime example for
physically realistic local computation. The natural invariance of local CA rules in space narrows down the
choice of local decoders but allows for implementations that can be scaled up easily.
While CAs naturally operate on classical bits, the physical qubits of the MCQC are not accessible---only the
syndromes can be measured without perturbing the state (we call this the ``quantum handicap''). 
We argue that only CAs featuring a particular symmetry (called \textit{self-duality}) can be employed as MCQC-decoders.

To decode the MCQC by means of a CA, implementing a \textit{global} majority vote by \textit{local} rules seems a good approach.
This task is known as \textit{density classification problem}~\cite{Packard1988,Oliveira2013} 
and has been shown to be unsolvable for binary CAs in any dimension~\cite{Land1995}.
In Subsec.~\ref{subsec:density} we review some of the results on \textit{approximate} density classifiers which could provide
viable replacements for perfect majority voting if error rates are small (i.e., away from $\perrmic=\h$).
We present two binary CAs that are known to perform well on density classification, one of which (called $\TLV$) is
self-dual; it can be rewritten in a form that complies with the ``quantum handicap'': 
It naturally takes syndromes as input and produces correction operations as output.
Before we can explore the performance of $\TLV$ as MCQC-decoder, the question of boundary conditions has to be
addressed. It is common to place CAs on finite chains with \textit{periodic} boundaries.
In Subsec.~\ref{subsec:boundary} we point out that this is not compatible with \textit{locality} of classical computations on the one
hand and the necessity of a \textit{stretched} quantum chain on the other. Hence a modification of $\TLV$ at the
boundaries is required (denoted by $\TLVdm$). We demonstrate that for MCQC-decoding, \textit{mirrored} boundary conditions are the
way to go: The CA operates in a cavity-like geometry to pair quasiparticles with partners in the edge modes of the
MCQC.

In Subsec.~\ref{subsec:numerics}
we start our analysis of $\TLVdm$ with a numerical evaluation of its decoding capabilities.
Sampling uncorrelated Bernoulli random configurations with on-site error probability $\perrmic$ 
and subsequent evolution with $\TLVdm$ allows us to gauge the possible downsides of
performing only \textit{approximate} majority voting. Despite the existence of periodic cycles that cannot be decoded,
numerics suggests that for $\perrmic <\h$ only an exponentially (in $L$) small fraction of error
patterns fails to be corrected successfully. Moreover, the typical time needed to rotate an error-afflicted instance of the MCQC
back into the codespace grows sublinearly with the chain length $L$ (in contrast to \textit{global} majority voting).
To substantiate these claims, we apply the concept of \textit{sparse errors} to the
particular case of $\TLVdm$.
In Subsec.~\ref{subsec:analytic} we derive a central statement of this work: The probability to decode a length-$L$ MCQC
successfully with $\TLVdm$ after $t\propto L^\kappa$ time steps (with $\kappa >0$ arbitrary) tends to
$1$ exponentially fast for $L\to\infty$ and small but finite error probabilities $\perrmic\ll\h$.  This provides us with a much
simpler and faster decoder than global majority voting (the decoding time of which scales linearly with $L$), 
and especially implies that for these error probabilities the
``expensive'' global nature of majority voting is not required for efficient decoding.

In the remainder, we shift our focus from the \textit{decoding} of static error patterns 
to the \textit{protection} of the MCQC in the presence of continuous (Bernoulli) noise. 
In Subsec.~\ref{subsec:1D} we realize this scenario by applying the local rules of $\TLVdm$ and
on-site errors with probability $\perrmic$ alternately.
We demonstrate numerically that $\TLVdm$ \textit{cannot} cope with such perturbations of its
evolution in that the lifetime of the logical qubit grows only subexponentially with the chain length.
In the light of known results on the behavior of one-dimensional CAs, this is unfortunate but not
surprising: Simple, one-dimensional CAs subject to noise are expected to be ergodic; this is known as the
\textit{positive rates conjecture}~\cite{Liggett1985}---it is well-established that this conjecture is incorrect, but the 
only known counterexample is extraordinary complex~\cite{Gacs1986,Gacs2001,Gray2001}, and we cannot expect that our
setup is a simpler one. 

If one abandons strictly one-dimensional decoders (with circuit complexity $\sim L$), 
there is a rather generic solution to this problem:
Any given decoder can be employed to counter continuous noise by repetitive applications with a fixed rate.
If the time required to decode a fixed error pattern grows with the code size $L$, 
so does the number of required instances running in parallel to prevent errors from accumulating.
Thus the additional hardware overhead due to continuous noise 
correlates with the decoding time for fixed error patterns.
In Subsec.~\ref{subsec:2D} we follow this idea and
stack copies of $\TLVdm$ in the second dimension perpendicular to the quantum chain.
The \textit{depth} of this classical circuit quantifies the hardware overhead
required for the retention of the logical qubit in the presence of noise; as it directly relates to the
decoding time of $\TLVdm$, it grows sublinearly with the chain
length, so that shallow circuits suffice for reasonably low error rates.
Indeed, the complexity of these circuits scales with $L^{1+\kappa}$ for $0<\kappa<1$, in contrast to the typical 
$L^2$-scaling of global majority voting.

%% ================================================================================================
\FloatBarrier
\section{1D topological quantum codes}
\label{sec:1dtqc}
%% ================================================================================================

We start this section with a description of the Majorana chain and thereby review the 
realization of a topological quantum code as degenerate ground state space of a local Hamiltonian.
In particular, we revisit the procedure of quantum error correction using 
syndrome measurements and demonstrate that it reduces to global majority voting in this particular case.
This decoding scheme features an exponentially growing lifetime of the encoded logical qubit with the chain length $L$. 
However, global majority voting violates locality as it relies on the evaluation of a function 
of spatially distributed syndrome measurements.
But the time required for collecting the syndromes of an extended chain in a central processing unit
(and distributing corrections afterwards) scales with the system size $L$. 
We conclude by demonstrating that taking into account this processing time eliminates the
exponential scaling of the qubit lifetime. 
This sets the stage for the construction and study of a strictly local, 
inherently scalable replacement for global majority voting.

%% ================================================================================================
\FloatBarrier
\subsection{Majorana chain}
\label{subsec:mc}
%% ================================================================================================

%% ------------------------------------------------------------------------------------------------
\begin{figure}[tb]
    \centering
    \includegraphics[width=0.7\linewidth]{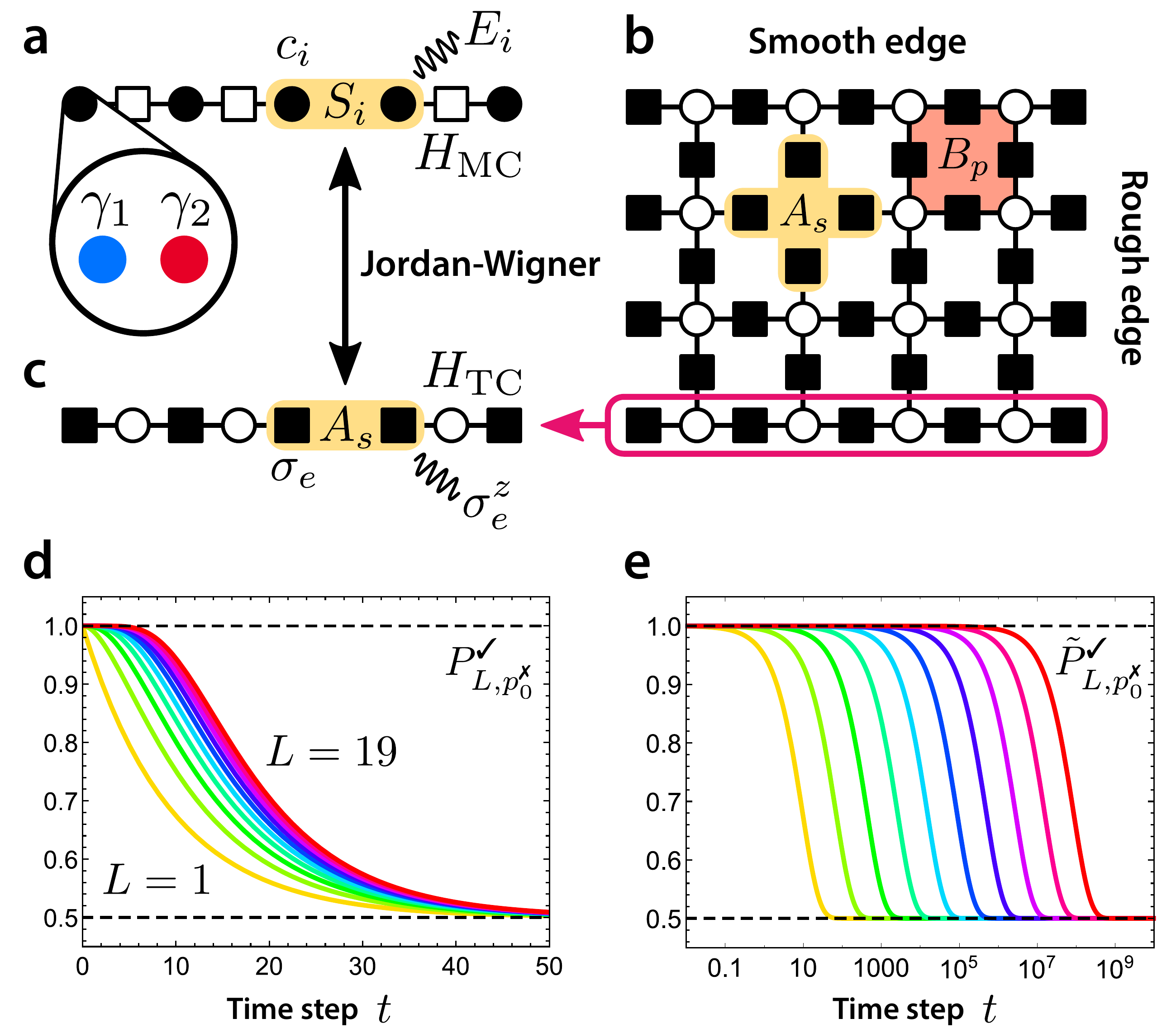}   
    \caption{
        \CaptionMark{1D quantum memories and global majority voting.}
        \tzCaptionLabel{a} The Majorana chain with fermions $c_i$ on sites, composed of Majorana fermions $\gamma_j$, parity symmetric
        errors $E_i$ and syndrome measurements $S_i$.
        \tzCaptionLabel{b} The toric code on a planar square lattice with rough and smooth edges. Stabilizers $A_s$ ($B_p$) act on spins
        on edges adjacent to sites $s$ (faces $p$).
        \tzCaptionLabel{c} The degenerate $L\times 1$ toric code is the Jordan-Wigner transform of the Majorana chain. Thus it
        features the same error syndromes and correction schemes, namely majority voting.
        \tzCaptionLabel{d} The probability $P^\cmark_{L,\perrmic}$ of correctly decoding an ensemble of 
        $L$ bits by global majority voting under continuous noise $p_0^\xmark=0.05$ as a function of time $t$ 
        for different system sizes $L=1,\dots,19$ without any stabilizing correction performed.
        \tzCaptionLabel{e} The same ($\tilde P^\cmark_{L,\perrmic}$) 
        for a stabilized system with global majority voting and active correction after each
        time step. Mind the logarithmic $t$-axis.
    }
    \label{fig:FIG_global_majority}
\end{figure}
%% ------------------------------------------------------------------------------------------------

The simplest example of a topological quantum error correction code in one dimension is given by the degenerate ground state manifold of the
paradigmatic Majorana chain~\cite{Kitaev2001}, see Fig.~\ref{fig:FIG_global_majority}~(a). 
The Hamiltonian for an open chain of $L$ spinless fermions $c_i$ reads
\begin{eqnarray*}
    H_\text{MC}&=&\sum_{i=1}^{L-1}\, \left(w_i\,c_i^\dag c_{i+1}+\Delta_i\,c_ic_{i+1}+\hc\right)
    +\sum_{i=1}^L\,\mu_i\,\left(c_i^\dag c_i-\h\right)
\end{eqnarray*}
where $c_i,c_i^\dag$ denote fermionic annihilation and creation operators, $w_i$ is the tunneling amplitude, $\Delta_i$ the
superconducting gap parameter, and $\mu_i$ denotes the chemical potential.
At the ``sweet spot'', $\mu_i=0$ and $w_i=-\Delta_i=1$, the Hamiltonian takes the form
\begin{equation}
    H_\text{MC}=i\sum_{j=1}^{L-1}\, \gamma_{2j}\gamma_{2j+1}
    =-\sum_{j=1}^{L-1}\, S_j
    \label{eq:majorana}
\end{equation}
with Majorana fermions $\gamma_{2j}\equiv i(c_i^\dag-c_i)$ and $\gamma_{2j-1}\equiv c_i^\dag+c_i$,
$\gamma_j^\dag=\gamma_j$ and $\{\gamma_i,\gamma_j\}=2\delta_{ij}$.
The operators $S_j=-i\gamma_{2j}\gamma_{2j+1}=(-1)^{\tilde c_j^\dag\tilde c_j}$ measure the parity of the localized quasiparticle modes
$\tilde c_j^\dag$ above the superconducting condensate and play the role of a stabilizer known from quantum information
theory~\cite{Bravyi2010a,NielsenChuang201012}:
$\com{S_i}{S_j}=0$, $S_j^\dag=S_j$, and $S_j^2=\mathds{1}$. 
Let $\mathcal{S}=\langle\{S_1,\dots,S_{L-1}\}\rangle$ be the (Abelian) stabilizer group.
The codespace $\mathcal{C}\equiv\{\,\Ket{\Psi}\in\mathcal{H}\,|\,\mathcal{S}\Ket{\Psi}=\Ket{\Psi}\,\}$
is the $\mathcal{S}$-invariant subspace and
encodes a single logical qubit, $\dim\mathcal{C}=2$; this space is equivalent to the degenerate ground state space of Eq.~(\ref{eq:majorana}).
This observation can be understood as follows:
Since the edge Majorana modes $\gamma_L\equiv\gamma_1$ and $\gamma_R\equiv \gamma_{2L}$ are missing in Hamiltonian~(\ref{eq:majorana}), one finds that
$\Sigma^z=S_\text{edges}\equiv -i\gamma_L\gamma_R$ acts on the ground state space $\mathcal{C}$ as
$\com{\Sigma^z}{S_j}=0$.
Furthermore, it allows for the definition of a convenient basis of the code space, namely
\begin{equation}
    \Sigma^z\Ket{\pm 1}=\pm \Ket{\pm 1}.
\end{equation}
Flipping the encoded qubit is possible via the edge modes $\Sigma^x\equiv \gamma_{L}$ and $\Sigma^y\equiv \gamma_R$, e.g.,
\begin{equation}
    \Sigma^x\Ket{\pm 1}=\Ket{\mp 1}
\end{equation}
without violating any stabilizer constraint, $\com{\Sigma^{x,y}}{S_j}=0$.
The operators $\Sigma^\alpha$ characterize the logical qubit completely as they realize the Pauli algebra 
$\com{\Sigma^\alpha}{\Sigma^\beta}=2i\varepsilon_{\alpha\beta\gamma}\Sigma^\gamma$
on $\mathcal{C}$. 

Crucial for realizing a quantum memory is its resilience against depolarizing and dephasing noise.
The Majorana chain fights these types differently: Depolarizing (bit-flip) noise cannot be suppressed by the Hamiltonian
since the logical operators $\Sigma^x$ and $\Sigma^y$ are perfectly local in any embedding of the open chain and energetically not
penalized by the Hamiltonian in Eq.~(\ref{eq:majorana}).
However, in terms of the fermions it is $\Sigma^x = c_1^\dag+c_1$ and $\Sigma^y = i(c_L^\dag -c_L)$---%
operators which break the fermionic parity symmetry of the superconducting Hamiltonian. 
In superconducting systems, fermionic parity is considered a natural symmetry that can
be enforced to high precision because fermions are created by breaking cooper pairs (though it \textit{can} be violated by
quasiparticle poisoning~\cite{Rainis2012,Karzig2017}).
In that sense, the fermionic nature of the physical realization is exploited to suppress depolarizing errors. 
Strictly speaking, this is just symmetry protection.

In contrast, dephasing noise operates as $\Sigma^z\propto \gamma_L\gamma_R$ on the chain which is
a non-local operator that cannot be induced directly by a noisy environment respecting locality.
Indeed, the generic form of environmental noise that is both local 
and parity-symmetric has the form 
$E_j=-i\gamma_{2j-1}\gamma_{2j}$ (note that pairs shifted by a single site act trivially on the code space).
Since $\{E_i,S_j\}=0$ if and only if $j=i$ or $j=i-1$ (otherwise $\com{E_i}{S_j}=0$), a single error $E_i$ is flanked by a pair of syndromes or
\textit{charges} with $S_i=-1=S_{i-1}$. From the condensed matter point of view, this accounts for breaking a cooper pair and lifting the localized quasiparticles above the
superconducting gap. Subsequent errors can move and/or create additional quasiparticle pairs that can, as time goes on, traverse
the macroscopic chain in a noise-driven, diffusive process. 
Once a pair of charges traverses the whole chain, it is described by
\begin{equation}
    \prod\nolimits_{j=1}^{L}E_j=-i\gamma_L\,\left[\prod\nolimits_{i=1}^{L-1}S_i\right]\,\gamma_R=\Sigma^z\,,
    \label{eq:logz}
\end{equation}
where we used $S_i=\mathds{1}$ on the code space. Thus dephasing noise on the logical qubit is only possible if quasiparticles
travel freely through the system. Unfortunately, save for the energy gap which penalizes the \textit{creation} of charges, there is no cost for
\textit{moving} them. This deconfinement renders the Hamiltonian theory unstable at finite temperatures (this is related to the
fact that there is no phase transition for the one-dimensional classical Ising model).

To protect the logical qubit from dephasing, active error correction must be employed. 
Assume the system is initialized in state $\Ket{\Psi}\in\mathcal{C}$ and subsequently has been affected by the error $E(\bvec
x)=\prod_j E_{j}^{x_j}$, encoded by the binary vector $\bvec x=(x_1,\dots,x_L)\in\Z_2^{L}$.
Because $E_i^2=\mathds{1}$, applying the same error twice cancels the latter and removes all
syndromes, 
\begin{equation}
    E(\bvec x)E(\bvec x)\Ket{\Psi}=E(\bvec x\oplus \bvec x)\Ket{\Psi}=E(\bvec 0)\Ket{\Psi}=\ket{\Psi}\,,
\end{equation}
rotating the system's state back into the code space $\mathcal{C}$. %, i.e., the ground state manifold of~(\ref{eq:majorana}).
Here, $\oplus$ denotes the (element-wise) modulo-2 addition.
To infer $\bvec x$, the local stabilizers $\{S_1,\dots,S_{L-1}\}$ are measured periodically 
to yield a binary syndrome pattern $\bvec s=(s_{1+\h},\dots,s_{L-1+\h})\in\Z_2^{L-1}$ (with $S_j=(-1)^{s_{j+\h}}$) that indicates the boundaries of $E(\bvec x)$.
(Recall that only projective measurements that leave $\mathcal{C}$ invariant do not destroy the logical qubit---these are exactly
the stabilizer generators.)
In terms of binary vectors, this reads
\begin{equation}
    \bvec s =\partial\bvec x
    \quad\text{with}\quad
    (\partial\bvec x)_{i+\h}\equiv x_i\oplus x_{i+1}\,.
\end{equation}
The index shift by $\h$ for syndromes is purely formal to distinguish them from error patterns $x_i$.
Inferring $\bvec x$ from $\bvec s$ is complicated by the fact that $\partial\bvec x=s=\partial \bvec x^c$,
where $(\bvec x^c)_i=x_i\oplus 1$ is the element-wise binary complement. The decoding problem is therefore not unique as
complementary error strings share the same syndrome. If $\bvec x^c$ is chosen to specify the correction, one has
\begin{equation}
    E(\bvec x^c)E(\bvec x)\Ket{\Psi}=E(\bvec x^c\oplus \bvec x)\Ket{\Psi}=E(\bvec 1)\Ket{\Psi}=\Sigma^z\Ket{\Psi}
\end{equation}
and thereby (unknowingly) applies a quantum gate on the stored qubit; this follows from Eq.~(\ref{eq:logz}).
Thus it is of paramount importance to choose the correct error pattern. The optimal decoding strategy depends on the error channel
that gave rise to $E(\bvec x)$. Here we will always assume $\bvec x$ to be a sequence of uncorrelated Bernoulli random
variables $x_i$ with parameter $0\leq \perrmic \leq \h$ so that $\Pr(x_i=1)=\perrmic$ for all $i=1,\dots, L$.
Then, the provably best decoder $\Delta$ is (global) \textit{majority voting},
\begin{equation}
    \Delta(\bvec s)\equiv \bvec y
    \quad\text{with}\quad\partial\bvec y=\bvec s\quad\text{and}\quad |\bvec y| < | \bvec y^c|
    \label{eq:delta}
\end{equation}
which realizes maximum-likelihood decoding for repetition codes, i.e., $\bvec y$ is preferred over $\bvec y^c$ because the former
requires less errors ($x_i=1$) and this makes it more probable with respect to a Bernoulli distribution with
$\perrmic<\h$.
In the context of quantum codes (in particular the toric code), the prescription (\ref{eq:delta}) is also called
\textit{minimum-weight perfect matching}, which is equivalent to majority voting in one dimension (see below).
Here, the weight $|\bvec x|$ is the number of non-zero components $x_i=1$ and we require $L$ to be odd to avoid ties ($|\bvec x|=|\bvec x^c|$).
Note that $\Delta$ indeed performs majority voting on $\bvec x$ in the sense that
\begin{subequations}
    \begin{align}
        \bvec x\oplus \Delta(\partial \bvec x)&=
        \begin{cases}
            \bvec x\oplus\bvec x&=\bvec 0\quad \text{if}\quad |\bvec x|\leq\frac{L-1}{2}\\
            \bvec x\oplus\bvec x^c&=\bvec 1\quad \text{if}\quad |\bvec x|\geq\frac{L+1}{2}
        \end{cases}\\
        &=\bmaj{x_1,\dots,x_L}\,.
    \end{align}
\end{subequations}
The majority function on $L$ binary inputs $x_i$ is defined as
\begin{equation}
    \operatorname{maj}\left[x_1,\dots,x_L\right]
    \equiv \left\lfloor \h+\frac{1}{L}\left(\sum\nolimits_{i=1}^L x_i-\h\right)\right\rfloor
    \label{eq:majority_vote}
\end{equation}
(ties evaluate to 0 with this definition) and the bold version $\bmaj{\bullet}$ indicates a vectorized result with each entry
given by $\maj{\bullet}$; $\floor{x}$ denotes the greatest integer less than or equal to $x$.

We conclude that the ``quantum handicap'' of having only access to the syndrome $\bvec s$ for decoding the topological code
does not change the decoding strategy as compared to a \textit{classical} repetition code.
Indeed, for a correctable binary error pattern $\bvec x$ ($|\bvec x|<|\bvec x^c|\,\Leftrightarrow\,\bmaj{\bvec x}=\bvec 0$), the classical repetition codeword
$\Psi\in\{\bvec 0,\bvec 1\}$, and the quantum codeword $\Ket{\Psi}\in\mathcal{C}$, we have for the \textit{classical} code
\begin{subequations}
    \begin{alignat}{3}
        \Psi\;&\stackrel{\text{E}}{\longrightarrow}&\;&\bvec x\oplus \Psi=\Psi'\\
        &\stackrel{\text{C}}{\longrightarrow}&\;&\Delta(\partial\Psi')\oplus\Psi'
        =\Delta(\partial\bvec x)\oplus\bvec x\oplus\Psi
        =\bmaj{\bvec x}\oplus\Psi
        =\Psi
    \end{alignat}
    \label{eq:classicalec}
\end{subequations}
and the \textit{quantum} analogue
\begin{subequations}
    \begin{alignat}{3}
        \Ket{\Psi}\;&\stackrel{\text{E}}{\longrightarrow}&\;&E(\bvec x)\Ket{\Psi}\\
        &\stackrel{\text{C}}{\longrightarrow}&\;&E(\Delta(\partial \bvec x))\,E(\bvec x)\Ket{\Psi}
        =E(\Delta(\partial \bvec x)\oplus \bvec x)\Ket{\Psi}
        =E(\bmaj{\bvec x})\Ket{\Psi}
        =\Ket{\Psi}\,,
    \end{alignat}
    \label{eq:quantumec}
\end{subequations}
where E (C) denotes application of errors (corrections).

To make connection with another well-known topological quantum code, let us, just for a second, peek into the second dimension:
There, the simplest model is given by the toric code~\cite{Dennis2002,Kitaev2003,Kitaev2006} which features a four-fold
degenerate ground state manifold (the code space) and is defined by the Hamiltonian 
\begin{equation}
    H_\text{TC}=-\sum_{\text{Sites}\;s}A_s-\sum_{\text{Faces}\;p}B_p
    \label{eq:tc}
\end{equation}
with stabilizer operators
\begin{equation}
    A_s=\prod_{e\in s}\sigma^x_e\quad\text{and}\quad
    B_p=\prod_{e\in p}\sigma^z_e
\end{equation}
living on an $L_x\times L_y$ square lattice with periodic boundary conditions and spin-$\h$ representations $\sigma_e$
(physical qubits) on the edges. While the toroidal geometry of Hamiltonian~(\ref{eq:tc}) is crucial for its four-fold
ground state degeneracy, it also renders the model experimentally challenging (even more than it already is due to its four-spin
interactions).
However, if the above Hamiltonian is adapted to a \textit{planar} square lattice with appropriately chosen 
(``rough'' and ``smooth'') open boundaries~\cite{Bravyi1998}, the experimental implementation becomes more attractive while the ground state manifold is
still two-fold degenerate and constitutes a topological quantum memory with the two Abelian anyonic excitations $A_s=-1$ and
$B_p=-1$, see Fig.~\ref{fig:FIG_global_majority}~(b).

Reducing this code to the degenerate, one-dimensional case $L_x=L$ and $L_y=1$, yields a 1D spin system which maps directly to
the Majorana chain under the Jordan-Wigner transformation
\begin{subequations}
    \begin{align}
        \gamma_{2j}\quad&\leftrightarrow\quad \left[\prod\nolimits_{i=1}^{j-1}\sigma_i^z\right]\,\sigma_j^y\\
        \gamma_{2j-1}\quad&\leftrightarrow\quad \left[\prod\nolimits_{i=1}^{j-1}\sigma_i^z\right]\,\sigma_j^x
        \label{eq:jw}
    \end{align}
\end{subequations}
[Fig.~\ref{fig:FIG_global_majority}~(c)] with the identifications
\begin{subequations}
    \begin{align}
        S_j\quad&\leftrightarrow\quad\sigma_j^x\sigma_{j+1}^x\;(= A_s)\\
        E_j\quad&\leftrightarrow\quad\sigma_j^z\,.
    \end{align}
\end{subequations}
Note that in one dimension there are no faces and the $B_p$ stabilizers are absent.
Then, Hamiltonian~(\ref{eq:tc}) describes the 1D Ising model and local errors $E_j=\sigma_j^z$ correspond to spin-flips in the
$\sigma^x$-basis; syndromes $S_j=\sigma_j^x\sigma_{j+1}^x=-1$ can be associated with domain walls.

The physical distinction between Majorana chain and 1D toric code/Ising chain becomes evident if one realizes that error strings
$E(\bvec x)$ can be directly measured by $\sigma_j^x=\pm 1$ in the 1D toric code/Ising chain (and not only their endpoints by $S_j=\pm 1$).
The analogous operator for the Majorana chain error strings violates the fermionic parity and is thereby suppressed.
Similarly, while $\Sigma^x=\gamma_L$ is forbidden in the fermionic setting of the Majorana chain due to parity superselection,
there is no natural symmetry in the spin chain preventing $\Sigma^x=\sigma_1^x$ from depolarizing the logical qubit.
This is why it is legit to call the Majorana chain a 1D topological \textit{quantum} memory whereas the mathematically equivalent
1D toric code/Ising chain only protects a \textit{classical} bit by realizing a repetition code.

Nevertheless, from the algorithmic point of view, 
both theories carry the same syndromes and therefore can be corrected with the same algorithms.
In particular, MWPM on the toric code degenerates into majority voting on the Majorana chain.
For the degenerate toric code, this active error correction procedure has 
already been demonstrated experimentally with transmon qubits~\cite{Barends2014,Kelly2015}.

%% ================================================================================================
\FloatBarrier
\subsection{Global majority voting}
\label{subsec:global}
%% ================================================================================================

We proceed with a brief analysis of global majority voting.
As we argued above, we can ignore the ``quantum handicap`` that restricts our knowledge to the endpoints of error strings (the
syndromes) and instead work with the actual error patterns.

Assume a classical bit $x$, initialized as $x=0$, is flipped by a (unbiased) Bernoulli process with 
probability $0\leq \perrmic \leq \h$ per time step $\delta t$.
If we think of the state $x=0$ as the ``clean`` one while $x=1$ indicates as site that is error-afflicted, the probability to find
$x=1$ after $t$ time steps is given by
\begin{equation}
    \perr(t)=\h\left[1-\left(1-2\perrmic\right)^t\right]
\end{equation}
which renormalizes to the completely mixed state $\perr(t)\rightarrow \h$ exponentially fast whenever $0<\perrmic<1$.
If we copy the bit $L$ times, $(x_1,\dots,x_L)$, and encode a logical bit $X$ via a simple repetition code,
\begin{align}
    X&=0\quad\rightarrow\quad \bvec x=(0,0,\dots,0)
\end{align}
then, for uncorrelated Bernoulli error processes on the physical bits $x_i$, the best decoder is given by the global
majority vote
\begin{equation}
    X=\operatorname{maj}\left[x_1,\dots,x_L\right]\,.
\end{equation}
An erroneous logical bit $X=1$ occurs whenever the majority is altered by the local errors. 
Formally, the probability to find $X=1$ after $t$ time steps of accumulating errors is given by ($L$ odd)
\begin{subequations}
    \begin{align}
        P^\xmark_{L,\perrmic}(t)
        &=\sum_{k=\frac{L+1}{2}}^{L}\,\binom{L}{k}\,\left[\perr(t)\right]^k\left[1-\perr(t)\right]^{L-k}\\
        &=\frac{L+1}{2}\binom{L}{\frac{L+1}{2}}\,\int_{0}^{\perr(t)}[x-x^2]^{\frac{L-1}{2}}\mathrm{d}x
        \label{eq:cumulative}
    \end{align}
\end{subequations}
where in the last line 
we used the regularized incomplete beta function to express the cumulative Binomial distribution in a closed
form~\cite{Wadsworth1960} (see the Appendix~\ref{app:cumulative} for details).
We illustrate $P^\cmark_{L,\perrmic}(t)=1-P^\xmark_{L,\perrmic}(t)$ as a function of $t$ in Fig.~\ref{fig:FIG_global_majority}~(d) for different system sizes $L$ and
fixed continuous noise $\perrmic$. Note that the decay time grows very slowly with the system size $L$.

After a \textit{single} time step, we have the logical failure probability 
$P^\xmark_{L,\perrmic}\equiv P^\xmark_{L,\perrmic}(t=1)$.
Assume that after each time step $\delta t$ the errors that occurred during $\delta t$ are immediately countered 
by majority voting, i.e., following Eq.~(\ref{eq:classicalec}) for classical and Eq.~(\ref{eq:quantumec}) for quantum codes.
The probability of the logical (qu)bit to be in its original state after $t$ time steps is then
\begin{equation}
    \tilde{P}^\cmark_{L,\perrmic}(t)=\h\left[1+\left(1-2P^\xmark_{L,\perrmic}\right)^t\right]
\end{equation}
which yields the timescale $T_{L,\perrmic}$ for the logical information loss 
\begin{equation}
    T_{L,\perrmic}=\left[\log\frac{1}{1-2P^\xmark_{L,\perrmic}}\right]^{-1}\;\xrightarrow{L\to\infty}\; \frac{1}{P_{L,\perrmic}^\xmark}
\end{equation}
if $\lim_{L\to\infty}P_{L,\perrmic}^\xmark=0$.
Moreover, it is straightforward to show that it diverges exponentially with the system size for any 
non-trivial (and non-critical) microscopic error probability $0<\perrmic<\h$. 
Indeed, we can use Eq.~(\ref{eq:cumulative}) to derive the upper bound 
\begin{equation}
    P^\xmark_{L,\perrmic}
    \leq \frac{L+1}{2}\binom{L}{\frac{L+1}{2}}\,\perrmic\,q^{L-1}
    \sim \sqrt{L}\,e^{L\log(2q)}
    \label{eq:upper_bound}
\end{equation}
with $q=\sqrt{\perrmic(1-\perrmic)}<\h$ for $\perrmic <\h$;
in the last step, we used the asymptotic approximation $\binom{L}{\frac{L+1}{2}}\sim\sqrt{\frac{2}{\pi}}\,\frac{2^L}{\sqrt{L}}$ 
for $L\to\infty$.
It follows the exponentially diverging decay time of the code
\begin{equation}
    T_{L,\perrmic}\gtrsim\frac{e^{L\log\frac{1}{2q}}}{\sqrt{L}}\,.
    \label{eq:globalmajority}
\end{equation}
This is illustrated in Fig.~\ref{fig:FIG_global_majority}~(e) by plotting
$\tilde P^\cmark_{L,\perrmic}(t)$ over time for different system sizes and fixed continuous error rate.
Eq.~(\ref{eq:globalmajority}) is a quantitative manifestation of the perfect decoding properties of global majority voting on a
repetition code. Note that the only error rate for which decoding fails in the thermodynamic limit is the singular point
$\perrmic=\h$ for which $2q=1\,\Rightarrow \log\frac{1}{2q}=0$.

%% ================================================================================================
\FloatBarrier
\subsection{Constraints by locality}
\label{subsec:locality}
%% ================================================================================================

%% ------------------------------------------------------------------------------------------------
\begin{figure}[tb]
    \centering
    \includegraphics[width=0.7\linewidth]{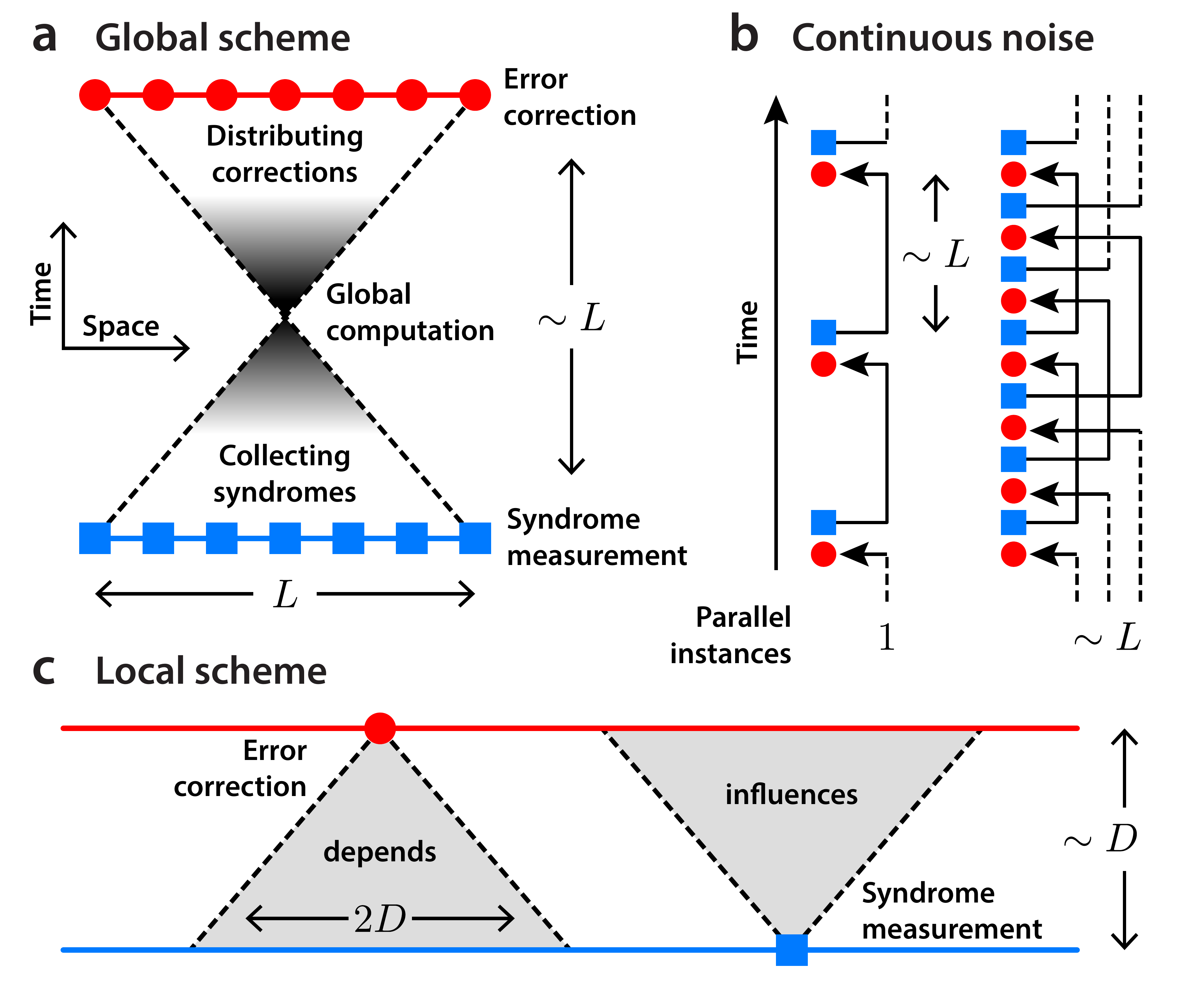}   
    \caption{
        \CaptionMark{Locality constraints.}
        \tzCaptionLabel{a} 
        A global correction scheme on a quantum code of linear size $L$ requires the syndrome data to be merged, 
        processed, and afterwards the results to be distributed again. 
        The finite communication speed makes the time between syndrome measurements (blue squares) 
        and error corrections (red circles) scale with $L$ in the best case.
        \tzCaptionLabel{b}
        With constrained hardware overhead, only one instance of the syndrome processing runs at once, giving rise to intervals
        without correction growing with $\sim L$. %For global majority voting this precludes exponentially growing decay times.
        Repeating syndrome measurements and corrections with a period independent of $L$ requires $\sim L$ instances running in parallel,
        thus increasing the hardware overhead dramatically.
        \tzCaptionLabel{c}
        Local schemes can be used to keep hardware overhead in check by reducing the computation time needed to sublinear (or even
        logarithmic) scaling if $D\sim L^\kappa$ with $k<1$. This restricts the syndromes a local correction operation depends on to a
        subsystem inside the past light cone of diameter $2D$, and allows syndromes to influence only corrections in their future light cone.
    }
    \label{fig:FIG_MAJ_CONSTRAINTS}
\end{figure}
%% ------------------------------------------------------------------------------------------------

Eq.~(\ref{eq:globalmajority}) tells us that global majority voting is a very powerful decoding scheme for the 1D quantum code
realized by the Majorana chain: its critical error rate $\perr_c=\h$ is optimal.
In physics, nothing is for free. This begs the question what it is that we are paying with by employing the global majority decoder
$\Delta$ (or, equivalently, the function $\maj{\dots})$.
One particularly expensive feature of $\maj{x_1,\dots,x_L}$ is its global nature: It depends non-trivially on all $L$ inputs while
their number grows with the system size [see Eq.~(\ref{eq:majority_vote})]. Indeed, one needs to take into account \textit{at least} $\frac{L+1}{2}$ of the inputs to
be sure about the majority; for generic inputs even more. This makes the evaluation of $\maj{\dots}$ a relevant factor that has to
be taken into account when the scaling of the quantum code with $L$ is addressed.

A generic global function, depending on $\sim L$ spatially distributed inputs (syndromes) requires \textit{at least} $\sim c^{-1}L$
time steps to gather its input data (for a 1D geometry). Here $c$ denotes the speed of classical information propagation in the auxiliary systems
framing the quantum chain. This is illustrated by the light cone in Fig.~\ref{fig:FIG_MAJ_CONSTRAINTS}~(a).
In addition, the evaluation itself (shaded region) also requires \textit{at least}
$\mathcal{O}(L)$ time steps because every input has to be read at least once, see e.g., Eq.~(\ref{eq:majority_vote}).
Depending on the decoder, the latter may be improved by parallelization (which, in turn, is payed for by additional hardware
overhead), whereas the former argument remains valid as it is based on physical constraints alone.

An immediate consequence for the global majority decoder $\Delta$ is sketched in
the left panel of Fig.~\ref{fig:FIG_MAJ_CONSTRAINTS}~(b): The time between syndrome measurement (blue square) and correction (red
disk) scales with the system size $L$. Depending on the relevant velocity $c$ (which should be henceforth thought of as comprising both
information propagation and computations) and chain length $L$, this upper-bounds the rate at which $\Delta$ can be applied to
fight continuous noise on the quantum code.

This  has important consequences:
The probability that $\Delta$ flips the logical (qu)bit after accumulating errors for $t=c^{-1}L$ time steps is given by Eq.~(\ref{eq:cumulative}),
\begin{equation}
    \hat P^\xmark_{L,\perrmic}\equiv
    P^\xmark_{L,\perrmic}(t=c^{-1}L)\,\xrightarrow{L\to\infty}\,\h
\end{equation}
where the limit holds for all $0<\perrmic\leq\h$, see Appendix~\ref{app:cumulative} for the derivation.
This is in contrast to
\begin{equation}
    P^\xmark_{L,\perrmic}=
    P^\xmark_{L,\perrmic}(t=1)\,\xrightarrow{L\to\infty}\,0
\end{equation}
which led to the exponential growth of $T_{L,\perrmic}$ if the correction rate is \textit{independent} of the system size, see
Eq.~(\ref{eq:upper_bound}). We conclude that the exponential growth of $T_{L,\perrmic}$ 
(which depends on the exponential vanishing of the logical error probability $P^\xmark_{L,\perrmic}$)
is lost if we take into account the time needed to evaluate the global majority vote.

A possibility to keep both a size-independent correction rate and the global majority vote decoder $\Delta$ is
illustrated in the right panel of Fig.~\ref{fig:FIG_MAJ_CONSTRAINTS}~(b):
Multiple copies of $\Delta$ running in parallel can keep up with continuous noise if after each time step $\delta t$ a new
instance of $\Delta$ is fed with the syndrome $\partial (\bvec x_{t-1}\oplus\bvec x_t)=\partial\bvec x_{t-1}\oplus \partial\bvec x_t$
that encodes only errors accumulated during $\delta t$. Note that intertwining corrections and errors is acceptable as both
commute. The obvious downside of this approach is its hardware overhead: The number of parallel instances required (the ``depth'' of
the decoder) scales with the time needed for a single instance to finish, that is, with $L$.

If we retrace our line of thought, it is obvious that the \textit{global} nature of $\Delta$ 
is responsible for the $L$-scaling of the depth in the presence of continuous noise. 
This motivates the question whether the global decoder $\Delta$ can be replaced by a \textit{local} version $\Delta^D$
which requires only syndrome data within a radius $D$ of each site to compute the correction at this very site;
the corresponding spacetime diagram is shown in Fig.~\ref{fig:FIG_MAJ_CONSTRAINTS}~(c).
The benefits of such a local decoder would be less hardware overhead, simpler implementation, and thus better scaling properties.
It cannot implement $\maj{\dots}$ perfectly and one has to expect decoding errors in some cases (where $\Delta$ would
have succeeded).
However, if these cases are rare for low error rates $\perrmic < \perr_c$ with finite critical rate $0<\perr_c\leq \h$,
and the relaxation time $T_{L,\perrmic}$ still scales exponentially with $L$, this would be perfectly acceptable.
It is such a ``$D$-local'' decoder that we describe and analyze in this manuscript:
\begin{definition}
    Given a decoder $\Delta\,:\,\Z_2^{L-1}\to\Z_2^L$ for the Majorana chain of length $L$ that maps a syndrome pattern 
    $\partial\bvec x\in\Z_2^{L-1}$ to the correction string $\Delta(\partial\bvec x)\in\Z_2^L$.
    Let 
    \begin{equation}
        \pi_i^D\bvec x=\left(x_{\max\{i-D,1\}},\dots,x_i,\dots,x_{\min\{i+D,L\}}\right)
    \end{equation}
    denote the neighborhood of site $i$ with radius $D$.

    The decoder $\Delta$ is called $D$-local (write $\Delta^D$) if 
    \begin{equation}
        \Delta^D_i(\partial\bvec x)=f_i(\partial\pi_i^D\bvec x)
    \end{equation}
    for some
    family of functions $\{f_i\}$:
    Its correction at site $i$ only depends on (syndromes of) error patterns within distance $D$ of $i$.
\end{definition}

Since $D$-local decoders finish after $\sim D$ time steps (if the $f_i$ can be evaluated efficiently), 
the required depth [cf. Fig.~\ref{fig:FIG_MAJ_CONSTRAINTS}~(c)] also scales with $D$. 

In the remainder of this subsection, and before we zoom in on our particular decoder, 
we discuss a constraint that follows for the class of $D$-local decoders $\Delta^D$ quite generally.
Namely the (weak) upper bound for the probability $\psuccdec$ of successfully decoding Bernoulli samples with a $D$-local
decoder
\begin{equation}
    \psuccdec \leq \left[1+\left(\frac{\perrmic}{1-\perrmic}\right)^{2D+1}\right]^{-\frac{L}{2D+1}}\,.
    \label{eq:lightcone}
\end{equation}
Note that this result is generic in the sense that it holds for all decoders of the MCQC
where the correction of site $i$ depends only on nearby syndromes in the neighborhood $\pi_i^D\bvec x$, irrespective of their
local functions $\{f_i\}$.
We call Eq.~(\ref{eq:lightcone}) the \textit{light cone constraint}; its proof can be found in
Appendix~\ref{app:lightcone_derivation}.

Here we discuss some scaling limits of Eq.~(\ref{eq:lightcone}) and their implications for potential decoders replacing $\Delta$.
We assume $D=D(L)$ to be a function of the linear size $L$ of the code (the length of the quantum chain).
We stress that the interpretation of the radius $D$ can be either a spatial depth of a feed-forward physical circuit or a
time-like depth in a spacetime diagram of a truly one-dimensional physical automaton.
In the first case, scaling $D$ with $L$ means growing the system into the second dimension; 
in the second case, it accounts for a longer runtime of the decoder.
Note that the class of decoders with at least $D\sim L$ comprises exactly the global ones (e.g., $\Delta$).

We discuss two important cases:

\paragraph{Case 1: $D=\const$.} This describes a truly one-dimensional feed-forward circuit of finite depth $D$.
We find in the thermodynamic limit
\begin{align}
    \lim\limits_{L\to\infty}\psuccdec\leq&
    \begin{cases}
        0\quad&\text{for}\quad 0<\perrmic\leq \h\\
        1\quad&\text{for}\quad \perrmic=0
    \end{cases}\,,
\end{align}
i.e., there is no successful decoding possible for any finite microscopic error rate $\perrmic>0$.

\paragraph{Case 2: $2D+1\sim L^\kappa$ ($\kappa > 0$).}
This describes a truly two-dimensional feed-forward circuit, possibly slowly growing in the second dimension if
$\kappa\approx 0$. We find in the thermodynamic limit
\begin{align}
    \lim\limits_{L\to\infty}\psuccdec\leq&
    \begin{cases}
        1\quad&\text{for}\quad 0\leq\perrmic<\h\\
        0\quad&\text{for}\quad \perrmic=\h\;\text{and}\;\kappa < 1\\
        \h\quad&\text{for}\quad\perrmic=\h\;\text{and}\;\kappa = 1\\
        1\quad&\text{for}\quad\perrmic=\h\;\text{and}\;\kappa > 1\\
    \end{cases}\,,
\end{align}
i.e., except for the critical point $\perrmic=\h$, there is no constraint on $\psuccdec$. 
At the critical point, the upper bounds depend on whether the second dimension scales slower
or faster than the length of the chain. For faster scaling depth, there is no constraint, whereas for slower scaling
depth, non-trivial upper bounds arise. Note that actually $\psuccdec\leq\h$ follows for $\perrmic=\h$ for
\textit{all} decoders (not only $D$-local ones) since a completely mixing Bernoulli process 
destroys all encoded information. $\psuccdec < \h$ arises whenever the
decoder fails to get rid of all syndromes 
(this is in contrast to the global decoder $\Delta$ which always succeeds in removing all syndromes). 
$\psuccdec = \h$ \textit{can} be realized if the decoder succeeds in
removing all syndromes but still fails to recover the original state in $50\%$ of the cases. 
A detailed derivation of these results is presented in Appendix~\ref{app:lightcone_scaling}.

In conclusion, decoding the Majorana chain in a single step with a constant-$D$ decoder is impossible in the thermodynamic limit.
However, while $D\sim L^\kappa$ with $\kappa\geq 1$ describes only global decoders (in particular, the global majority vote
$\Delta$), there is also no restriction on $\psuccdec$ for the larger class of \textit{local} decoders with $0<\kappa<1$.
This leaves the possibility open for local decoders with less hardware overhead than $\Delta$.

One of the main results of this manuscript is a \textit{lower} bound on $\psuccdec$ for a class of local decoders 
which allows to scale $D$ at will. 
In particular, we find that Eq.~(\ref{eq:lightcone}) is saturated in the thermodynamic limit for 
$D\sim L^\kappa$ with arbitrary $\kappa >0$ below a critical error rate $\perr_c>0$.

%% ================================================================================================
\FloatBarrier
\section{Cellular automata}
\label{sec:ca}
%% ================================================================================================

In this section, we introduce a strictly local decoder for the MCQC.
Our approach is based on cellular automata, thus we start with a description of this framework and the relevant properties.
In particular, we demonstrate that for the MCQC---where only the syndromes can be measured---we 
have to resort to CAs that are characterized by \textit{self-duality}, a symmetry of the local evolution rules.
The natural choice is then to focus on such CAs which additionally approximate global majority voting.
This task is known as \textit{density classification} and we present two CAs that are known to perform 
well as density classifiers, one of which (called $\TLV$) exhibits self-duality.
Since the quantum code is embedded on a finite chain with open boundaries, 
it is essential to modify $\TLV$ at the edges; this new CA is denoted as $\TLVdm$.
We argue that it acts as a self-dual density classifier on finite chains,
and thereby qualifies as a promising local replacement for global majority voting on the MCQC.

%% ================================================================================================
\FloatBarrier
\subsection{Properties of cellular automata}
\label{subsec:properties}
%% ================================================================================================

To describe our local decoder, we make use of the well-known framework of one-dimensional, 
binary cellular automata~\cite{Wolfram1983,wolfram2002new};
discrete dynamical systems defined on a 1D lattice $\L$ of binary cells $i\in \L$ with indices in $\L=\Z$ (infinite),
$\mathbb{N}$ (semi-infinite), or $\{1,\dots,L\}$ (finite). A \textit{state} $\bvec x\in\Z_2^\L$ is 
formally a map $\bvec x\,:\,\L\to\Z_2$ assigning a state
$x_i$ to each cell $i\in\L$. Equivalently, $\bvec x \subseteq\L$ may be read as the subset of lattice indices $i\in\L$ where $x_i=1$.
A cellular automaton $\CA_\L\,:\,\Z_2^\L\to\Z_2^\L$ of radius $R\in\mathbb{N}$ is defined by a 
collection of binary functions $\gamma_i\,:\,\Z_2^{2R+1}\to\Z_2$ that determines
a discrete time-evolution on $\Z_2^\L$ via
\begin{equation}
    x_i'=\Gamma_i(\bvec x)\equiv \gamma_i(\pi_i^R\bvec x)\,.
\end{equation}
We write $\bvec x'=\CA_\L(\bvec x)$ for short. If $\gamma_i=\gamma$ for all $i\in\L$, $\CA_\L$ is called
\textit{translationally invariant}. For $R>0$, translational invariant CAs can be defined on infinite chains $\L=\Z$ and finite
chains $\L=\{1,\dots,L\}$ with periodic boundary conditions, as opposed to semi-infinite chains $\L=\mathbb{N}$ and finite chains
with open boundaries where modifications at the boundaries are necessary (see below).
We write $\bvec x(t)=\CA_\L^t(\bvec x(0))$ for the state $\bvec x(t)$ that is produced by $t$ consecutive applications of $\CA_\L$
on the initial state $\bvec x(0)$. A state $\bvec x^\ast$ with $\bvec x^\ast=\CA_\L(\bvec x^\ast)$ is called
$\textit{fixed point}$ of $\CA_\L$. More generally, a finite subset of states $C\subseteq\Z_2^\L$ which is invariant,
$\CA_\L(C)=C$, and does not contain a proper invariant subset is called a \textit{cycle} (a fixed point is a cycle with one
element). On finite chains $\L=\{1,\dots,L\}$, the CA always ends up in a cycle after a finite relaxation time due to the
finiteness of the state space $\Z_2^\L$. For a given cycle $C$, the maximal set of states $A_C\subseteq \Z_2^\L$ with
$\lim_{t\to\infty}\CA_\L^t(A_C)=C$ is called \textit{attractor} of $C$. We will be interested in CAs with the homogeneous fixed
points $\bvec x^\ast=\bvec 0$ and $\bvec 1$ (characterized by the absence of syndromes) 
and their corresponding attractors $A_{\bvec 0}$ and $A_{\bvec 1}$.

The dynamics of a CA can be strongly influenced and restricted by symmetries of the transition rules $\CA_\L$.
In the following, we are particularly interested in the special class of \textit{self-dual} CAs:
\begin{definition}
    A binary cellular automaton $\CA_{\L}\,:\,\Z_2^{\L}\to \Z_2^{\L}$ described by $x_i'=\CA_i(\bx)$ is called \textit{self-dual}
    if $\left(\CA_i(\bx)\right)^c=\CA_i(\bx^c)$ for all $i\in\L$ with the binary complement $x^c_i\equiv x_i\oplus 1$.
\end{definition}
Self-duality is therefore a \textit{symmetry} satisfied only by particular CA rules $\Gamma_\L$.
For example, local rules based on majority votes of adjacent cells are automatically self dual 
because the binary majority function is, 
\begin{equation}
    \maj{x_{i_1}^c,\dots}=\left(\maj{x_{i_1},\dots}\right)^c
\end{equation}
whereas logical dis-- and conjunctions violate the symmetry, e.g.,
\begin{equation}
    x_1^c\wedge x_2^c = (x_1\vee x_2)^c\neq (x_1\wedge x_2)^c\,.
\end{equation}
The importance of self-dual CAs in the context of quantum error correcting the Majorana chain stems from the following
observation:
\begin{lemma}
    Let $\CA_{\L}$ be a self-dual, binary cellular automaton $\CA_{\L}\,:\,\Z_2^{\L}\to \Z_2^{\L}$ 
    acting on a one-dimensional chain $\L$ (infinite, semi-infinite, open or periodic boundaries); 
    let $\bvec s=\partial\bx$ denote the syndromes.

    Then there are two equivalent representations of $\CA_{\L}$:
    \begin{itemize}

        \item 
            The \textbf{state-state} representation is given by the conventional transformation rule
            \begin{equation}
                \bx\mapsto \bx'=\CA_\L(\bx)
            \end{equation}
            which transforms the current state $\bx$ into the new state $\bx'$.
            It operates on the states of cells $\Z_2^\L$ on the lattice $\L$.

        \item The \textbf{syndrome-delta} representation is given by
            the two-step process
            \begin{subequations}
                \begin{align}
                    \bvec s &\mapsto \bvec\Delta=\partial\CA_\L(\bvec s)\\
                    (\bvec \Delta,\bvec s)&\mapsto  \bvec s'=\partial\bvec\Delta\oplus\bvec s
                \end{align}
            \end{subequations}
            and transforms the current syndrome $\bvec s$ into the new syndrome $\bvec s'$ via the intermediate
            result (delta) $\bvec\Delta$. It operates on states of syndromes $\Z_2^{\partial\L}$ 
            on the \textit{dual} lattice $\partial\L$.
            The derived rule $\partial\CA_\L$ is defined for $i\in\L$ as
            \begin{equation}
                \partial\CA_i(\bvec s)\equiv\CA_i\left(\left\{ x_k=\textstyle{\bigoplus_{\sigma\in \ol{ki}}} s_{\sigma}\right\}\right)
            \end{equation}
            where for two sites $k,i\in\L$, $\ol{ki}=\ol{ik}$ denotes the set of sites in $\partial \L$ (edges in $\L$) between $i$
            and $k$ (for periodic boundary conditions, this is well-defined due to the constraint 
            $\oplus_{\sigma\in\partial\L} s_\sigma=0$).
    \end{itemize}

    \label{lem:syndrome-delta}
\end{lemma}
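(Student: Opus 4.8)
The plan is to prove both lines of the syndrome-delta process by tracking the single auxiliary object that makes the two representations coincide: the \emph{flip pattern} $\bvec\Delta\equiv\CA_\L(\bvec x)\oplus\bvec x$, i.e.\ the set of cells toggled by one application of the rule. Once $\bvec\Delta$ is in hand, the updated state is $\bvec x'=\bvec x\oplus\bvec\Delta$, so by $\Z_2$-linearity of the difference operator $\partial$ the updated syndrome is
\[
    \bvec s'=\partial\bvec x'=\partial(\bvec x\oplus\bvec\Delta)=\partial\bvec x\oplus\partial\bvec\Delta=\bvec s\oplus\partial\bvec\Delta,
\]
which is exactly the second line. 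Hence the entire content of the lemma reduces to the first line, namely that $\bvec\Delta$ depends on $\bvec s$ alone and is given componentwise by $\Delta_i=\partial\CA_i(\bvec s)$.

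First I would record that the reconstruction appearing in the definition of $\partial\CA_i$ recovers the true state up to a single global bit. Telescoping the defining relations $s_{j+\h}=x_j\oplus x_{j+1}$ along the path from $i$ to $k$ collapses in $\Z_2$ to $\bigoplus_{\sigma\in\ol{ki}}s_\sigma=x_i\oplus x_k$, so the reconstructed state $\bvec x^{(0)}$ (which fixes the reference value $x_i^{(0)}=0$) satisfies $x_k^{(0)}=x_i\oplus x_k$ for every $k$. Restricted to the neighborhood $\pi_i^R$, this means $\bvec x^{(0)}$ equals $\bvec x$ itself when $x_i=0$, and the complement $(\bvec x^{(0)})^{c}$ equals $\bvec x$ when $x_i=1$.

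The crux, and the only place self-duality enters, is evaluating the local flip $\Delta_i=\CA_i(\bvec x)\oplus x_i$ in these two cases. If $x_i=0$ then $\CA_i(\bvec x)=\CA_i(\bvec x^{(0)})=\partial\CA_i(\bvec s)$, so $\Delta_i=\partial\CA_i(\bvec s)$. If $x_i=1$ then the neighborhood of $\bvec x$ is the complement of that of $\bvec x^{(0)}$, and self-duality yields $\CA_i(\bvec x)=\CA_i((\bvec x^{(0)})^{c})=(\CA_i(\bvec x^{(0)}))^{c}=\partial\CA_i(\bvec s)\oplus 1$; adding back $x_i=1$ again gives $\Delta_i=\partial\CA_i(\bvec s)$. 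Thus $\Delta_i$ is independent of the unknown global bit $x_i$ and equals $\partial\CA_i(\bvec s)$, which is the first line.

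What remains is to check that the reconstruction is well defined for each boundary condition. On infinite, semi-infinite, and open finite chains the path between $i$ and $k$ is unique, so $\ol{ki}$ and hence $x_k^{(0)}$ are unambiguous. The only subtle case is periodic boundaries, where two complementary arcs of the ring connect $i$ and $k$: the two determinations of $x_k^{(0)}$ differ by $\bigoplus_{\sigma\in\partial\L}s_\sigma$, which vanishes for any genuine syndrome because $\bigoplus_\sigma(\partial\bvec x)_\sigma=\bigoplus_j(x_j\oplus x_{j+1})=0$ around a cycle. I expect this path-independence on the ring, together with careful bookkeeping of the complement in the self-dual step, to be the only genuinely delicate points; the rest is linearity of $\partial$ and tracking of neighborhoods.
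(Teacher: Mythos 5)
Your proposal is correct and follows essentially the same route as the paper: both define $\bvec\Delta=\bx'\oplus\bx$, reduce everything to showing $\Delta_i=\partial\CA_i(\bvec s)$ via the telescoped reconstruction $x_k = x_i\oplus\bigoplus_{\sigma\in\ol{ik}}s_\sigma$, and invoke self-duality to cancel the unknown reference bit $x_i$. Your explicit two-case analysis ($x_i=0$ versus $x_i=1$) is just the unpacked version of the paper's one-line use of the affine form of self-duality, $f(\{y\oplus x_j\})=y\oplus f(\{x_j\})$.
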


The following (rather technical) proof can be skipped on first reading if the
existence of the syndrome-delta representation is intuitively understood and/or accepted as a fact.

\begin{proof}
    We show that there is a one-to-one correspondence between the two descriptions by constructing them explicitly.
    To this end, consider an arbitrary self-dual binary function $f\,:\,\Z_2^\L\to\Z_2$.
    First, note that self-duality is equivalent to the property
    \begin{equation}
        f(\{y\oplus x_i\})=y\oplus f(\{x_i\})
    \end{equation}
    for $y\in\Z_2$ since $x_i^c=1\oplus x_i$.
    If we use that
    \begin{subequations}
        \begin{align}
            x_i\oplus x_k&=
            x_i\oplus(x_{i+1}\oplus x_{i+1})\oplus\dots\oplus (x_{k-1}\oplus x_{k-1})\oplus x_k\\
            &=(x_i\oplus x_{i+1})\oplus (x_{i+1}\oplus\dots\oplus x_{k-1})\oplus (x_{k-1}\oplus x_k)\\
            &=s_{i+\h}\oplus\dots\oplus s_{k-\h}
            =\bigoplus\nolimits_{\sigma\in\ol{ik}}\,s_\sigma
        \end{align}
    \end{subequations}
    and therefore
    \begin{equation}
        x_i=x_k\oplus\bigoplus\nolimits_{\sigma\in\ol{ik}}\,s_\sigma
    \end{equation}
    for any $k\in\L$ and $\bvec s=\partial\bx$, it follows
    (for fixed but arbitrary $k$)
    \begin{subequations}
        \begin{align}
            f(\{x_i\}) = & f\left(\left\{
                x_k\oplus\textstyle{\bigoplus_{\sigma\in\ol{ik}}}\,s_\sigma
            \right\}\right)
            = x_k\oplus f\left(\left\{
                \textstyle{\bigoplus_{\sigma\in\ol{ik}}}\,s_\sigma
            \right\}\right)\\
            \equiv & x_k\oplus \tilde f|_k(\{s_\sigma\})
            = x_k\oplus \tilde f|_k(\partial\bx)\,.
        \end{align}
    \end{subequations}
    For a self-dual CA in state-state representation, this reads
    \begin{equation}
        x_i'=\CA_i(\{x_j\})
        =x_k\oplus\tilde\CA_i|_k(\bvec s)
    \end{equation}
    for arbitrary $k\in\L$. If we set $k=i$, this becomes
    \begin{equation}
        x_i'\oplus x_i=\CA_i(\{x_j\})\oplus x_i
        =\tilde\CA_i|_i(\bvec s)\,.
    \end{equation}
    If we define the state change as $\bvec \Delta\equiv \bx'\oplus \bx$ and
    $\partial\Gamma_i(\bvec s)\equiv\tilde\Gamma_i|_i(\bvec s)$, we arrive at
    \begin{equation}
        \Delta_i=\partial\CA_i(\bvec s)
        = \Gamma_i\left(\left\{x_k=
            \textstyle{\bigoplus_{\sigma\in\ol{ki}}}\,s_\sigma
        \right\}\right)\,.
    \end{equation}
    On the other hand, it is
    \begin{subequations}
        \begin{align}
            s_{i+\h}'=& x_i'\oplus x_{i+1}'
            =(x_i\oplus \Delta_i)\oplus (x_{i+1}\oplus \Delta_{i+1})\\
            =& (\Delta_i\oplus \Delta_{i+1})\oplus (x_i\oplus x_{i+1})
            =(\partial\bvec\Delta)_{i+\h}\oplus s_{i+\h}
        \end{align}
    \end{subequations}
    so that
    \begin{equation}
        \bvec s'=\partial\bvec \Delta\oplus \bvec s
        \quad\text{with}\quad
        \bvec \Delta = \partial\CA_\L(\bvec s)
    \end{equation}
    indeed describes the evolution of the syndrome $\bvec s=\partial\bx$
    given by the action of $\CA_\L$ on the states $\bx$.
    Thus we provided a procedure to derive a syndrome-delta representation from a given state-state representation.
    Conversely, it is enough to realize that the knowledge of $\bvec \Delta=\partial\CA_\L(\partial\bx)$
    allows for the computation of $\bx'$ via $\bx'=\bx\oplus \bvec \Delta$, i.e.,
    \begin{equation}
        \bx'=\CA_\L(\bx)=\bx\oplus \partial\CA_\L(\partial\bx)\,.
    \end{equation}
    This concludes the proof.
\end{proof}

It should be clear that it is exactly the syndrome-delta representation of a self-dual CA that makes it suited for decoding the
Majorana chain and comply with the ``quantum-handicap'': It operates on the measured syndromes $\bvec s$ 
via the correction operations $\bvec \Delta$ that can be applied directly to the quantum chain.

%% ================================================================================================
\FloatBarrier
\subsection{Density classification in 1D}
\label{subsec:density}
%% ================================================================================================

%% ------------------------------------------------------------------------------------------------
\begin{figure}[tb]
    \centering
    \includegraphics[width=1.0\linewidth]{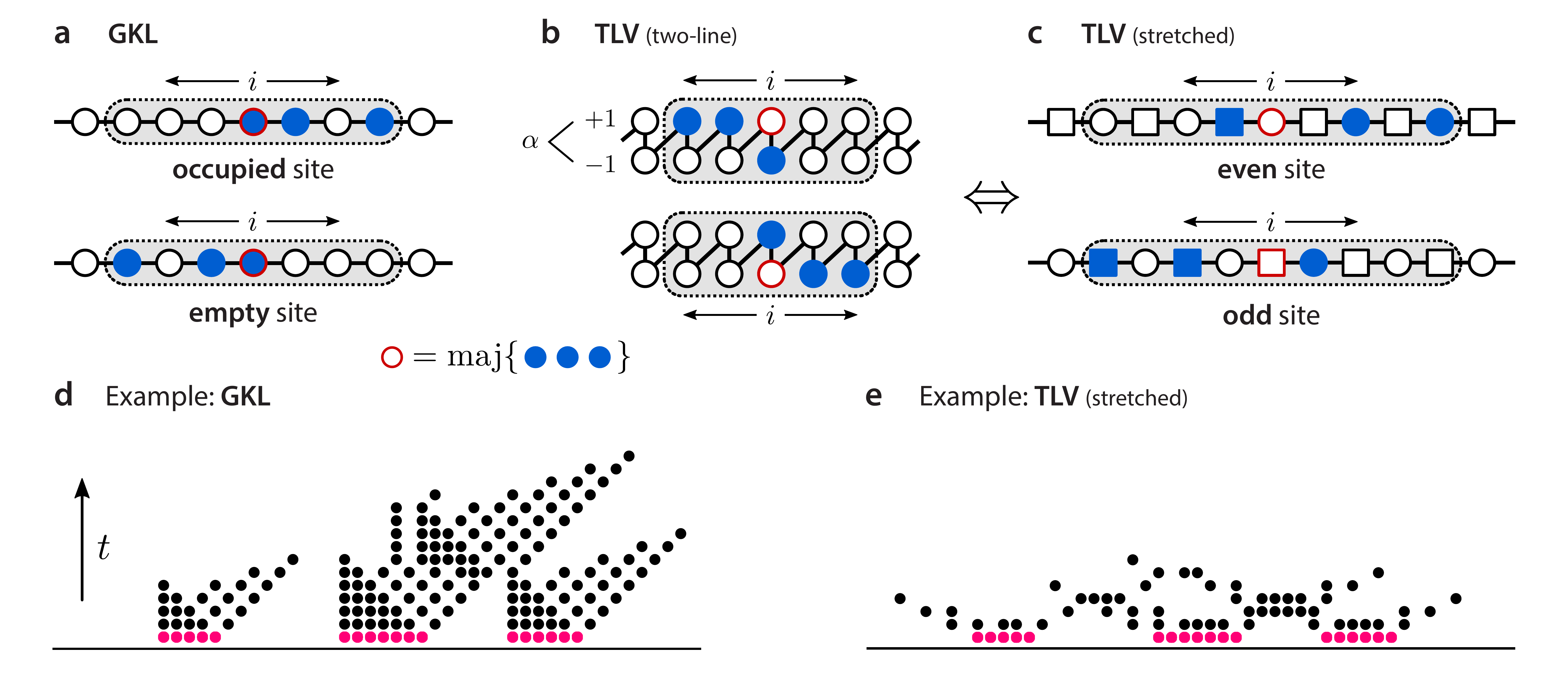}   
    \caption{
        \CaptionMark{Two classical density classifiers.}
        \tzCaptionLabel{a} The famous Gács-Kurdyumov-Levin (GKL) CA features $R=3$-rules with a state-dependent choice for local
        majority voting.
        \tzCaptionLabel{b} The two-line voting (TLV) CA can be considered a symmetrized version of GKL that gets rid of the
        state-dependent majority vote by adding an additional bit per site.
        \tzCaptionLabel{c} In our case, the stretched version of TLV is more intuitive: Instead of adding new states, one adds new
        sites with different transition rules for even and odd positions.
        \tzCaptionLabel{d} Example evolution of a three-cluster state (red) under GKL transitions (time runs upwards).
        \tzCaptionLabel{e} Decoding of the same initial state under TLV transitions. Note the different messaging behavior of GKL
        (asymmetric) and TLV (symmetric).
    }
    \label{fig:FIG_CA_setup}
\end{figure}
%% ------------------------------------------------------------------------------------------------

We seek to apply a simple, one-dimensional binary CA as local decoder for the MCQC.
If we upper bound the allowed runtime of a radius-$R$ CA $\CA_\L$ with $T$ time steps, the map $\CA_\L^T$ is $D=RT$-local per
construction (since information spreads over $R$ sites per time step under CA evolution).
Then the depth scaling discussed previously becomes a matter of required runtime for a specific CA.

As we know that (global) majority voting is a perfect decoder for the Majorana chain, it is natural to ask whether one can
implement the function $\maj{x_1,\dots,x_L}$ by a hypothetical CA $\mathrm{MAJ}_\L$ such that 
\begin{equation}
    \lim_{t\to\infty}\mathrm{MAJ}_\L^t(\bvec x)=\bvec 0\,(\bvec 1)\quad\text{if}\quad\maj{\bvec x}=0\,(1)\,;
\end{equation}
this is known as the \textit{density classification problem}~\cite{Packard1988,Oliveira2013}.
Unfortunately, it can be rigorously shown that perfect majority voting cannot be achieved with binary CAs in any
dimension~\cite{Land1995}.
This, however, is not a deal breaker for majority-based error correction (both classical and quantum) 
as long as the erroneously classified instances are rare with respect to  the noise channel in question.
Motivated by applications for classical error correction, there evolved a vivid field concerned with the
construction of \textit{approximate} density classifiers
(e.g.,~\cite{Gacs1978,Toom1995,Fuks2002,Schule2009,Fates2013})
and extensions capable of performing density classification exactly 
(e.g.,~\cite{Capcarrere1996,Fuks1997,Chau1998,Capcarrere2001}), 
see~\cite{Oliveira2013} for a review.

This is how we address the problem of finding a local decoder for the MCQC:
Lemma~\ref{lem:syndrome-delta} allows us to filter the literature of one-dimensional binary CAs for \textit{self-dual
density classifiers}; rewritten in syndrome-delta representation, these could be directly applied as potential Majorana chain decoders.

The first, most famous and well-studied (approximate) density classifier is dubbed
``soldiers rule'' and has been introduced by Gács, Kurdyumov, and Levin (GKL)~\cite{Gacs1978,DeSa1992}.
On $\L=\Z$, it is defined by the transition rule
\begin{equation}
    x_i'=\GKL_i(\bvec x)\equiv
    \begin{cases}
        \maj{x_i,x_{i-1},x_{i-3}}\quad & x_i=0\\
        \maj{x_i,x_{i+1},x_{i+3}}\quad & x_i=1
    \end{cases}
\end{equation}
with radius $R=3$ [see Fig.~\ref{fig:FIG_CA_setup}~(a)].
Unfortunately, it is easy to check that it violates self-duality,
\begin{equation}
    \left(\GKL_\L(\bvec x)\right)^c\neq \GKL_\L\left(\bvec x^c\right)
\end{equation}
due to the dependence of the evaluated sites in the local majority vote on the state of site $i$.
This can also be seen from the exemplary time evolution of a three-cluster configuration under GKL shown in
Fig.~\ref{fig:FIG_CA_setup}~(d): The emerging patterns are different for the left and right boundaries of clusters.
This cannot be interpreted in terms of syndromes because on this level both boundaries are indistinguishable and
hence must give rise to the same pattern.

Note that most elementary CAs (one-dimensional binary CAs with radius $R=1$) 
violate self-duality as well, and the few that do not are unsuited for (approximate) density
classification~\cite{Wolfram1986}. Most generalizations capable of exact density classification are not self-dual
either~\cite{Capcarrere1996,Fuks1997,Sipper1998} and/or reformulate the task such that a solution is no longer
applicable as Majorana chain decoder \cite{Capcarrere1996,Sipper1998}.

There are stochastic generalizations of density classifiers, some of which \textit{are} self-dual~\cite{Fuks2002,Schule2009} and
some of which are not~\cite{Fates2013}.
However, we prefer \textit{deterministic} CAs due to their simpler realization in terms of elementary logic gates.
We therefore resort to the less known ``two-line voting'' automaton (TLV) introduced by Toom~\cite{Toom1995}.
Originally, it is defined on the extended state space $(\Z_2\times \Z_2)^\L$ describing two parallel binary chains 
(``two lines'') and defined by the transition rule
\begin{equation}
    \TLV_{i,\alpha}(\bvec x)\equiv 
    \begin{cases}
        \maj{x_{i}^{-1},x_{i-1}^{+1},x_{i-2}^{+1}}\quad & \alpha=+1\\
        \maj{x_{i}^{+1},x_{i+1}^{-1},x_{i+2}^{-1}}\quad & \alpha=-1
    \end{cases}\,,
\end{equation}
depicted in Fig.~\ref{fig:FIG_CA_setup}~(b).
In $x_i^\alpha$, the index $i=1,\dots,L$ denotes the position along the chains while $\alpha=\pm 1$ selects
the subchain (up or down).

The payoff of this more complicated geometry is the sought after self-duality which is easily checked to hold,
\begin{equation}
    \left(\TLV_\L(\bvec x)\right)^c = \TLV_\L\left(\bvec x^c\right)\,,
\end{equation}
due to the new \textit{in}dependence of the evaluated sites in the local majority vote on the state of site $i$ (as
compared to GKL).

For our purpose, it is more convenient to rewrite $\TLV$ in its ``stretched'' form [Fig.~\ref{fig:FIG_CA_setup}~(c)]
with state space $\Z_2^\L$ and state-state representation
\begin{equation}
    x_i'=\TLV_{i}(\bx)\equiv
    \begin{cases}
        \maj{x_{i-1},x_{i+2},x_{i+4}}\quad & i\;\text{even}\\
        \maj{x_{i+1},x_{i-2},x_{i-4}}\quad & i\;\text{odd}
    \end{cases}
    \label{eq:TLV-state-state}
\end{equation}
for even and odd sites $i$.
Fig.~\ref{fig:FIG_CA_setup}~(e) depicts the evolution of the same three-cluster configuration as in~(d).
In contrast to GKL, left and right boundaries spawn symmetric patterns that eventually annihilate (initially, the majority of cells
was white). 

Despite the rather abstract rules~(\ref{eq:TLV-state-state}), the spatio-temporal visualization reveals the
simple functional principle of TLV [see Fig.~\ref{fig:FIG_CA_setup}~(e) and also Fig.~\ref{fig:FIG_TLV_preserving}~(c)]:
Domain walls emit ``slow signals'' of the form $\dots 010101\dots$ \textit{symmetrically} in both directions, seeking
for nearby domain walls to pair with. When two counter-propagating slow signals meet, they transmute into ``fast
signals'' that head back and delete the $01$-markers along the way. Since the velocity of the fast signal is twice
that of the other slow signal traveling into the same direction, the latter is overtaken by the returning fast signal
eventually. As a result, TLV fills the gaps between the pairs of domain walls which are closest; if errors are sparse, 
this implies convergence to the homogeneous state $\bmaj{\bvec x(0)}$.

We can now apply Lemma~\ref{lem:syndrome-delta}
to construct the syndrome-delta representation. Namely,
\begin{align}
    \label{eq:TLV-syndrome-delta}
    \Delta_i&=\partial\TLV_{i}(\bvec s)
    =\maj{s_{i\mp \h},\,
    s_{i\pm \h}\oplus s_{i\pm \frac{3}{2}},\,
    s_{i\pm \h}\oplus s_{i\pm \frac{3}{2}}
    \oplus s_{i\pm \frac{5}{2}}\oplus s_{i\pm \frac{7}{2}}}
\end{align}
and $s_{i+\h}'=s_{i+\h}\oplus (\Delta_i\oplus \Delta_{i+1})$;
here the upper (lower) signs correspond to $i$ even (odd).

This describes the action of TLV completely in the quantum mechanically more suitable language of syndromes $\bvec s$
(obtained by measurements) and deltas $\Delta$ (applicable by local operations).
Due to the equivalence of both representations, we can (and will) still use the ``common'' state-state representation
Eq.~(\ref{eq:TLV-state-state}) to discuss the properties of TLV. 
The implementation, however, requires Eq.~(\ref{eq:TLV-syndrome-delta}) as a concession to the ``quantum handicap''.

Both GKL and TLV can be shown to share a property which is known to be responsible for their superior performance as
approximate density classifier~\cite{Park1997}.
Clearly, the homogeneous (syndrome free) configurations $\bvec 0$ and $\bvec 1$ are fixed points (a necessary
condition for density classifiers). What distinguishes them from most other CAs with these fixed points is the structure of the
attractors $A_{\bvec 0}$ and $A_{\bvec 1}$, i.e., the perturbed states which are drawn towards the homogeneous fixed
points: Every \textit{finite} perturbation of diameter $l$ on an infinite homogeneous background of zeros or ones
is eroded after a time $\tdec\leq m\,l$ where $m\in\mathbb{R}^+$ is a CA-specific constant.
Therefore $\GKL$ and $\TLV$ are called \textit{linear eroders}---a crucial property for their
use as approximate density classifiers (see below) and responsible for their stability close the homogeneous fixed points.
The time evolutions in Fig.~\ref{fig:FIG_CA_setup}~(d) and~(e) are examples for the erosion of 
finite perturbations of ones (red/black) on a background of zeros (white).

%% ================================================================================================
\FloatBarrier
\subsection{Boundary conditions}
\label{subsec:boundary}
%% ================================================================================================

%% ------------------------------------------------------------------------------------------------
\begin{figure}[tb]
    \centering
    \includegraphics[width=0.7\linewidth]{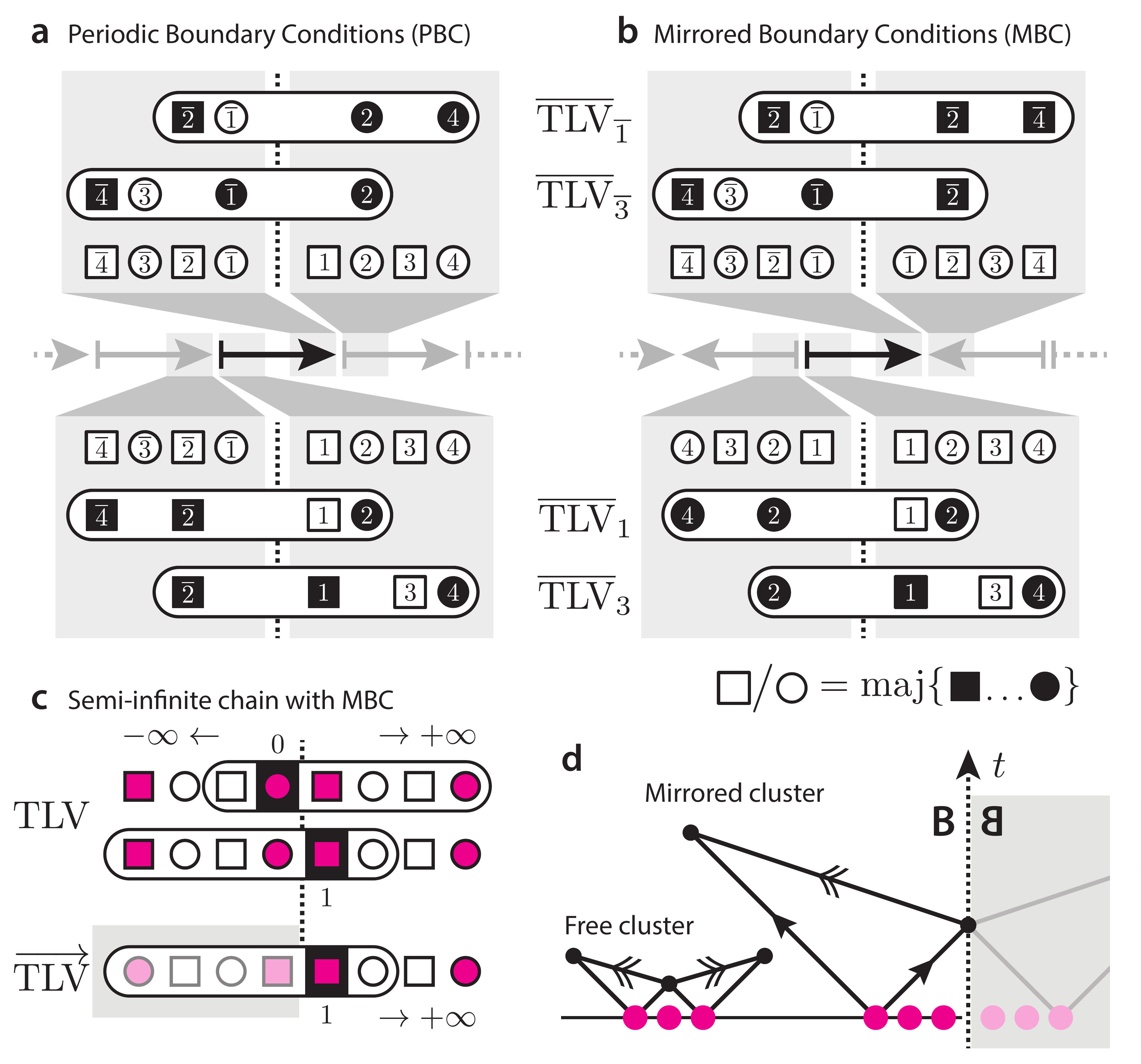}   
    \caption{%
        \CaptionMark{Boundary conditions.}
        \tzCaptionLabel{a} Periodic boundary conditions (PBC).
        A finite system of length $L$ (black arrow) is copied and chained without inverting the direction.
        Four local rules traverse the boundary and are modified accordingly.
        \tzCaptionLabel{b} Mirrored boundary conditions (MBC).
        Here every other copy is reversed, giving rise to a finite system bounded by two mirrors with modified rules
        $\TLVdm_\L$.
        \tzCaptionLabel{c} An unmodified $\TLV$ operating on an infinite chain and restricted to bond-inversion
        symmetric states (magenta/white sites for $x_i=1/0$) is equivalent to a modified $\TLVsm$
        operating on a semi-infinite chain with MBC on the edge.
        This is a consequence of the bond-inversion symmetry of the unmodified $\TLV$ rules.
        \tzCaptionLabel{d} A finite cluster of errors can be effectively doubled in size if it is close to the mirror.
        Consequently, correction times close to the mirror can be larger than for free cluster of the same size.
    }
    \label{fig:FIG_TLV_MBC}
\end{figure}
%% ------------------------------------------------------------------------------------------------

Often CAs are studied in the limit of infinite system size with state space $\Z_2^\mathbb{Z}$.
However, we employ the CA for physical means which requires finite systems.
Finiteness, in turn, entails a choice of boundary conditions and complicates the analysis due to finite size effects.
Periodic boundary conditions (PBC) are common as they mimic the infinite case as closely as possible
[see Fig.~\ref{fig:FIG_TLV_MBC}~(a)]: Translational invariant CAs on $\mathbb{Z}$ remain translational invariant 
on a finite system with PBC and no modification of the rules is necessary. 

Again, due to physical constraints we cannot use periodic boundaries: It is crucial that the quantum subsystem is an
\textit{open} chain with spatially separated endpoints (edge modes). 
Thus we are forced to modify TLV close to the endpoints to comply with open boundary conditions. 
Modifying the rather complicated rules of TLV can go amiss easily. It is therefore helpful to specify our goal:
Since the edges of the quantum chain carry edge modes, they can host endpoints of error strings which do not show up in the
syndrome, see Eq.~(\ref{eq:logz}); physically, this corresponds to a quasiparticle in the delocalized edge mode.
It is therefore crucial that solitary quasiparticles close to the edges are transfered into the corresponding edge mode.
Thus we have to modify TLV so that syndromes are attracted by the edges (which do not emit signals themselves), 
while preserving self-duality and the eroder property (in a modified sense, see below).

A neat trick to come up with the correct modifications is to put the finite chain in a ``cavity'', between two imaginary mirrors placed left
(right) of the first (last) site, see Fig.~\ref{fig:FIG_TLV_MBC}~(b). Rules which traverse the edges use the mirrored cells to
compute their local update. Formally this is achieved by redefining these rules to use the corresponding ``real'' cells of the
system (note that this is a \textit{local} modification for a stretched open chain, in contrast to periodic boundaries); 
we call this \textit{mirrored boundary conditions} (MBC).
If we denote the finite size version of TLV on $\L=\{1,\dots,L\}$  with mirrored boundary conditions as $\TLVdm$,
the modified rules on the left edge read
\begin{subequations}
    \begin{align}
        \TLVdm_1(\bx) &=\maj{x_2,x_2,x_4} \\
        \TLVdm_3(\bx) &=\maj{x_4,x_1,x_2}
    \end{align}
    \label{eq:left_boundary}
\end{subequations}
and on the right edge ($L$ even)
\begin{subequations}
    \begin{align}
        \TLVdm_{\ol 1}(\bx) &=\maj{x_{\ol 2},x_{\ol 2},x_{\ol 4}} \\
        \TLVdm_{\ol 3}(\bx) &=\maj{x_{\ol 4},x_{\ol 1},x_{\ol 2}}\,,
    \end{align}
    \label{eq:right_boundary}
\end{subequations}
where we used the shorthand notation $\ol k\equiv L+1-k$ to index cells from the end of the chain (e.g., $\ol 1=L$),
see Fig.~\ref{fig:FIG_TLV_MBC}~(b). For all other sites it is $\TLVdm_i=\TLV_i$.

Clearly, $\bvec 0$ and $\bvec 1$ are still fixed points of $\TLVdm$ (there are no static signal sources introduced) and
self-duality is also preserved. By construction, a slow signal emitted from a solitary syndrome close to the edge will meet its
mirror image at the edge which sends it back as a fast signal to capture the other slow signal heading into the bulk
and thereby initiates pairing towards the edge, see Fig.~\ref{fig:FIG_TLV_MBC}~(d).
Note that this mechanism affects the time needed to erode a contiguous cluster of errors: Adjacent (or close) to the mirror,
the number of errors is ``doubled'' artificially; correction time and affected territory double accordingly.
In Fig.~\ref{fig:FIG_TLV_MBC}~(d) we illustrate this effect by comparing the same cluster far away and close to the edge.

An important observation allows for the analysis of systems with mirrored boundary conditions in terms of the unmodified
rules (on the infinite chain $\L=\Z$): Let $\bx\in\Z_2^\Z$ be an arbitrary state and define the bond-centered inversion
$\fI_{s}$ as
\begin{equation}
    \left(\fI_{s}\bx\right)_i\equiv x_{2s-i+1}\,.
\end{equation}
$\fI_s\bx$ describes the configuration that is obtained by inversion of $\bx$ at the bond $(s,s+1)$.
We define the set of invariant configurations,
\begin{equation}
    \K_s\equiv \left\{\,\bx\in \Z_2^\Z\,|\,\fI_s\bx=\bx\,\right\}
    \label{eq:Ks}
\end{equation}
and argue that $\fI_s$ is a symmetry of $\TLV$ for any $s\in\Z$, namely
\begin{equation}
    \TLV_\L(\fI_s\bx)=\fI_s\TLV_\L(\bx)\,.
\end{equation}
Indeed, this follows from the fact that $\TLV_i$ is related to $\TLV_{i+1}$ by a bond-centered inversion at $(i,i+1)$;
this is true for both even and odd sites $i$, see Fig.~\ref{fig:FIG_CA_setup}~(c).
It follows that $\K_s\subset\Z_2^\Z$ is invariant under the evolution of $\TLV$ which hence can be restricted to $\K_s$.
Note that this is a special feature of $\TLV$, in contrast to GKL, for instance.
Without loss of generality, we set $s=0$ in the following, i.e., $\left(\fI_{0}\bx\right)_i= x_{1-i}$.
Then we can describe a semi-infinite chain on $\L=\mathbb{N}$ with a single mirrored boundary (we write $\TLVsm$) 
by the unmodified rules of $\TLV$ operating on the infinite chain $\L=\Z$ if we restrict the state space to $\K_0$. Indeed,
\begin{subequations}
    \begin{align}
        \TLVsm_1(\bx) &=\maj{x_2,x_2,x_4}=\maj{x_2,x_{-1},x_{-3}}=\TLV_1(\bx) \\
        \TLVsm_3(\bx) &=\maj{x_4,x_1,x_2}=\maj{x_4,x_{1},x_{-1}}=\TLV_3(\bx)
    \end{align}
\end{subequations}
where we used $x_4=x_{-3}$ and $x_2=x_{-1}$ for $\bx\in\K_0$; see Fig.~\ref{fig:FIG_TLV_MBC}~(c) for an example.
This allows us to trade the rule modifications of $\TLVsm$ for a restriction on the state space of $\TLV$ which, in turn, simplifies the
analysis of the \textit{finite} version $\TLVdm$ (see below). 
As an immediate consequence, it follows that the semi-infinite $\TLVsm$ is an eroder because $\TLV$ is one
(and mirrored finite perturbations remain finite).

While the previously introduced definition of eroders carries over to semi-infinite chains, 
it cannot be applied to \textit{finite} systems because there is no longer a qualitative difference between perturbation and
background (both of which are necessarily finite).
A possible finite-size modification reads as follows:
A cellular automaton on a finite chain $\L=\{1,\dots,L\}$ 
is a  finite-size linear eroder if there exist real constants $0 < a < 1$ and $m\in\mathbb{R}^+$
such that for any size $L<\infty$ and any finite perturbation of $\bvec 0$ ($\bvec 1$) with diameter $l\leq a\,L$,
the unperturbed state $\bvec 0$ ($\bvec 1$) is recovered at $\tdec\leq m\,l$.
It is easy to check that $\TLVdm$ is an eroder in this sense if one uses that $\TLVsm$ is an eroder in the original
sense (see Appendix~\ref{app:finite}).
Alternatively, note that the majority function is \textit{monotonic}, i.e.,
changing an input bit from $0$ to $1$ cannot change the output bit from $1$ to $0$.
Therefore the evolution of a generic, non-contiguous, finite cluster of errors under $\TLV$/$\TLVsm$/$\TLVdm$ can be constructed
from the evolution of a contiguous cluster of the same size by erasing errors in the spacetime diagram, 
and it is sufficient to consider contiguous intervals of errors to check for the eroder property 
(which is straightforward to verify).

%% ================================================================================================
\FloatBarrier
\section{Decoding with a self-dual density classifier}
\label{sec:decoding}
%% ================================================================================================

In the previous section, we introduced $\TLVdm$ and argued that it is a self-dual, finite-size linear eroder. 
These properties make $\TLVdm$ a promising candidate for decoding errors $E(\bvec x)$
that are small compared to $L$ and/or \textit{sparse} enough. 
In the following, we assess the decoding performance of $\TLVdm$ by numerical and analytical means.
We show that error patterns for which the erosion (viz.\ decoding) fails are rare for reasonably small error rates,
and that the time $\tdec$ required for decoding scales sublinearly 
with the chain length---one of the main benefits of locality.

%% ================================================================================================
\FloatBarrier
\subsection{Numerical results}
\label{subsec:numerics}
%% ================================================================================================

%% ------------------------------------------------------------------------------------------------
\begin{figure}[tbp]
    \centering
    \includegraphics[width=1.0\linewidth]{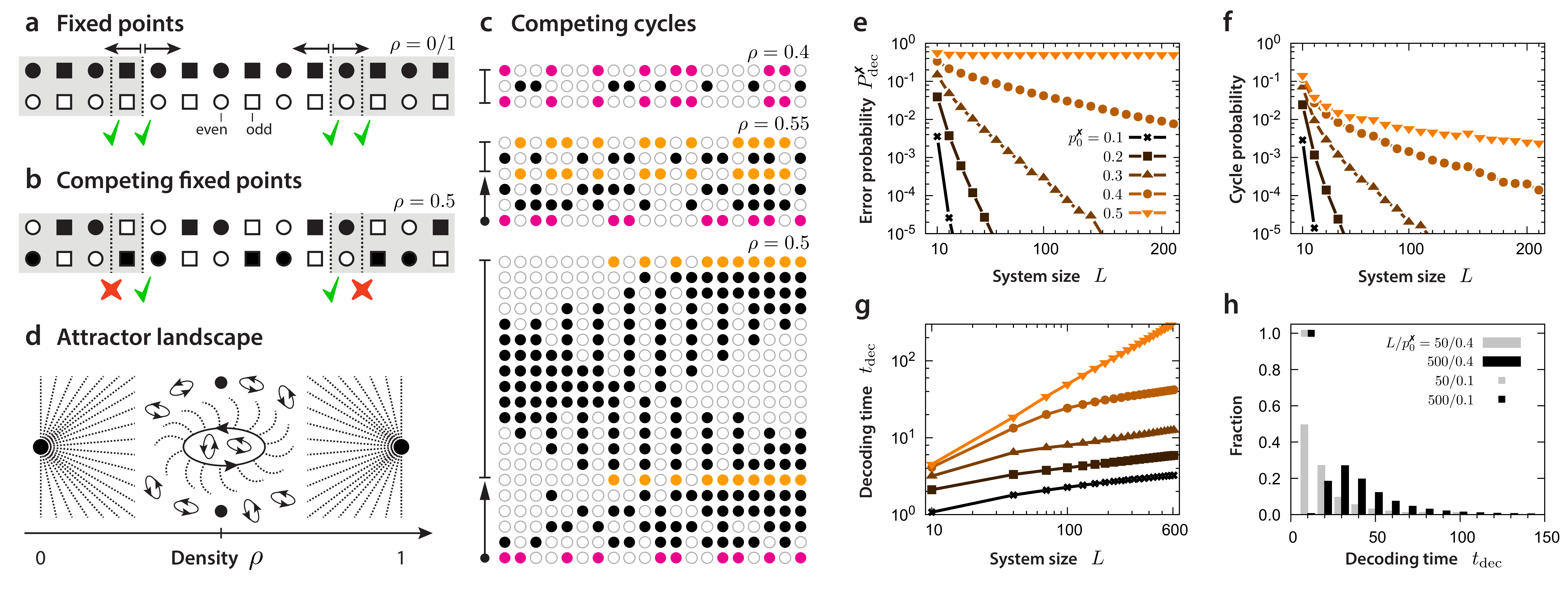}   
    \caption{
        \CaptionMark{Properties of $\TLVdm$.}
        \tzCaptionLabel{a} 
        The homogeneous fixed points of $\TLVdm$ with minimum/maximum filling $\rho=0/1$ correspond to the syndrome-free states of
        the quantum code. The eroder property makes them stable in that (small enough) perturbations are erased and do not
        proliferate.
        \tzCaptionLabel{b} 
        Two of four additional fixed points of $\TLV$ are inherited by $\TLVdm$ if the leftmost/rightmost site are labeled by
        even/odd indices (check marks). Those are characterized by a critical filling $\rho=0.5$.
        \tzCaptionLabel{c} 
        Three examples of random initial configurations (magenta) which relaxed into cycles of various lengths.
        The first recurring configurations are highlighted with the same color to separate the cycle from the relaxation path.
        We find only cycles close to criticality with $\rho\approx 0.5$.
        \tzCaptionLabel{d} 
        Sketch of the state space with dashed attractor paths, based on the results in \tzCaptionLabel{a-c}.
        The two homogeneous fixed points are attractors of all states away from criticality; this motivates the application of
        $\TLVdm$ as decoder.
        \tzCaptionLabel{e} 
        Numerical results for the probability of erroneous decoding $\perrdec=1-\psuccdec$ vs.\ chain length
        $L$ for different microscopic error probabilities $\perrmic$. $\perrdec$ vanishes exponentially in the
        thermodynamic limit for any $\perrmic < 0.5$.
        \tzCaptionLabel{f} 
        Away from criticality (presumably for $\perrmic < 0.5$) the probability to relax into a cycle vanishes
        exponentially. For realistic error rates ($\perrmic\leq 0.1$), cycles cannot be observed in reasonable sample sizes.
        \tzCaptionLabel{g} 
        Averaged time needed to reach a homogeneous fixed point ($t_\text{dec}$) as a function of the system size $L$ for
        various error rates $\perrmic$. The growth is remarkably slow but unbounded for $\perrmic > 0$. 
        Whether $t_\text{dec}$ grows algebraically or only logarithmically for small $\perrmic>0$ cannot be inferred
        from these results.
        \tzCaptionLabel{h} 
        Distributions of the decoding times $t_\text{dec}$ for two error rates $\perrmic=0.1/0.4$ and system sizes $L=50/500$.
        For $\perrmic=0.1$, there is barely any difference between $L=50$ and $L=500$ visible (squares).
        We sampled $10^6$ random initial states for each data point in \tzCaptionLabel{e-h}.
    }
    \label{fig:FIG_TLV_decoding}
\end{figure}
%% ------------------------------------------------------------------------------------------------

We start with a qualitative discussion of possible evolutions under $\TLVdm$:
Apart from the two stable homogeneous fixed points $\bvec 0$ and $\bvec 1$, there are four additional (unstable) fixed points for
$\TLV$ \cite{Toom1995} two of which cannot be realized by $\TLVdm$ on finite chains with MBC 
(see Appendix~\ref{app:fixedpoints}).
This leads to the four possible fixed points depicted in Fig.~\ref{fig:FIG_TLV_decoding}~(a) and~(b).
Note that their realization on finite chains with MBC (vertical lines) is only possible if their realization on $\L=\Z$ is
consistent with the boundary conditions given by the mirrors. Whereas the homogeneous fixed points survive, independent of the bond
where the mirror is placed (a), the two additional fixed points can only be realized if the leftmost (rightmost) site is denoted by an
even (odd) index (b).  Henceforth, we will take the first index to be odd
(i.e., $i=1$) and the last to be even (i.e., $L$), which eliminates these two additional fix points. 
Note that this choice coincides with the default labeling of sites
$\L=\{1,\dots,L\}$. We stress that the elimination of the two additional fixed points (which are not syndrome free) is not crucial
for the performance of the decoder: First, both are characterized by a density of set bits $\rho=0.5$ which is
far from the relevant error densities realistic for small $\perrmic$.
Second, simulations suggest that their attractors are trivial, i.e., contain
only the fixed points themselves (Appendix~\ref{app:fixedpoints}).

However, there are competing cycles of various lengths and with non-trivial attractors, three examples are
shown in Fig.~\ref{fig:FIG_TLV_decoding}~(c). The longer a cycle and the larger its attractor, the more probable it is with respect to
Bernoulli noise. This explains why the largest cycle in (c) is by far the most common in simulations. Note that it also illustrates the MBC
nicely by ``bouncing'' a cluster of errors hence and forth between the two mirrors. Note, that again these cycles 
are characterized by densities $\rho$ close to criticality, which renders them rare for $\perrmic\ll 1$.
In Fig.~\ref{fig:FIG_TLV_decoding}~(d) we sketch the attractor landscape of the total state space ordered by the density $\rho$: 
Close to the extreme densities $\rho=0/1$ every configuration is drawn towards the corresponding
homogeneous fixed point due to the eroder property. This is where $\TLVdm$ implements effectively majority voting by local rules and
therefore becomes a viable replacement for the global decoder $\Delta$. Only close to criticality $\rho\approx 0.5$, $\TLVdm$ fails
to decode a (still small) fraction of error patterns by evolving them into cycles instead of cleaning them according to a global
majority vote. This underpins our previous statement that the impossibility of realizing global majority voting perfectly is
not too much of an issue if it fails in regions of the state space which are exponentially suppressed by Bernoulli noise for
physically realistic error rates.

In the remainder of this subsection, we will quantify these statements by sampling error patterns from a Bernoulli distribution with
fixed rate $\perrmic$ and evolving them with $\TLVdm$ until we can decide whether it reached a fixed point or entered a cycle.
We interpret the empty state $\bvec 0$ as error free and define the probability of successful decoding as
\begin{equation}
    \psuccdec\equiv\Pr\left(\left\{\bvec x\in\Z_2^\L\,|\,\lim_{t\to\infty}\TLVdm_\L^t(\bvec x)=\bvec 0\right\}\right)\,.
    \label{eq:psuccdec}
\end{equation}
We stress that in addition to $\lim_{t\to\infty}\TLVdm_\L^t(\bvec x)=\bvec 1$, and in contrast to the global decoder $\Delta$, $\TLVdm$ 
can also fail by evolving into cycles which are not syndrome free. \textit{Both} cases make up for the failed decodings of $\TLVdm$
and are measured by the probability $\perrdec=1-\psuccdec$. As a consequence, $\perrdec>\h$ is possible for $\TLVdm$ even for
$\perrmic \leq\h$. 

In Fig.~\ref{fig:FIG_TLV_decoding}~(e) we plot estimates for $\perrdec$ as function of the system size for various error rates $0<\perrmic\leq0.5$.
Except for the critical value $\perrmic=0.5$, the probability of unsuccessful decoding vanishes exponentially with the chain
length $L$, confirming our hope that $\TLVdm$ is a viable replacement for $\Delta$. Note that this result already tells us that
the measure of all attractors of cycles vanishes quickly for $L\to\infty$.
Indeed, in Fig.~\ref{fig:FIG_TLV_decoding}~(f) we plot the probability of an error pattern to belong to the attractor of a
non-trivial cycle, again as function of $L$ for the same error rates as in (e): For $\perrmic <0.5$ and $L\gtrsim 50$,
there seems to be an exponential decay which is in accordance with the results in (e).
Whether at criticality $\perrmic=0.5$ the probability vanishes or saturates at a small but non-zero value cannot be inferred from (f).
Interestingly, the results so far not only support the hope that $\TLVdm$ can replace $\Delta$ for $\perrmic\leq\perr_c$ with a
non-trivial critical rate $0<\perr_c<\h$, but even suggest that $\perr_c=\h$ is still optimal (at least $0.4\lesssim \perr_c$).

Now that we know that the decoding probability of $\TLVdm$ approaches $1$ exponentially with $L\to\infty$ comes a crucial
question we shunned so far: How many steps $\tdec$ does $\TLVdm$ need, on average, to evolve an error pattern $\bvec x$ into the
error-free state $\bvec 0$? If the decoding time scaled linear, $\tdec\sim L$, there would be barely any benefit from replacing
the global decoder $\Delta$ by the local one. 
Fortunately, Fig.~\ref{fig:FIG_TLV_decoding}~(g) reveals that the average decoding time grows linearly \textit{only} at
criticality whereas the growth for $\perrmic <0.5$ is much slower. E.g., for $L=600$ and $\perrmic =0.1$ on average only
$\tdec\approx 3$ steps are necessary to eliminate all errors correctly.
We stress that due to the almost vanishing slope in (g) it is not possible to decide whether $\tdec\propto L^\kappa$ for
$0<\kappa\ll 1$ or $\tdec\sim\log L$, even though the very fact that $\tdec$ grows so slowly hints at a logarithmic scaling.
To describe the required decoding times in detail, we show the complete probability distribution in
Fig.~\ref{fig:FIG_TLV_decoding}~(h) for two sizes $L=50/500$ and error rates $\perrmic=0.1/0.4$ in bins of $\Delta\tdec=10$.
Most strikingly, for the lower error rate $\perrmic=0.1$ there is no difference between $L=50$ and a chain of the tenfold length;
again a manifestation of the extremely slow growth of $\tdec$ for reasonable error rates.

%% ================================================================================================
\FloatBarrier
\subsection{Rigorous analytical results}
\label{subsec:analytic}
%% ================================================================================================

%% ------------------------------------------------------------------------------------------------
\begin{figure}[tb]
    \centering
    \includegraphics[width=0.7\linewidth]{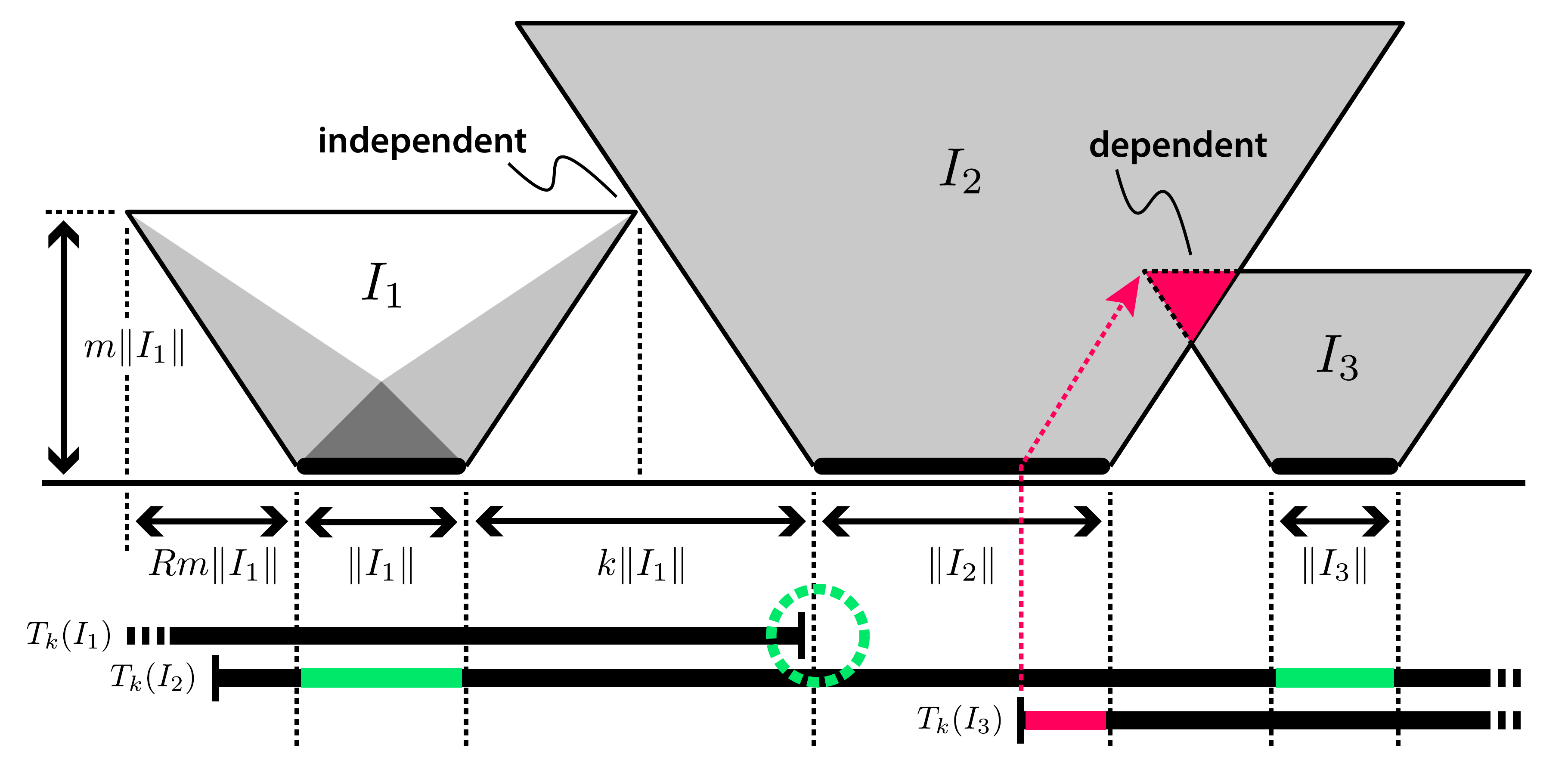}   
    \caption{
        \CaptionMark{Independent clusters.}
        A pattern of three clusters $I_i$ ($i=1,2,3$) with $\|I_3\| < \|I_1\| < \|I_2\|$. The erosion process of $\TLV$
        is sketched for $I_1$ whereas for $I_2$ and $I_3$ only the causal patches that cover the erosion are shaded
        gray (time runs upwards). For a linear eroder, the erasure of an independent cluster requires at most $m\|I_i\|$
        time steps, the height of the shown trapezoids. During this time, signals can travel at most $Rm\|I_i\|$ sites
        from the boundary of a cluster $I_i$. Clusters that are \textit{independent} with sparseness parameter $k=2Rm$ do not
        interact; this is true for $I_1$ and $I_2$ since $I_2\cap T_k(I_1)=\emptyset$ (dashed circle). 
        Note that $I_{1,3}\cap T_k(I_2)\neq\emptyset$ (green intervals) has no effect on their
        (in-)dependence because $\|I_2\|\geq \|I_1\|,\|I_3\|$.
        In contrast, since $I_2\cap T_k(I_3)\neq\emptyset$ (red interval), the causal trapezoids of $I_2$ and $I_3$ intersect (red
        triangle). Thus, $I_2$ and $I_3$ are \textit{dependent} and may not be erased separately.
    }
    \label{fig:FIG_TLV_PROOF}
\end{figure}
%% ------------------------------------------------------------------------------------------------

% Numerics suggested that the set of erroneously classified instances may even have vanishing measure with respect to
% the physically reasonable Bournoulli distribution~\cite{Busic2012}. Quite recently, this conjecture has been rigorously proven for
% a specific class of cellular automata~\cite{Taati2015}.

In this subsection, we prove a central statement of this work: 
The probability for $\TLVdm$ on a chain of length $L$ with MBC to be in a non-empty state
$\bx(t)\neq \bvec 0$ after $t\propto L^\kappa$ time steps vanishes exponentially with $L$ for arbitrary $\kappa > 0$
if the initial state $\bx(0)$ is a Bernoulli random configuration with single-site error probability $\perrmic < \perr_c$ for
some critical value $0<\perr_c\leq\h$.
In Section~\ref{sec:noise} we will use this result to construct a completely local decoder for the Majorana chain with length $L$ 
and depth $\propto L^\kappa$ that stabilizes a logical qubit for times that grow exponentially with $L$.
Furthermore, it confirms the numerical results of Subsection~\ref{subsec:numerics}: 
Neither competing fixed points nor cycles threaten the performance of $\TLVdm$ as long as $\perrmic$ is small enough.
To prove the claimed result, we follow the lines of \cite{Taati2015} with modifications to account for the finiteness of $\TLVdm$ and
the mirrored boundaries. In the following, we present three crucial steps but provide only brief sketches of their proofs;
the details are presented in Appendix~\ref{app:sparse}.

Before we can state our first result, we have to introduce the pivotal concepts of \textit{independence} 
and \textit{sparseness}~\cite{Gacs1986,Gacs2001,Gray2001,Durand2012a,Taati2015}. Let $\bvec x\subseteq\Z$ be an arbitrary subset (error pattern).
A finite subset $I\subseteq\bvec x$ is called \textit{cluster} of diameter $\|I\|=\max\{|x-y|\,|\,x,y\in I\}$. 
If we fix an integer $k>0$ (the \textit{sparseness parameter}, to be chosen later),
the \textit{territory} $T_k(I)$ is defined as the interval of integers with distance at most $k\|I\|$ from $I$.
Two clusters $I_1$ and $I_2$ are called \textit{independent} if at least one does not intersect the territory of the other, i.e.,
$I_2\cap T_k(I_1)=\emptyset$ or $I_1\cap T_k(I_2)=\emptyset$ (or both); since $I\subset T_k(I)$, this implies $I_1\cap I_2=\emptyset$.
This concept is illustrated in the lower part of Fig.~\ref{fig:FIG_TLV_PROOF}.
If, in addition, there exists a partition of $\bvec x$ into a family $\I=\{I_a\}$ of pairwise independent clusters $I_a$, 
$\bx=\bigcup_a I_a$, then $\bx$ is called \textit{sparse}.

We need some additional terminology:
First, $\I_{\leq l}$ denotes the family of clusters $I\in\I$ with diameter $\|I\|\leq l$ and
$\bx\setminus\I_{\leq l}\equiv \bx\setminus \bigcup_{I\in \I_{\leq l}} I$ is the subset of sites for given $\bx$ that
remains after cleaning all independent clusters of diameter at most $l$.
Second, a \textit{(infinite) mirrored} Bernoulli random configuration $\bx\subseteq\bZ$ is defined by the single-site probability $\Pr(x_i=1)=\perrmic$ for sites
$i>0$ and the mirror constraint $x_i=x_{1-i}$ [recall Eq.~(\ref{eq:Ks})].

We can now state our variation of the main result of Ref.~\cite{Taati2015}:

\begin{proposition}
    Consider \textbf{infinite mirrored} Bernoulli random configurations $\bx$ with single-site probability $\perrmic$.
    Let $k\in\mathbb{N}$ be a given sparseness parameter.

    Then, for each instance $\bvec x$, there exists a constructive family $\I^{\bx}$ of pairwise independent clusters
    (it is not necessarily $\bx=\bigcup_{I\in\I^\bx}I$, i.e., $\bvec x$ does not have to be sparse)
    such that the probability of a site $i\in\Z$ to be in $\bx$ and remain uncovered by independent clusters of 
    diameter $l$ or less (write $\I^{\bx}_{\leq l}$) is upper-bounded by
    \begin{equation}
        \Pr\left(i\in \bx\setminus\I^{\bx}_{\leq l}\right)\leq \alpha^{l^\beta}
        \label{eq:Pdecay}
    \end{equation}
    for $\beta =\ln(2)/\ln(4k+3)$ (and therefore $0<\beta<1$) and $\alpha=(2k)(4k+3)\sqrt{\perrmic}$.
    If we define the critical value
    \begin{equation}
        \pterrcrit \equiv [(2k)(4k+3)]^{-2}\,,
        \label{eq:critp}
    \end{equation}
    for $\perrmic <\pterrcrit$ it is $\alpha<1$ 
    and Eq.~(\ref{eq:Pdecay}) becomes an exponentially decaying upper bound.
    \label{thm:expdecay}
\end{proposition}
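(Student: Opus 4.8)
The plan is to adapt the multiscale \emph{sparsity} construction of Ref.~\cite{Taati2015} to the mirrored, finite geometry. First I would fix the geometric sequence of length scales $l_n=(4k+3)^n$, chosen precisely so that $l_n^\beta=2^n$ for $\beta=\ln 2/\ln(4k+3)$, and build the family $\I^\bx$ bottom-up by greedy merging: starting from the singletons $\{i\}\subseteq\bx$, at scale $l_n$ I would fuse any two already-formed clusters that fail to be independent (whose territories $T_k(\cdot)$ overlap the other cluster) and declare a cluster \emph{settled} at level $n$ once it is independent of all remaining clusters. By construction the surviving family is pairwise independent, and the set $\El$ of sites left uncovered after removing all clusters of diameter $\le l$ consists exactly of the errors that the merging has swept into clusters of diameter $>l$.

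The core is a recursive probabilistic estimate across scales, for which I would arrange the greedy construction to be \emph{local}: whether a site $i$ belongs to $\bx\setminus\I^\bx_{\le l_n}$ is decided by the Bernoulli variables in a window of radius $\mathcal{O}(l_n)$ about $i$. Granting this, a site in $\bx\setminus\I^\bx_{\le l_{n+1}}$ forces two lower-level obstructions: two sites $i_1,i_2$, each in $\bx\setminus\I^\bx_{\le l_n}$, lying within a window of size $\mathcal{O}(l_{n+1})$ about $i$ but separated by more than $\mathcal{O}(l_n)$, so that their defining windows are \emph{disjoint}. Locality then makes the two events independent, the joint probability factorizes, and a union bound over the $\mathcal{O}(l_{n+1})$ admissible positions gives a recursion of the schematic form $\Pr(i\in\bx\setminus\I^\bx_{\le l_{n+1}})\le N_n\,[\Pr(i\in\bx\setminus\I^\bx_{\le l_n})]^2$ with a purely geometric prefactor $N_n$. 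Iterating this squaring over the $n=\log_{4k+3}l$ scales raises the base probability to the $l^\beta=2^n$-th power; bounding the accumulated prefactors loosely (relaxing $\perrmic$ to $\sqrt{\perrmic}$ to obtain a clean closed form) yields the stated $\alpha^{l^\beta}$ with $\alpha=(2k)(4k+3)\sqrt{\perrmic}$, and $\alpha<1\Leftrightarrow\perrmic<\pterrcrit$ is immediate from Eq.~(\ref{eq:critp}).

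The genuinely new ingredient relative to \cite{Taati2015} is the mirror constraint $x_i=x_{1-i}$ [cf. Eq.~(\ref{eq:Ks})]. Since the reflection $\fI_0$ perfectly correlates $x_i$ with $x_{1-i}$, the Bernoulli variables are independent only on the half-lattice $i\ge 1$, and a cluster sitting close to the bond $(0,1)$ always appears together with its mirror image. I would handle this by treating each near-mirror cluster \emph{together with} its reflection as a single object in the construction, so that its diameter is at most doubled (matching the doubling effect illustrated in Fig.~\ref{fig:FIG_TLV_MBC}~(d)) and the factorization step is applied only to genuinely independent blocks on $i\ge 1$; the reflection then costs at most constant factors that are already absorbed into $N_n$, and hence into $\alpha$.

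The main obstacle I anticipate is exactly this locality/independence input: making the greedy merging \emph{finitely decidable}, so that a site's level-$n$ status depends only on a bounded window, and then verifying that a forced merge at scale $l_{n+1}$ genuinely splits into two sub-obstructions supported on disjoint windows. This requires a careful dependency-tree bookkeeping of the merges with enough scale separation between consecutive $l_n$, together with the check near the mirror that a cluster and its reflection never manufacture such an obstruction out of a single Bernoulli event. Once the independence of the recursion is secured, pinning down $\alpha$ and $\beta$ exactly is routine constant-chasing; it is the factorization of the recursion, not the arithmetic, that carries the argument.
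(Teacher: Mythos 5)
Your proposal has the right multiscale skeleton---the geometric scales $l_n\sim(4k+3)^n$, the doubling exponent $2^n=l_n^\beta$, and a squaring recursion across scales---but it leaves unresolved precisely the step that the paper's proof is built to avoid, and it mishandles the mirror. The recursion $\Pr(A_{n+1})\leq N_n[\Pr(A_n)]^2$ requires that the two sub-obstructions at level $n$ be \emph{independent events}, which in turn requires proving that ``site $i$ is uncovered at level $n$'' is measurable with respect to a bounded window and that disjoint windows decouple. You correctly flag this as the main obstacle, but the paper never establishes (or needs) such locality of the intermediate events. Instead it unrolls the recursion all the way down: the implication $i\in\bx\setminus\I^{\bx}_{\leq l_n}\Rightarrow\exists j^\ast\in\bx\setminus\I^{\bx}_{\leq l_{n-1}}$ at a prescribed distance is iterated to produce an \emph{explanation tree} whose $2^n$ leaves are distinct sites that must all carry an error in the raw configuration $\bx$. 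The only independence ever invoked is that of distinct sites of the Bernoulli field; one then takes a union bound over the $N_n\leq[(2k)(4k+3)]^{2^n}$ possible trees. Driving that implication requires a structural ``completeness'' lemma for the cluster family (the construction, which selects independent clusters level by level in the diameter rather than greedily merging singletons, never recreates independent clusters of smaller diameter); your greedy-merge construction has no analogue of this, so the forced two-obstruction dichotomy is not established.

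Second, your treatment of the mirror constraint is quantitatively wrong. You claim the reflection costs ``at most constant factors absorbed into $N_n$'' and that the $\sqrt{\perrmic}$ is a cosmetic relaxation for a clean closed form. In fact the $\sqrt{\perrmic}$ \emph{is} the mirror: a fully mirror-symmetric explanation tree has only $2^n/2$ independent leaves, so the realization probability of a single tree degrades from $\perrmic^{2^n}$ to $\perrmic^{2^n/2}=\sqrt{\perrmic}^{\,2^n}$---an effect exponential in $2^n$, not a constant prefactor. This is exactly what produces $\alpha=(2k)(4k+3)\sqrt{\perrmic}$ and the square in $\pterrcrit=[(2k)(4k+3)]^{-2}$; absorbing the mirror into $N_n$ would yield the wrong threshold.
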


The proof can be divided roughly into three steps: 
First, the family $\I^{\bvec x}$ is constructed recursively in the cluster diameter $l$ for a given instance $\bvec x$.
In a second step it is shown that this prescription always yields a family of pairwise independent clusters.
In the crucial third step, an upper bound on the probability for a site $i\in\bvec x$ to be \textit{not} covered by a cluster in 
$\I^{\bvec x}$ of diameter $l$ or less is derived.
To do this, one constructs so called \textit{explanation trees}, hypothetical error patterns that explain why a given site $i$ could survive
the construction of $\I^{\bvec x}$ without being covered by clusters up to diameter $l$. The probability for its survival is then estimated
by finding an upper bound on the number of possible explanation trees and calculating their probability with respect to a mirrored
Bernoulli distribution. One finds that the number of explanation trees \textit{grows} exponentially (with cluster diameter $l$) while the
probability for a single explanation tree to be realized by a Bernoulli process \textit{vanishes} exponentially.
The latter factor dominates for $\perrmic < \pterrcrit$ so that the probability for the existence of at least one explanation tree vanishes
exponentially for increasing $l$; this leads to Eq.~(\ref{eq:Pdecay}).

The rationale behind $\TLV$ (or any other linear eroder) is the following:
For an error to survive the erosion process, there must be other errors nearby that protect it; and these, in turn, 
require further errors in \textit{their} neighborhood to survive and so forth. 
Such a structure of errors that protect each other from being eroded constitutes an
explanation tree which prevents a global error pattern from decaying into independent clusters.
Explanation trees are \textit{dense} in a very specific sense---and this denseness renders their existence improbable for low error rates.
In contrast, \textit{sparse} error patterns are those without explanation trees that span the whole system. 
They are initial states of linear eroders such that the causal regions 
of correlated sites in the spacetime diagram do not percolate through the system,
but instead separate into many local patches which are eroded independently. The initial seeds of these patches are
the independent clusters from above: Linear eroders clean a single cluster $I$ after at most $m\|I\|$ time steps, and can
therefore influence only sites with maximum distance $Rm\|I\|$ from $I$.
Then, a collection of pairwise independent clusters is eroded \textit{independently} if the sparseness parameter is set
to $k=2Rm$, see Fig.~\ref{fig:FIG_TLV_PROOF}.
It is this causal locality on sparse sets which results in the sublinear scaling of decoding times for $\TLV$ 
[recall Fig.~\ref{fig:FIG_TLV_decoding}~(g)].

Eventually we want to use Prop.~\ref{thm:expdecay} to derive an upper bound 
for the probability of errors to survive the first $t$ steps of
$\TLVdm$ on a \textit{finite} chain with mirrored boundaries.
To this end, we first need an intermediate step:

\begin{lemma}
    Consider a \textbf{semi-infinite} chain on $\L=\mathbb{N}$ governed by $\TLVsm_\L$ with 
    initial configurations $\bvec x(0)\subseteq\L$ drawn from a Bernoulli distribution with parameter $\perrmic$.
    Let $\mathcal{J}\subset\L$ be an arbitrary finite interval on the chain.

    Then the probability of $\bvec x(t)=\TLVsm^t_\L(\bvec x(0))$ to be non-empty on $\mathcal{J}$ is upper-bounded by
    \begin{align}
        &\Pr\left(\bvec x(t)\cap\mathcal{J}\neq\emptyset\right)\leq
        (2tR+|\mathcal{J}|)\,\exp\left(-\gamma\,\floor{t/m}^\beta\right)
    \end{align}
    with $\gamma=-\log(\alpha)$ ($\gamma > 0$ for $\perrmic<\pterrcrit$), 
    and $0<\beta<1$ as in Prop.~\ref{thm:expdecay}.
    Here the sparseness parameter is given by $k=2Rm=8$ where $m=1$ and $R=4$ 
    are the eroder parameter and the radius of $\TLV$, respectively.
    $\floor{x}$ denotes the greatest integer less than or equal to $x$.
    \label{lemma:halfinfinite}
\end{lemma}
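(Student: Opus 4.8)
The plan is to reduce the semi-infinite, mirror-bounded problem to the infinite-chain setting of Proposition~\ref{thm:expdecay} and then to exploit the causal (light-cone) structure of $\TLV$ together with the independent-cluster decomposition. First I would invoke the equivalence established in Subsec.~\ref{subsec:boundary}: $\TLVsm_\L$ on $\L=\bN$ is nothing but $\TLV$ on $\Z$ restricted to the invariant subspace $\K_0$ of bond-inversion-symmetric states. A Bernoulli initial configuration $\bvec x(0)$ on $\bN$ extends, via the mirror constraint $x_i=x_{1-i}$, to precisely an infinite mirrored Bernoulli configuration $\bx\subseteq\Z$ in the sense defined before Proposition~\ref{thm:expdecay}. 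Since $\K_0$ is $\TLV$-invariant and the rules agree on $\bN$, the two evolutions coincide, so for any finite $\mathcal{J}\subset\bN$ the event $\{\TLVsm^t_\L(\bvec x(0))\cap\mathcal{J}\neq\emptyset\}$ has the same probability as $\{\TLV^t(\bx)\cap\mathcal{J}\neq\emptyset\}$ under the mirrored Bernoulli measure. This is what licenses the per-site estimate~(\ref{eq:Pdecay}), which holds for every $i\in\Z$ (including the non-positive, mirrored region that the causal past of $\mathcal{J}$ may reach).

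Next I would fix $l=\floor{t/m}$ and establish the central causal claim: if $x_j(t)=1$ for some site $j$, then the causal past $[j-Rt,\,j+Rt]$ must contain a site $i$ that is an error yet remains uncovered by independent clusters of diameter at most $l$, i.e.\ $i\in\bx\setminus\I^{\bx}_{\le l}$. The justification rests on the independent-cluster machinery underlying Proposition~\ref{thm:expdecay}: every cluster $I\in\I^{\bx}_{\le l}$ has $\|I\|\le l$, hence is eroded after at most $m\|I\|\le ml\le t$ steps, and during that time its influence reaches at most $Rm\|I\|$ sites on each side of $I$. With the sparseness parameter set to $k=2Rm$, pairwise independence of the clusters guarantees that their causal trapezoids (height $m\|I\|$, half-width $Rm\|I\|$) do not overlap (cf.\ Fig.~\ref{fig:FIG_TLV_PROOF}), so each cluster erodes exactly as it would on an empty background. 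Consequently, if every error in the causal past of $j$ were covered by a cluster of diameter $\le l$, all of them would be cleaned within $t$ steps without mutual interference, and the superposition of these disjoint single-cluster erosions would leave $x_j(t)=0$; the contrapositive is the claim. I expect this step---turning pairwise independence plus the finite eroder property into genuinely decoupled, complete erosion of each cluster---to be the main obstacle, and it is precisely where the sparseness constant $k=2Rm=8$ (with $R=4$, $m=1$) enters. The detailed bookkeeping is the Taati-style argument deferred to Appendix~\ref{app:sparse}.

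Finally I would assemble the bound by a union argument over the joint causal past of $\mathcal{J}$. The event $\TLV^t(\bx)\cap\mathcal{J}\neq\emptyset$ forces an uncovered surviving error somewhere in the interval $[\min\mathcal{J}-Rt,\,\max\mathcal{J}+Rt]$, which contains exactly $|\mathcal{J}|+2Rt$ integers. Summing the per-site estimate~(\ref{eq:Pdecay}) over this interval gives
\[
    \Pr\left(\bvec x(t)\cap\mathcal{J}\neq\emptyset\right)\le (|\mathcal{J}|+2Rt)\,\alpha^{l^\beta}=(2tR+|\mathcal{J}|)\,\exp\!\left(-\gamma\,\floor{t/m}^\beta\right),
\]
where I used $l=\floor{t/m}$ and $\gamma=-\log\alpha$, which is positive exactly when $\perrmic<\pterrcrit$ so that $\alpha<1$. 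Substituting $R=4$, $m=1$ and $k=2Rm=8$ yields the stated constants and the exponent $0<\beta<1$ inherited from Proposition~\ref{thm:expdecay}, completing the argument.
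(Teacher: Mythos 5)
Your proposal is correct and follows essentially the same route as the paper's own proof: reduce $\TLVsm$ to $\TLV$ on mirrored Bernoulli configurations, apply the per-site bound of Proposition~\ref{thm:expdecay} with $l=\floor{t/m}$, use the eroder property together with the sparseness parameter $k=2Rm$ to show that full coverage of the causal past $U_{tR}(\mathcal{J})$ by independent clusters of diameter at most $l$ forces $\bvec x(t)\cap\mathcal{J}=\emptyset$, and conclude by a union bound over the $2tR+|\mathcal{J}|$ sites of that causal past. The only cosmetic difference is that you state the key causal step in contrapositive form, whereas the paper states it as a direct implication.
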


The proof exploits that $\TLVsm$ is equivalent to $\TLV$ for symmetric states in $\K_0$.
Then Prop.~\ref{thm:expdecay} provides us with a family $\I^{\bvec x}$ that fails to cover errors in $\bvec x$ with a probability
that vanishes exponentially with increasing cluster diameter $l$. If an error $x_i=1$ belongs to an independent cluster of
diameter $l$, the linear eroder property of $\TLV$ ensures that it is eroded after at most $ml$ time steps.
It is important to realize that this does \textit{not} imply $x_i=0$ for all later times as signals from distant, larger clusters may
enter the territory of smaller ones (e.g., $I_1$ and $I_2$ in Fig.~\ref{fig:FIG_TLV_PROOF}). 
With $R$ the radius of local rules, the neighborhood $U_{tR}(\mathcal{J})$ includes all sites that potentially influence sites in $\mathcal{J}$
after $t$ time steps, i.e., sites with distance at most $tR$ from $\mathcal{J}$.
Therefore one has to demand that \textit{all} sites in the
growing neighborhood $U_{tR}(\mathcal{J})$ belong to clusters of maximum 
diameter $l$ in $\I^{\bvec x}$ to guarantee that $\mathcal{J}$ is clean after $t=ml$ time steps.
Subadditivity of probability measures then leads to the upper bound of Lemma~\ref{lemma:halfinfinite} where
$(2tR+|\mathcal{J}|)$ is the size of $U_{tR}(\mathcal{J})$.

With Lemma~\ref{lemma:halfinfinite}, we are now prepared to tackle the case of \textit{finite} chains:

\begin{lemma}
    Consider a \textbf{finite} chain of length $L$ on $\L=\{1,\dots,L\}$ governed by $\TLVdm$ with mirrored boundaries
    and initial configurations $\bvec x(0)\subseteq\L$ drawn from a Bernoulli distribution with parameter $\perrmic$.

    Then the probability of $\bvec x(t)=\TLVdm^t_\L(\bvec x(0))$ to be non-empty is upper-bounded by
    \begin{align}
        &\Pr\left(\bvec x(t)\neq \emptyset\right)\leq
        (4R\,\bt+L)\,\exp\left(-\gamma\,\floor{\bt/m}^\beta\right)
    \end{align}
    with $\bt\equiv\min\{t,\tL\}$ and $\tL=\lfloor L/2R\rfloor$.
    The parameters are the same as in Prop.~\ref{thm:expdecay} and Lemma.~\ref{lemma:halfinfinite}.
    \label{thm:finite}
\end{lemma}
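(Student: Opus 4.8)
The plan is to reduce the finite, two-mirror setting to the semi-infinite, one-mirror estimate of Lemma~\ref{lemma:halfinfinite} by a causal (light-cone) decoupling of the two boundaries, and to dispose of the regime $t>\tL$ by a short monotonicity argument. The cutoff $\tL=\floor{L/2R}$ is precisely the time at which the influence fronts emanating from the left and right mirrors---each propagating at most $R$ sites per step---meet at the center of the chain; for $t\le\tL$ the two halves still evolve essentially independently, each feeling only its adjacent mirror, so each can be treated by the semi-infinite result.

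First I would settle the case $t>\tL$ by monotonicity. Since $\bvec 0$ is a fixed point of $\TLVdm$, the event $\{\bvec x(t)=\bvec 0\}$ is absorbing: once it holds it holds for all later times. Hence $\{\bvec x(t+1)\neq\emptyset\}\subseteq\{\bvec x(t)\neq\emptyset\}$, so $\Pr(\bvec x(t)\neq\emptyset)$ is non-increasing in $t$. It therefore suffices to prove the bound for $t=\bt=\min\{t,\tL\}\le\tL$; for $t>\tL$ the claim follows from $\Pr(\bvec x(t)\neq\emptyset)\le\Pr(\bvec x(\tL)\neq\emptyset)$, which is the asserted bound evaluated at $\bt=\tL$.

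Next, for $t\le\tL$ I would split the (even-length) chain into its left half $\mathcal{J}_L=\{1,\dots,L/2\}$ and right half $\mathcal{J}_R=\{L/2+1,\dots,L\}$ and use subadditivity,
\begin{equation}
\Pr(\bvec x(t)\neq\emptyset)\le\Pr(\bvec x(t)\cap\mathcal{J}_L\neq\emptyset)+\Pr(\bvec x(t)\cap\mathcal{J}_R\neq\emptyset).
\end{equation}
The key step is that for $t\le\tL$ the evolution of $\TLVdm$ restricted to $\mathcal{J}_L$ coincides with that of the left-mirror semi-infinite automaton $\TLVsm$: by the bond-inversion symmetry $\fI_0$ [Eq.~(\ref{eq:Ks}), Fig.~\ref{fig:FIG_TLV_MBC}] the left boundary rules of $\TLVdm$ and $\TLVsm$ are identical, whereas the right mirror alters rules only within $O(R)$ of site $L$, a region the domain of dependence of $\mathcal{J}_L$ cannot reach before the fronts meet at $t=\tL$. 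Coupling the finite Bernoulli configuration to a semi-infinite one that agrees on $\{1,\dots,L\}$, I would apply Lemma~\ref{lemma:halfinfinite} with $|\mathcal{J}_L|=L/2$, and symmetrically to $\mathcal{J}_R$ through the right mirror $\fI_L$, to obtain
\begin{equation}
\Pr(\bvec x(t)\neq\emptyset)\le 2\left(2tR+\tfrac{L}{2}\right)\exp\!\left(-\gamma\floor{t/m}^\beta\right)=(4R\,t+L)\exp\!\left(-\gamma\floor{t/m}^\beta\right),
\end{equation}
which is exactly the claimed bound with $\bt=t$. The parameters $m$, $R$, $\beta$, $\gamma$ and $k=2Rm=8$ are inherited unchanged from Lemma~\ref{lemma:halfinfinite}.

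The main obstacle is the causal-decoupling step: one must show rigorously that the domain of dependence of a half-chain under $\TLVdm$ never overlaps the modified rules of the opposite mirror for $t\le\tL$. This needs a careful domain-of-dependence analysis of $\TLV$ that accounts for the finite reach of the boundary modifications into the bulk (they touch sites $L$ and $L-2$, not merely the outermost cell) and that pins down the precise half-sizes and the exact value of $\tL$. I expect the difficulty to be purely the $O(R)$ bookkeeping near the center---rather than anything conceptual---and I would relegate it to the appendix.
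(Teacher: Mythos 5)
Your proposal is correct and follows essentially the same route as the paper's proof: extend the finite chain to a semi-infinite one agreeing on $\{1,\dots,L\}$, use the finite propagation speed $R$ to identify the evolution of each half with the corresponding one-mirror semi-infinite automaton for $t\leq\tL$, apply Lemma~\ref{lemma:halfinfinite} to each half with $|\mathcal{J}|=L/2$ plus subadditivity, and handle $t>\tL$ via the absorbing-fixed-point monotonicity of $\Pr(\bvec x(t)\neq\emptyset)$ on a finite chain. The $O(R)$ bookkeeping near the center that you flag is treated no more carefully in the paper than in your sketch.
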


Whereas the \textit{infinite} $\TLV$ and the \textit{semi-infinite} $\TLVsm$ are qualitatively similar due to the discussed equivalence on $\K_0$,
there are fundamental differences to the \textit{finite} $\TLVdm$. This can be understood intuitively as follows: $\TLVsm$ equals $\TLV$
with pairwise correlations between mirrored sites. These pairwise correlations lead to the square in the expression for $\pterrcrit$
(recall Prop.~\ref{thm:expdecay}). In contrast, $\TLVdm$ introduces an \textit{infinite} number of perfectly
correlated partners for each of the $L$ sites due to the cavity geometry (imagine standing in front of a single mirror
vs.\ standing between two opposing mirrors). To avoid these complications, we use a trick: For times 
$t\leq \tL=\floor{L/2R}$, there is no site with \textit{both} boundaries (mirrors) in its past light cone (the site(s) closest to the center
of the chain can get ``aware'' of the cavity geometry earliest at $t_L^\ast+1$). Therefore locally the \textit{finite} system $\TLVdm$ behaves
exactly as the semi-infinite system $\TLVsm$ for $t\leq t_L^\ast$ and the results of Lemma~\ref{lemma:halfinfinite} apply.
For $t > t_L^\ast$ we can exploit the finiteness of $\L$: Recall that in the context of Lemma~\ref{lemma:halfinfinite} we stressed
that an empty interval does not necessarily remain empty on a (semi-)infinite chain because signals from outside the interval may
interfere at later times. Now $\L$ is finite and the argument no longer holds: if $\bvec x(t)=\emptyset$ at some time $t$, it
follows $\bvec x(t')=\emptyset$ for all later times $t'>t$. Thus the probability of $\bvec x(t)\neq\emptyset$ is monotonically decreasing in $t$. 
This leads to the replacement $t\to \bt=\min\{t,\tL\}$ in Lemma~\ref{thm:finite}.

Note that the lower-bounded decay of the probability in Lemma~\ref{thm:finite} is to be expected for finite systems:
Due to the finite state space, there is an upper bound for $t$ (depending on $L$) such that the system either 
(1) relaxed to the clean state, (2) to a non-clean fixed point, or (3) entered a non-trivial cycle. 
In the first case, it is clean forever, whereas in the latter two cases, it can never become clean. 
Therefore the probability to be not clean cannot decrease arbitrarily and must be bounded from below for fixed $L$ and $t\to\infty$.
However, if we are interested in the limit $L\to\infty$, 
we can ask how long one has to wait for $\TLVdm$ to clean the system almost surely. 

This leads to our main result:

\begin{corollary}
    Consider a \textbf{finite} chain of length $L$ on $\L=\{1,\dots,L\}$ governed by $\TLVdm$ with mirrored boundaries
    and initial configurations $\bvec x(0)\subseteq\L$ drawn from a Bernoulli distribution with parameter
    $\perrmic$. 

    For $\kappa\in\mathbb{R}$ with $0<\kappa<1$, the probability of $\bvec x(t)=\TLVdm^t_\L(\bvec x(0))$ to be non-empty after
    \begin{equation}
        \tmax(L)\equiv\lfloor L^\kappa\rfloor
        \label{eq:runtime}
    \end{equation}
    time steps is upper-bounded by
    \begin{align}
        &\Pr\left(\bvec x({\tmax})\neq \emptyset\right)\leq
        (4R+1)\,L\,\exp\left(-\gamma\,{\floor{L^\kappa/m}^\beta}\right)
        \label{eq:finalupperbound}
    \end{align}
    for $L\geq L_{R}$ with $0<L_{R}<\infty$ a $R$-dependent constant.
    For $\perrmic < \pterrcrit$ it follows that
    \begin{align}
        \Pr\left(\bvec x({\tmax(L)})\neq \emptyset\right)
        &\to 0\quad\text{for}\quad L\to\infty
    \end{align}
    exponentially fast.
    The parameters are the same as in Prop.~\ref{thm:expdecay} and Lemma.~\ref{lemma:halfinfinite}.
    \label{cor:finite}
\end{corollary}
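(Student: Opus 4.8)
The plan is to obtain the Corollary as a direct specialization of Lemma~\ref{thm:finite}: I would substitute the runtime $\tmax(L)=\floor{L^\kappa}$, reduce $\bt$ to $\tmax$, and then compare the resulting polynomial prefactor against the stretched-exponential factor. First, I would verify that for large $L$ the truncation in $\bt=\min\{t,\tL\}$ is inactive at $t=\tmax$, so that $\bt=\floor{L^\kappa}$. Since $0<\kappa<1$, the runtime $\tmax(L)=\floor{L^\kappa}$ grows strictly slower than the light-cone horizon $\tL=\floor{L/2R}$, which is linear in $L$. Concretely, as soon as $L/(2R)-L^\kappa\geq 1$---which holds for all $L$ beyond some $R$-dependent threshold $L_R$, because $L/(2R)-L^\kappa\to\infty$ for $\kappa<1$---monotonicity of the floor gives $\floor{L/2R}\geq\floor{L^\kappa+1}=\floor{L^\kappa}+1$, hence $\floor{L^\kappa}\leq\floor{L/2R}$ and therefore $\tmax\leq\tL$ and $\bt=\tmax$. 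This step fixes $L_R$ and is the only place where the hypothesis $\kappa<1$ is used.

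With $\bt=\floor{L^\kappa}$, Lemma~\ref{thm:finite} yields $\Pr(\bvec x(\tmax)\neq\emptyset)\leq(4R\floor{L^\kappa}+L)\exp(-\gamma\floor{\floor{L^\kappa}/m}^\beta)$. Bounding the prefactor by the crude estimate $\floor{L^\kappa}\leq L^\kappa\leq L$ (valid for $\kappa<1$ and $L\geq 1$) gives $4R\floor{L^\kappa}+L\leq(4R+1)\,L$, which reproduces Eq.~(\ref{eq:finalupperbound}) verbatim.

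For the limit I would show that the exponent diverges fast enough to beat the polynomial prefactor. For $\perrmic<\pterrcrit$ one has $\gamma>0$ and $0<\beta<1$ by Proposition~\ref{thm:expdecay} and Lemma~\ref{lemma:halfinfinite}. Using $\floor{L^\kappa/m}\geq L^\kappa/m-1$, one obtains $\gamma\floor{L^\kappa/m}^\beta\geq\tfrac{\gamma}{2}m^{-\beta}L^{\kappa\beta}$ for all sufficiently large $L$, so the bound is dominated by $\exp(\log[(4R+1)L]-\tfrac{\gamma}{2}m^{-\beta}L^{\kappa\beta})$. Since $\kappa\beta>0$, the term $L^{\kappa\beta}$ outgrows $\log L$, the exponent tends to $-\infty$, and the upper bound---hence $\Pr(\bvec x(\tmax)\neq\emptyset)$---vanishes as $L\to\infty$ (stretched-exponentially, since $\kappa\beta<1$; this is the sense of ``exponentially fast'' in the statement).

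I expect no genuine obstacle: the Corollary is a bookkeeping specialization of Lemma~\ref{thm:finite}. The only care needed is with the floor functions---both in establishing $\tmax\leq\tL$ and in the lower bound on $\floor{L^\kappa/m}^\beta$---and the elementary asymptotic comparison of a polynomial against a stretched exponential. All the conceptual work (the stretched-exponential decay estimate with exponent $\sim L^{\kappa\beta}$) has already been carried out in Proposition~\ref{thm:expdecay} and Lemmas~\ref{lemma:halfinfinite} and~\ref{thm:finite}.
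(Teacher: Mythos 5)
Your proposal is correct and follows essentially the same route as the paper: specialize Lemma~\ref{thm:finite} at $t=\tmax(L)$, use $\kappa<1$ to ensure $\tmax\leq\tL$ for $L\geq L_R$, bound the prefactor by $(4R+1)L$, and beat the polynomial with the stretched exponential via $\floor{L^\kappa/m}\geq L^\kappa/m-1$. The only (harmless) elision is the identity $\floor{\floor{L^\kappa}/m}=\floor{L^\kappa/m}$ for integer $m$, which the paper states explicitly.
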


To prove this, we use the result of Lemma~\ref{thm:finite} with $\tmax(L)<\tL$ for $L\geq L_R$ large enough,
thus $\{\tmax(L)\}=\min\{\tmax(L),t_L^\ast\}=\tmax(L)$. With $\floor{L^\kappa}\leq L$ and
$\floor{\floor{L^\kappa}/m}=\floor{L^\kappa/m}$ (for $m\in\mathbb{N}$), Eq.~(\ref{eq:finalupperbound}) follows immediately.

The important message of Corollary~\ref{cor:finite} is that the probability
\begin{equation}
    \psuccdecbar \equiv \Pr\left(\left\{\bvec x\in\Z_2^\L\,|\,\TLVdm_\L^{t_{\rm max}(L)}(\bvec x)=\bvec 0\right\}\right)
    \label{eq:psuccdecbar}
\end{equation}
of successfully decoding a Bernoulli random configuration \textit{with time constraint} 
$\tmax(L)$ [cf. Eq.~(\ref{eq:psuccdec})] approaches $1$ rapidly for longer
chains even if the allocated decoding time $\tmax(L)$ increases sublinearly as long as $\perrmic < \pterrcrit$.
We stress that there is no statement about $\perrmic\geq \pterrcrit$; Corollary~\ref{cor:finite} only asserts that there is a
finite range for $\perrmic$ where decoding with $\TLVdm$ is possible and that the required decoding time $\tdec$ scales favorably
with $L$ on average.

%% ------------------------------------------------------------------------------------------------
\begin{figure}[t]
    \centering
    \includegraphics[width=0.7\linewidth]{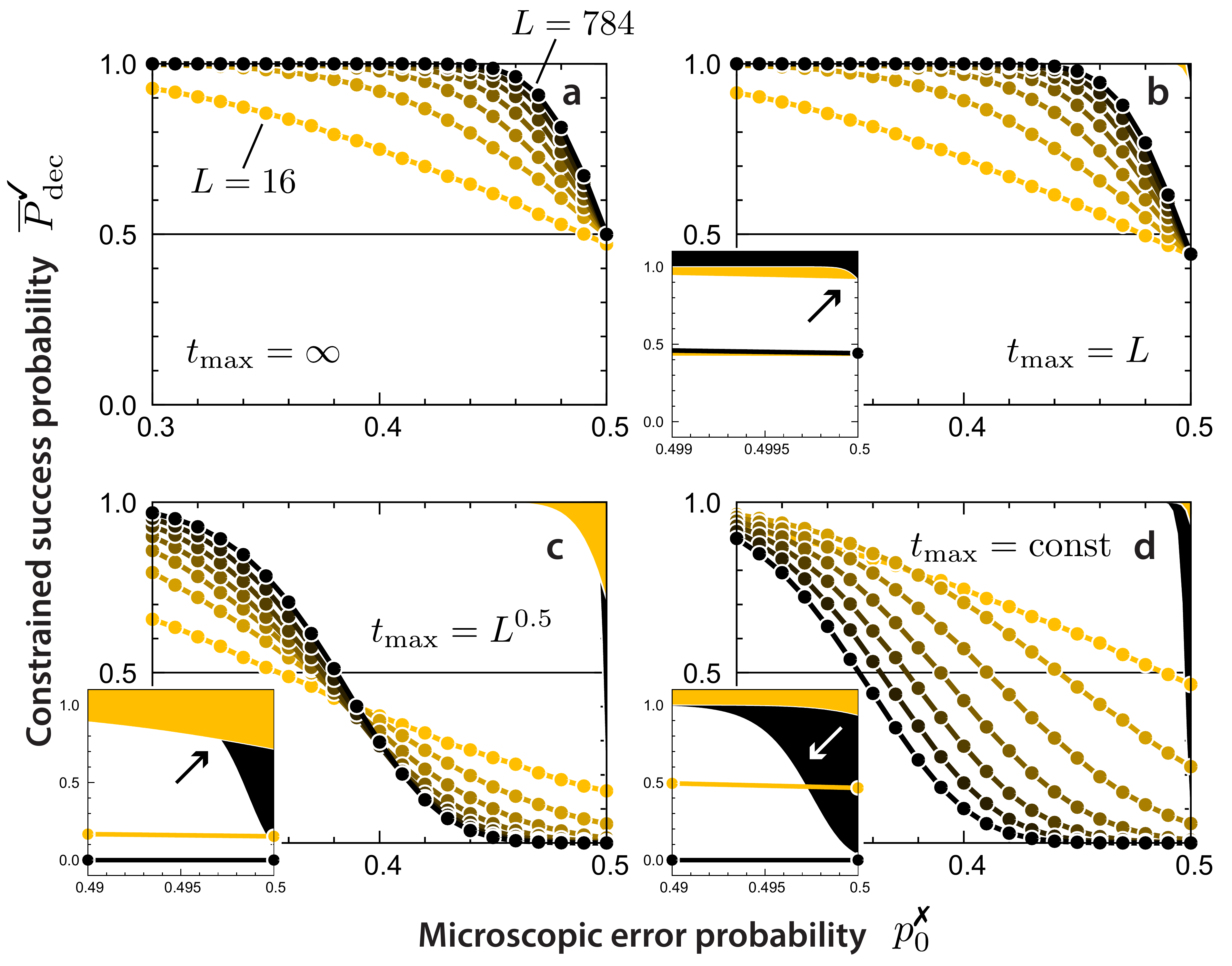}   
    \caption{
        \CaptionMark{Constrained $\TLVdm$-decoding.}
        The four plots show the probability $\psuccdecbar$ of successfully decoding 
        an initial Bernoulli random configuration with microscopic error
        probability $\perrmic$ using $\TLVdm$ on chains of length $L=16,\dots,784$.
        Evolutions that take longer than $t_\text{max}(L)$ steps are regarded as failed decodings.
        Upper bounds from the light cone constraint are shaded orange (black) for $L=16$ ($L=784$) and compared with the
        respective numerical results in the insets close to $\perrmic=0.5$.
        We sampled over $10^6$ realizations for each data point.
        \tzCaptionLabel{a} Unrestricted: $t_\text{max}=\infty$. The minimum $\psuccdecbar=0.5$ is reached only for
        $\perrmic=0.5$ for $L\to\infty$.
        \tzCaptionLabel{b} Linear: $t_\text{max}=L$. The minimum of $\psuccdecbar$ dips slightly below $0.5$ for the critical system
        $\perrmic=0.5$. Presumably there is still perfect performance for $\perrmic<0.5$ and $L\to\infty$.
        \tzCaptionLabel{c} Sublinear: $t_\text{max}=L^{0.5}$. The results suggest $\lim_{L\to\infty}\psuccdecbar=1$ for
        $0\leq\perrmic < \perrcrit$. Whether $\perrcrit < \h$ cannot be inferred from numerics
        (note that the light cone constraint allows for $\perrcrit=\h$).
        \tzCaptionLabel{d} Constant: $t_\text{max}=\const(=20)$. As dictated by the light cone constraint, there is no decoding
        possible for $\perrmic>0$ and consequently $\lim_{L\to\infty}\psuccdecbar = 0$.
    }
    \label{fig:FIG_TLV_RESULT_P}
\end{figure}
%% ------------------------------------------------------------------------------------------------

To conclude this subsection, we present numerical results for $\psuccdecbar$ as a function of the microscopic error probability 
$0.3\leq \perrmic\leq 0.5$ and for different chain lengths $L=16,\dots,784$ in Fig.~\ref{fig:FIG_TLV_RESULT_P}.
As constraints we use (a) $\tmax=\infty$, (b) $\tmax=L$, (c) $\tmax=L^{0.5}$, and (d) $\tmax=\const=20$.
Instances that are not empty after $t_{\rm max}(L)$ time steps count as failed decodings, even if they are cleaned eventually
(for $t\to\infty$). In addition, forbidden regions due to the light cone constraint~(\ref{eq:lightcone}) 
with $D=R\,\tmax(L)$ ($R=4$) are shaded for $L=16$ (orange) and $L=784$ (black).
Note hat for $\tmax=\infty$ it is $\psuccdecbar=\psuccdec$, compare Eq.~(\ref{eq:psuccdec}) and
Eq.~(\ref{eq:psuccdecbar}).

All numerical results satisfy the rigorous bounds of the light cone constraint which manifests as a weak upper
bound for $\perrmic$ close to criticality. Note the difference between (a), (b), and (c) where the
light cone constraint does not rule out successful decoding for any non-critical $\perrmic$ and $L\to\infty$, and (d) where it
does. The rigorous results from above complete this picture by providing \textit{lower} bounds that imply
$\lim_{L\to\infty}\psuccdecbar=1$ for $\perrmic < \pterrcrit$. 
However, we do not know the true error threshold $\perr_c$ except that it is
larger than $\pterrcrit\approx 3.2\times 10^{-6}$ for $\TLVdm$ and therefore finite (see Appendix~\ref{app:parameters}).
Fig.~\ref{fig:FIG_TLV_RESULT_P}~(a) and (b) suggest that $\perr_c=\h$ for (super-)linear $\tmax$ 
which matches the performance of global majority voting (but also requires at least the same runtime scaling). 
In contrast, Fig.~\ref{fig:FIG_TLV_RESULT_P}~(c) is compatible with a non-optimal $0<\perr_c<\h$, even though we believe
that still $\perr_c=\h$ due to a (slow) tendency of the crossing point towards $\h$ for $L\to\infty$.
Finally, Fig.~\ref{fig:FIG_TLV_RESULT_P}~(d) confirms that $\perr_c=0$ for a fixed-depth decoder, 
in compliance with both Lemma~\ref{thm:finite} and the light cone constraint.

%% ================================================================================================
\FloatBarrier
\section{Error correction for continuous noise}
\label{sec:noise}
%% ================================================================================================

So far, we focused on the \textit{decoding} of initial error patterns $\bvec x(0)$, the probability of success
$\psuccdec$ (respectively $\psuccdecbar$) and the time needed to clean the system $t_{\rm dec}$. 
The ultimate goal, however, is the preservation of the logical qubit in the presence of \textit{continuous noise}:
While we assume error-free processing of classical information, decoherence of the quantum chain is a constant source of errors,
characterized by the microscopic error rate $\perrmic$ per time step $\delta t$. In this section, 
we first demonstrate numerically that $\TLVdm$ \textit{cannot} cope with such perturbations of its evolution 
in that the lifetime of the logical qubit grows only subexponentially with the chain length.
In the second part, we resolve this problem by extending the $\TLVdm$ decoder into the second dimension,
and show that its depth grows weakly (sublinearly) with the chain length.
We conclude that shallow circuits suffice for reasonably low error rates.

%% ================================================================================================
\FloatBarrier
\subsection{Continuous noise in strictly one dimension}
\label{subsec:1D}
%% ================================================================================================

%% ------------------------------------------------------------------------------------------------
\begin{figure}[bt]
    \centering
    \includegraphics[width=0.7\linewidth]{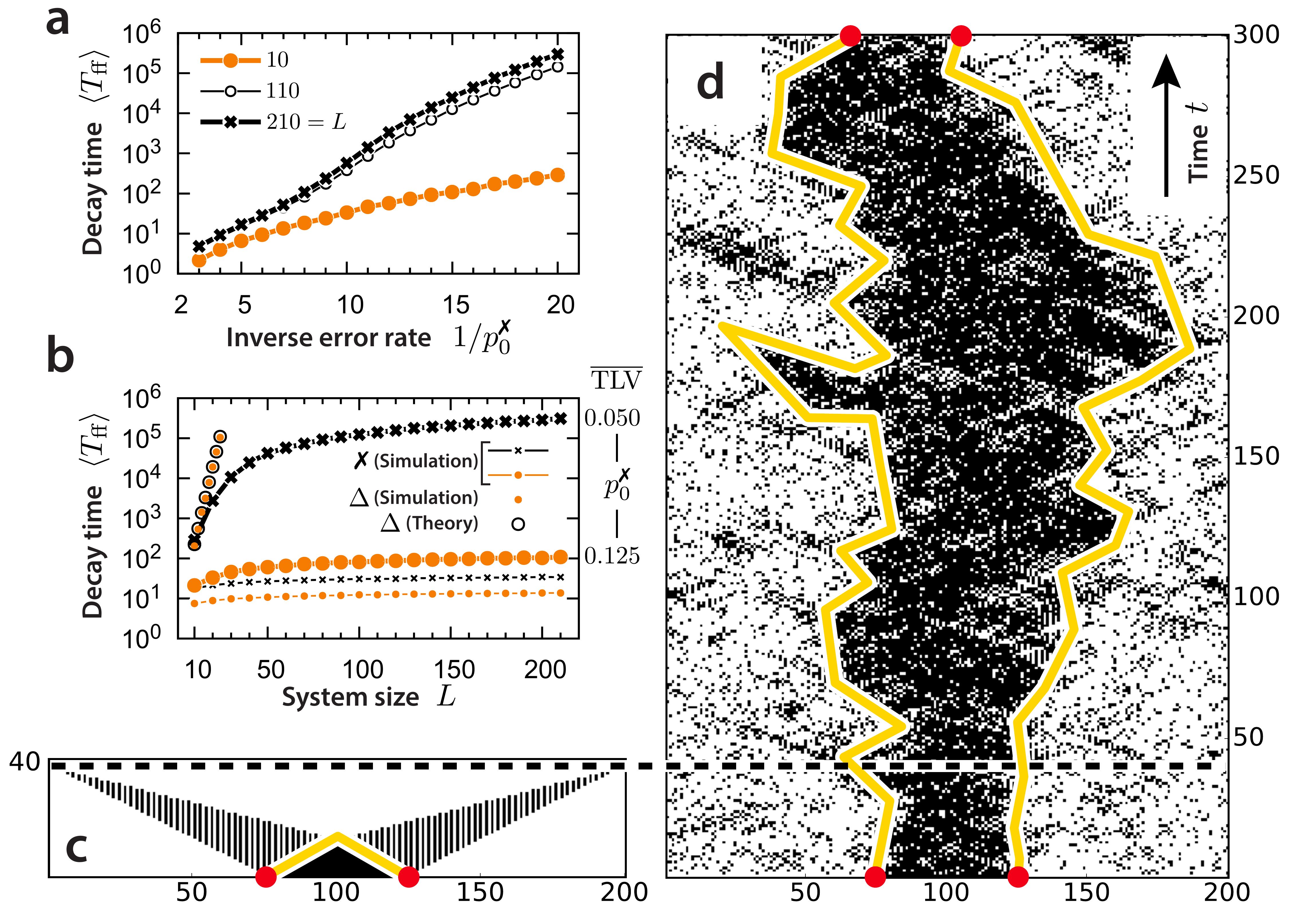}   
    \caption{
        \CaptionMark{Noise-induced deconfinement.}
        \tzCaptionLabel{a} 
        Average time to the first majority flip $\langle T_\text{ff}\rangle$
        vs.\ the inverse microscopic error rate $1/\perrmic$ for different system sizes $L=10,\dots,210$ for $\TLVdm$.
        The dependence on $1/\perrmic$ is approximately exponential.
        \tzCaptionLabel{b} 
        The same data vs.\ the system size $L$ for different error rates $\perrmic=0.050$ and $0.125$ 
        (joined bold crosses and bullets for $\TLVdm$).
        The numerics clearly suggests that there is \textit{no} exponential growth of $\langle T_\text{ff} \rangle$ for $L\to\infty$,
        i.e., the storage time of the encoded qubit grows considerably slower than for 
        \textit{global} majority voting ($\Delta$) with constant correction rate; 
        for comparison, we show simulations and theory for $\Delta$ with $\perrmic=0.125$ (disjoined bullets and circles).
        With no correction ($\xmark$), $\langle T_{ff}\rangle$ becomes almost constant 
        (joined small crosses and bullets for $\perrmic=0.050$ and $0.125$, respectively).
        For statistics, we sampled $10^3$ evolutions per data point to measure $T_\text{ff}$; 
        the standard error of the shown sample mean is $\sim 3\%$ such that the error bars are not visible.
        \tzCaptionLabel{c} 
        Spacetime diagram of a large cluster without continuous noise ($\perrmic=0$) and $\TLVdm$- evolution.
        \tzCaptionLabel{d} 
        Evolution of the same initial state with continuous noise $\perrmic=0.1$ and the same scale as in \tzCaptionLabel{c}.
        Note that the messaging between left and right boundary of the cluster is jammed by the noise within the cluster,
        leading to an effective deconfinement of the charges at the cluster boundaries.
    }
    \label{fig:FIG_TLV_preserving}
\end{figure}
%% ------------------------------------------------------------------------------------------------

As a first step, we evaluate the performance of $\TLVdm$ as follows: Starting from an error-free chain, $\bvec x(0)=\bvec 0$, we apply
errors $\bvec x'(t+1)=\bvec e(t+1)\oplus \bvec x(t)$ and $\TLVdm$-steps $\bvec x(t+1)=\TLVdm_\L(\bvec x'(t+1))$ in turns.
Here $\bvec e(t+1)\in\Z_2^\L$ is drawn from a Bernoulli distribution with parameter $\perrmic$ and describes the accumulated errors on
the quantum chain between time $t$ and $t+1$. To quantify the ability of $\TLVdm$ to prevent errors from accumulating, we
introduce the time to the first majority flip $T_{\text{ff}}$, i.e., $\maj{\bvec x(T_\text{ff})}=1$ and $\maj{\bvec x(t)}=0$ for $t<T_\text{ff}$.
Sampling over many error histories $\{\bvec e(t)\}$ yields the average $\langle T_\text{ff}\rangle$ characterizing the time
scale over which the logical qubits survives (decay time).

Numerical results are shown in Fig.~\ref{fig:FIG_TLV_preserving}.
In (a) $\langle T_\text{ff}\rangle$ is plotted versus $1/\perrmic$ for different lengths $L$, 
revealing an substantial growth of the decay time for $\perrmic\to 0$. 
In contrast, the dependence of $\langle T_\text{ff}\rangle$ on $L$
seems to be much less pronounced. 
This is confirmed in (b) where 
$\langle T_\text{ff}\rangle$ is plotted as function of the system size $L$ for two error rates $\perrmic=0.050$ and $0.125$:
The growth with $L$ is clearly subexponential, although the absolute scale of $\langle T_\text{ff}\rangle$ strongly depends on the error
rate. To asses the gain in decay time by using $\TLVdm$, we compare it with global majority voting 
($\Delta$; complete correction after each time step) and no correction at all 
($\xmark$; accumulating errors without corrective actions). 
As shown in Fig.~\ref{fig:FIG_TLV_preserving}~(b), 
global majority voting exhibits perfectly exponential growth of $\langle T_\text{ff}\rangle$
and outperforms $\TLVdm$ clearly. 
The comparison of a system without corrective actions and $\TLVdm$ reveals that the latter does
\textit{not} improve on the scaling but only increases the absolute values of $\langle T_\text{ff}\rangle$ and their
susceptibility to variations in $\perrmic$.
In conclusion, continuous noise thwarts the benefits one expects from making the quantum chain longer.
This is in contrast to the previous sections where we considered the \textit{decoding} of static error patterns and found an
exponentially suppressed failure rate $\perrdec$ for increasing chain length. 

The susceptibility of $\TLVdm$ to continuous noise can
be most easily understood by example: Fig.~\ref{fig:FIG_TLV_preserving}~(c) depicts the spacetime diagram encoding the evolution
of an initial cluster of errors under $\TLVdm$ \textit{without} continuous noise; as this is the decoding procedure discussed above, 
$\TLVdm$ erodes the cluster reliably. Since $\TLVdm$ effectively operates on the syndrome space, 
it is instructive to think of the evolution as attraction and subsequent annihilation of $\Z_2$-charges (the syndromes, red bullets).
If continuous noise is switched on [Fig.~\ref{fig:FIG_TLV_preserving}~(d)], the attractive interaction is screened by a bath of
noise-induced charge-anticharge pairs and the cluster's endpoints are governed by an undirected, diffusive process.
From a renormalization-group perspective, there is a confinement-deconfinement transition at $\perrmic=0$ which prevents the
erosion of large clusters of errors, supporting their proliferation throughout the system.
The susceptibility of simple one-dimensional CAs to continuous noise is a well-known phenomenon, 
see e.g.~\cite{Park1997} for TLV and GKL.
Indeed, due to the lack of counterexamples, it was conjectured that \textit{all}
one-dimensional CAs subject to noise are ergodic, that is, forget about their initial state eventually;
this is known as the \textit{positive rates conjecture}~\cite{Liggett1985}.
Peter G{\'{a}}cs proved it wrong by providing an extraordinary complex counterexample that relies 
on self-simulation~\cite{Gacs1986,Gacs2001,Gray2001}. To the authors knowledge, there is no simpler counterexample known till
this day, and it is widely believed that any non-ergodic CA in 1D must, in some form or another, implement the core mechanisms of
G{\'{a}}cs' automaton. It is therefore highly unlikely that a simple CA (such as $\TLVdm$) can retain information about its
initial state for $t\to\infty$ if continuous noise is switched on. This is exactly what our numerical results
suggest: The timescale $T_\text{ff}$ after which $\TLVdm$ forgets about the initial majority does \textit{not} diverge 
exponentially in the thermodynamic limit---an indicator for ergodicity.

%% ================================================================================================
\FloatBarrier
\subsection{Evading noise with a two-dimensional extension}
\label{subsec:2D}
%% ================================================================================================

%% ------------------------------------------------------------------------------------------------
\begin{figure}[tbp]
    \centering
    \includegraphics[width=1.0\linewidth]{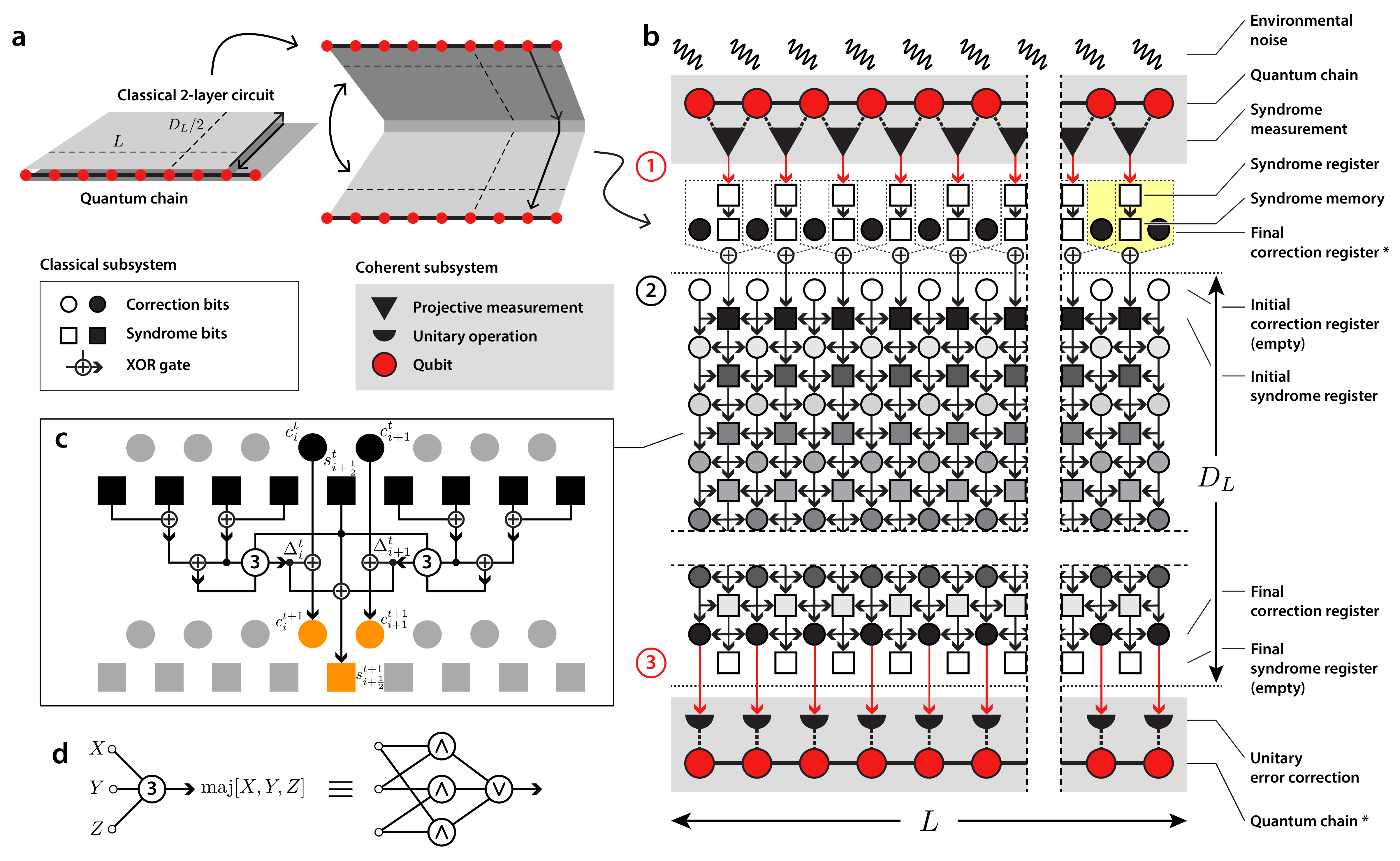}   
    \caption{\small
        \CaptionMark{2D-evolved $\TLVdm$.}
        \tzCaptionLabel{a} 
        The quantum chain is placed on top and framed by a 2D substrate that allows for the implementation of classical two-layer circuitry (e.g.,
        by photo lithography) that connects to projective (measurement-) and unitary gates along the chain.
        The classical circuits are used to process measurement results and control unitary gates in an integrated, scalable
        fashion. For illustrative purposes, the two layers are drawn unfolded with a copy of the quantum chain at top and
        bottom. The length of the chain is $L$ and the depth of the (unfolded) circuit is denoted by $D_L$, the scaling of
        which is discussed in the text.
        \tzCaptionLabel{b}
        Detailed setup to fight continuous noise on the quantum chain. Information propagates in a feed-forward manner from top
        (syndrome measurements) to bottom (correction operations).
        At the beginning of each time step, the syndrome pattern from projective measurements is fed 
        into the \texttt{syndrome register} 
        (Substep 1, red arrows).
        Subsequently, the content of all horizontal registers is evolved to the next layer 
        (Substep 2, black arrows).
        Finally, the results in the \texttt{final correction register} are applied to the quantum chain 
        (Substep 3, red arrows).
        The \texttt{initial syndrome register} is fed by the parity of
        \texttt{syndrome memory}, \texttt{syndrome register} and the syndrome of the last applied correction in the \texttt{final
        correction register} (indicated by the yellow box).
        The shading of classical bits (squares and circles) from black to white (or vice versa) illustrates the typical
        operation of the circuit: The \texttt{syndrome register} starts off in a non-empty state at the top whereas the cumulative
        \texttt{correction register} is initialized with all bits zero. Propagation to the bottom transforms the
        \texttt{correction register}
        into the non-trivial result of the decoding procedure while depleting the \texttt{syndrome register}. 
        The latter reaches an empty state at the bottom (with high probability, see text).
        The elements marked by (*) are drawn twice for illustrative purposes and exist only once in hardware (see \tzCaptionLabel{a}).
        \tzCaptionLabel{c}
        Detailed logical flow in the scalable bulk that defines the new state of the next row in dependence of the state of the
        last row. The shown operations implement the evolution of $\TLVdm$ in syndrome-delta representation
        directly on the \texttt{syndrome registers} (squares) 
        while accumulating all applied operations on the qubits in the \texttt{correction registers} (circles).
        Note that actual qubit rotations are only applied once at the bottom, defined by the state of the \texttt{final correction
        register}. For the sake of compactness, we write $c_i^t=c_i(t)$ etc. as compared to the text.
        \tzCaptionLabel{d}
        Besides the ubiquitous \texttt{XOR}-gates, the only additional gate required is the majority gate with three inputs
        (\texttt{MAJ3}-gate) which can be easily translated into a network of elementary \texttt{AND}- and \texttt{OR}-gates.
    }
    \label{fig:FIG_TLV_SETUP}
\end{figure}
%% ------------------------------------------------------------------------------------------------

To protect the evolution of $\TLVdm$ in the face of continuous noise, we pay
with (classical) hardware by unrolling the time evolution into the (spatial)
second dimension, perpendicular to the quantum chain. Then, our previous
discussions and results on the \textit{time} required for decoding translate
directly into statements about the scaling of the \textit{depth} of this
``overhead dimension''.  We start with a description of the envisioned setup in
Fig.~\ref{fig:FIG_TLV_SETUP}. Note that the details of its implementation,
described in the next few paragraphs, serve as a \textit{proof of principle}
only and may be subject to optimizations depending on the physical setup chosen
for its realization.

We start with the quantum chain which is placed on top of a 2D substrate that
hosts a classical two-layer circuit parallel and attached to the chain.  The
circuit has the topology of a cylinder glued to the quantum chain, see
Fig.~\ref{fig:FIG_TLV_SETUP}~(a); we illustrate both layers by slicing the
cylinder along the chain and unfolding the circuit into the plane.  The logical
wiring of the circuit is sketched in Fig.~\ref{fig:FIG_TLV_SETUP}~(b), (c) and
(d) on various levels of detail.  Its width equals the length of the chain $L$
whereas the depth $D_L$ (folded up $D_L/2$) is arbitrary (up to a fixed
overhead), see (b) and below.  All syndrome measurements are performed
periodically---once per time step---and fed into the \texttt{syndrome register}
in the top layer of the circuit [Substep 1 in (b), red].  Subsequently each
(binary) cell of the two-dimensional classical system is updated synchronously
according to specific rules that take as inputs values of spatially adjacent
cells [Substep 2 in (b), black]. Finally, set bits in the \texttt{final
correction register} on the lower layer (again adjacent to the chain) are used
to determine the unitary error correction, the application of which completes a
single time step [Substep 3 in (b), red].  Note that due to the spatially local
computations, the time needed for a single time step is constant and does
neither scale with $L$ nor with $D_L$.

The rules defining the classical automaton (applied in Substep 2) can be divided roughly 
into two functional classes [see Fig.~\ref{fig:FIG_TLV_SETUP}~(b) Substep 2]. 
The first is independent of the depth $D_L$ and located close to the chain. 
It consists of the \texttt{syndrome register} (upper layer), \texttt{syndrome memory} (upper layer), 
and the \texttt{final correction register} (lower layer) and computes the syndrome of errors that accumulated since 
the last syndrome measurement, taking into account the correction operations at the end of the last step. 
Formally,
\begin{subequations}
    \begin{align}
        \bvec x(t+1)&=\bvec e(t+1)\oplus \bvec c(t)\oplus \bvec x(t)\\
        \Rightarrow \partial\bvec e(t+1)&=\partial\bvec x(t+1)\oplus\partial\bvec x(t)\oplus \partial \bvec c(t)
        =\bvec s(t+1)\oplus\bvec s(t)\oplus\partial\bvec c(t)
        \label{eq:diffstep}
    \end{align}
\end{subequations}
with
\begin{equation}
    \partial c_{i+\h}(t)=c_i(t)\oplus c_{i+1}(t)\,,
\end{equation}
where $\bvec s(t+1)$ denotes the newly measured syndrome (in the \texttt{syndrome register}), $\bvec s(t)$ is the \textit{previously} measured syndrome 
(in the \texttt{syndrome memory}), and $\bvec c(t)$ encodes the previously applied correction (in the \texttt{final correction
register}); the (inaccessible) error configuration is $\bvec x(t)$ and $\bvec e(t+1)$ denotes the accumulated errors during $[t,t+1]$.
Eventually, the \texttt{syndrome memory} is overwritten with the values of the \texttt{syndrome register}.

The result~(\ref{eq:diffstep}) describes \textit{only errors} that occurred in the previous time interval $[t,t+1]$ and ignores both older errors (which are
already taken into account) and previous corrections (which are not to be ``corrected'').
(\ref{eq:diffstep}) is fed into the first row (\texttt{initial syndrome register}) of the second sector, a translationally invariant 2D circuit
(except for the boundaries) with freely adjustable depth $D_L$. Its purpose is to simulate $\TLVdm$ in the syndrome-delta
representation in a feed-forward manner, from top to bottom in Fig.~\ref{fig:FIG_TLV_SETUP}~(b), where
$\bvec\Delta=\partial\TLVdm_\L(\bvec s)$ is accumulated modulo $2$ in the \texttt{correction registers} (circles)
and $\bvec s'=\partial\bvec\Delta\oplus\bvec s$ is written into the \texttt{syndrome register} of the next row. 
$\partial\TLVdm_\L$ is given by Eq.~(\ref{eq:TLV-syndrome-delta}) and the 
MBC modifications in~(\ref{eq:left_boundary}) and~(\ref{eq:right_boundary})---and can be implemented as illustrated in
Fig.~\ref{fig:FIG_TLV_SETUP}~(c):

Notably, there are only two types of logic gates required, both of which can be easily reduced to elementary gates.
The \texttt{XOR}-gates ($\oplus$) are equivalent to $X\oplus Y=(X\vee Y)\wedge \neg (X\wedge Y)$ while the majority
gates on three bits (\texttt{MAJ3}) can be rewritten as
\begin{equation}
    \maj{X,Y,Z}=(X\wedge Y)\vee(X\wedge Z)\vee(Y\wedge Z)\,,
\end{equation}
see Fig.~\ref{fig:FIG_TLV_SETUP}~(d).
The \texttt{XOR}-gate can be realized in established CMOS technology with only $3$
transistors~\cite{Chowdhury2008,Mishra2010}, 
while the \texttt{MAJ3}-gate requires about $14$ transistors.
However, going beyond CMOS may be beneficial~\cite{Vemuru2014}, 
depending on the environment preferred by the coherent subsystem (the quantum chain):
Whereas the \texttt{MAJ3}-gate is rather complex in CMOS technology, 
it becomes an elementary logic gate in the framework of quantum-dot cellular automata~\cite{Walus2004}.
The second sector evolves $\TLVdm$ on $\partial\bvec e(t+1)$ \textit{in space} and thereby circumvents the noise-induced
deconfinement because subsequent errors $\bvec e(t+2)\dots$ are tackled by a completely decoupled evolution of $\TLVdm$. 
If $\TLVdm$ decodes the syndrome $\partial\bvec e(t+1)$ successfully in
\begin{equation}
    t_{\rm dec}\leq t_{\rm max}(L)=D_L
    \label{eq:succrequirement}
\end{equation}
time steps, the \texttt{final syndrome register} is empty at $t^*=t+1+D_L$ and the \texttt{final correction register}
contains $\bvec c(t^\ast)=\bvec e(t+1)$ which is applied to the quantum chain in Substep 3 to cancel the errors $E(\bvec e(t+1))$.
We point out that the occurrence of errors $\bvec e$ and the application of corresponding corrections $\bvec c$
are separated by $D_L$ time steps, the depth of the circuit,
which reflects the finite speed of information transfer in a spatially extended decoder 
[recall Fig.~\ref{fig:FIG_MAJ_CONSTRAINTS}~(c)].
Since errors and correction operations commute, this is not an issue.

%% ------------------------------------------------------------------------------------------------
\begin{figure}[bt]
    \centering
    \includegraphics[width=1.0\linewidth]{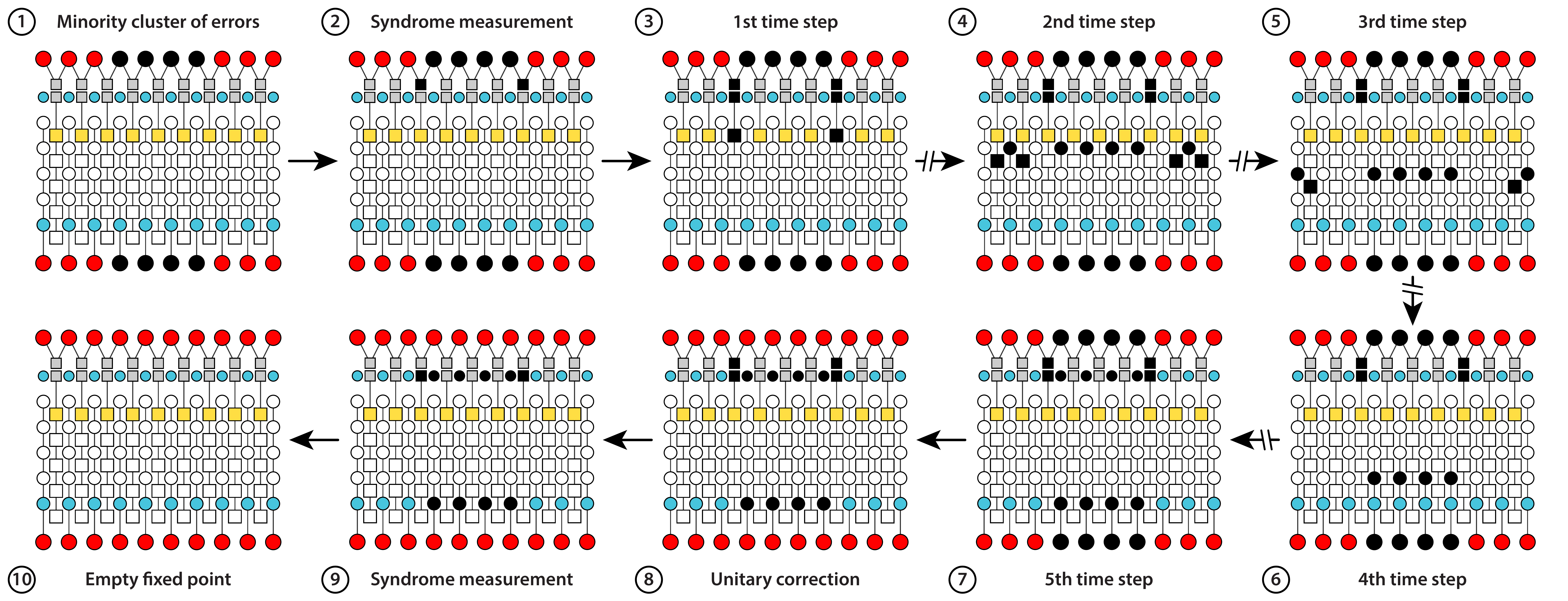}   
    \caption{
        \CaptionMark{Exemplary evolution of 2D-evolved $\TLVdm$.}
        Exemplary evolution of a depth $D_L=5$ 2D-evolved $\TLVdm$-automaton of length $L=10$ without continuous noise,
        starting from a contiguous minority cluster of errors in the center. 
        (See Fig.~\ref{fig:FIG_TLV_SETUP}~(b) for a description of the setup.)
        The \texttt{initial syndrome register} and the \texttt{final correction register} are highlighted yellow and green, respectively.
        A copy of the \texttt{final correction register} is reproduced between \texttt{syndrome register} and 
        \texttt{syndrome memory} (gray) to emphasize that the initialization of 
        the \texttt{initial syndrome register} depends on all of them.
        Shown are 6 time steps in total, each consisting of three substeps: 
        syndrome measurement, a single step of the 2D CA, and a unitary correction.
        We omit trivial substeps (measurements and corrections), indicated by broken arrows.
        The first time step comprises frames 1-3 where the correction step is omitted at the end.
        Time steps 2-4 are shown in frames 4-6 where both measurements and corrections are omitted.
        The 5th time step starts in frame 7 and ends with the first (and last) non-trivial correction in frame 8.
        The 6th and final time step starts with a non-trivial syndrome measurement in frame 9 and resets the CA to the empty fixed
        point in frame 10. Details are given in the text.
    }
    \label{fig:FIG_TLV2D_example}
\end{figure}
%% ------------------------------------------------------------------------------------------------

We demonstrate the evolution of the complete circuit for a single (minority) cluster of errors without continuous noise in Fig.~\ref{fig:FIG_TLV2D_example}
on a chain of length $L=10$ with depth $D_L=5$. Note how the \texttt{syndrome memory} prevents the automaton from issuing multiple
instances of the same computation (Frames 4-6), and how the \texttt{final correction register} prevents the ``correction of the correction'' in Frame 9.
Most importantly, the classical subsystem only uses syndrome information and is \textit{not} aware of the actual error pattern (which we
plot as blacked out qubits for convenience only).

Our setup fails to protect the qubit if, at some point in time, an error pattern accumulates during a single time step which cannot be
successfully corrected by $\TLVdm$ within $D_L$ time steps. This may be because it is eroded to $\bvec 1$ instead of $\bvec
0$ or there are syndromes left after $D_L$ time steps so that residual errors survive. The last case splits into
two subcases: First, $\TLVdm$ might have succeeded and reached $\bvec 0$ after $t_{\rm dec} > D_L$ time steps, or, second, the initial
configuration was in the attractor of a cycle such that no correction was possible anyway (even for $t\to\infty$).

%% ------------------------------------------------------------------------------------------------
\begin{figure}[tb]
    \centering
    \includegraphics[width=0.7\linewidth]{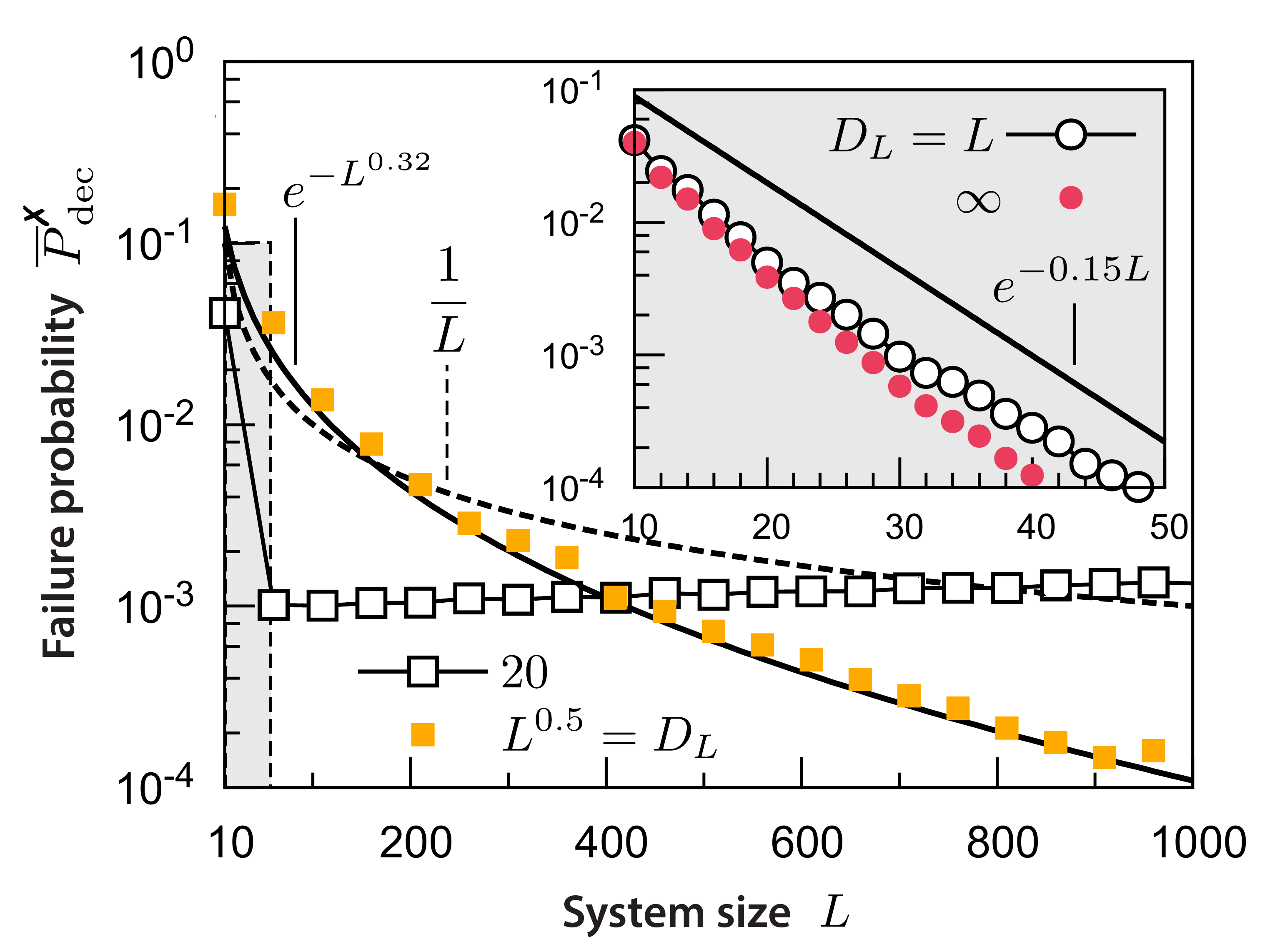}   
    \caption{
        \CaptionMark{Failure probability for 2D-evolved $\TLVdm$.}
        Failure probability $\perrdecbar$ of decoding error configurations with a 2D-evolved $\TLVdm$-decoder of depth $D_L$ as a
        function of the system size $L$ for microscopic error rate $\perrmic=0.2$. Note that failed decodings include both
        syndrome-free states with corrupted logical qubit and states with residual syndromes.
        We sampled $5\cdot 10^6$ instances per data point. The gray rectangle (dashed boundary) is shown as inset. The solid and dashed lines
        show the analytic functions $e^{-L^{0.32}}$, $1/L$, and $e^{-0.15L}$ as a guide to the eye; these are neither fits nor analytical
        results. We compare setups with constant-depth $D_L=\const=20$ (empty squares) and sublinear-depth $D_L=L^{0.5}$ (filled
        squares). Note that the curves intersect for $L=400$ because $\sqrt{400}=20$.
        The inset depicts the much faster decreasing cases of unbounded depth $D_L=\infty$ (filled circles) and linear
        depth $D_L=L$ (empty circles). There is no qualitative difference between the two for the shown parameters.
    }
    \label{fig:FIG_TLV2D_simulation}
\end{figure}
%% ------------------------------------------------------------------------------------------------

The time that quantifies the performance of our setup is then the \textit{time-to-first-failure} $T_\text{tff}$ (``decay time''), i.e., the
time after which the first uncorrectable (in the above sense) error pattern appears.
Its expectation value is given by
\begin{equation}
    \langle T_\text{tff} \rangle=\sum_{T=1}^\infty\,T\,\perrdecbar(1-\perrdecbar)^{T-1}=\frac{1}{\perrdecbar}
    \label{eq:Ttff}
\end{equation}
where $\perrdecbar=1-\psuccdecbar$ denotes the restricted failure probability of $\TLVdm$ with $t_{\rm max}(L)=D_L$, as discussed previously.
Eq.~(\ref{eq:Ttff}) follows because each time step corresponds to a Bernoulli sample independent of the previous error patterns, a
consequence of the spatial evolution of $\TLVdm$ in our 2D circuit.
In Fig.~\ref{fig:FIG_TLV2D_simulation} we show simulations of $\perrdecbar$ as function of $L$ for fixed error rate $\perrmic =0.2$
and four different depth scalings $D_L$ [see also Fig.~\ref{fig:FIG_TLV_RESULT_P}].
Decreasing failure probabilities $\perrdecbar$ translate via Eq.~(\ref{eq:Ttff}) into growing decay times $\langle
T_\text{tff}\rangle$:
A constant depth $D_L=20$ decoder does not benefit from longer quantum chains whereas both ``infinite depth'' and 
linear depth decoder perform similarly and yield exponentially increasing decay times $T_\text{tff}$. 
Decoders with slowly growing algebraic depths, such as the shown $D_L=L^{0.5}$, still exhibit exponential growth of $T_\text{tff}$, 
although weaker than that of global decoders with $D_L\gtrsim L$.

%% ================================================================================================
\FloatBarrier
\section{Conclusion}
%% ================================================================================================

Motivated by the requirement for scalable and modular decoders for topological quantum memories,
we set out to construct a strictly local decoder for the one-dimensional Majorana chain quantum code.
As the latter constitutes the quantum analogue of the classical repetition code, 
it can be efficiently decoded and stabilized by global majority voting \textit{if} spatio-temporal 
constraints are ignored.
Taking into account the time needed for classical syndrome processing and communication
suggests the implementation of decoders as cellular automata. We argued that the decoding problem at hand
translates into the problem of one-dimensional density classification 
with the additional symmetry constraint of self-duality;
this led us to the \textit{two-line voting} automaton $\TLV$ as promising local decoder. 
We equipped the latter with mirrored boundaries (called $\TLVdm$) 
to comply with the requirement of open boundaries on the level of the quantum chain.

Both numerics and rigorous analytical results showed that $\TLVdm$ succeeds in decoding Bernoulli random patterns with
exponentially vanishing failure rate as $L\to\infty$. Whereas the rigorous results are restricted to small but finite
microscopic error probabilities $\perrmic<\pterrcrit\approx 3.2\times 10^{-6}$, 
numerics suggest that $\perrmic <\h$ may be enough for successful decoding.
In addition, the time needed for decoding scales sublinearly for $\perrmic <\h$.
In particular, we proved that the failure rate for decoding a code of length $L$
in at most $t\propto L^\kappa$ time steps ($\kappa >0$ arbitrary) 
vanishes exponentially with $L\to\infty$ for small but finite $\perrmic <\pterrcrit$.
In a nutshell: for low error rates, \textit{global} majority voting is not required.

In the final section we investigated the performance of $\TLVdm$ in the presence of continuous noise.
In accordance with the expected ergodicity of simple, one-dimensional cellular automata, 
we argued that $\TLVdm$ \textit{cannot} fight continuous noise because long-range communication is cut off by locally
created charge-anticharge pairs.
As a consequence, we had to evolve $\TLVdm$ into the second dimension to prevent errors from accumulating during the
syndrome processing. Thereby the superior (i.e., sublinear) scaling of decoding times for $\TLVdm$---as opposed to the
linear scaling of global majority voting---was turned into a modest scaling of classical hardware overhead: For
reasonably low error rates, simple, shallow circuits, lacking the capability of global communication, can 
replace the hardware-expensive global majority voting.
These results add to the quest of scalable and modular realizations of actively corrected 
topological quantum memories.

%% ================================================================================================
\section*{Acknowledgments}
%% ================================================================================================

N.L.\ thanks S.\ Taati for useful comments on the latter's proof of strongly sparse Bernoulli random sets.
This research has received funding from the European Research Council (ERC) 
under the European Union’s Horizon 2020 research and innovation programme (grant agreement No 681208).
This research was supported in part by the National Science Foundation under Grant No. NSF PHY-1125915.

% % TODO: include author contributions
% \paragraph{Author contributions}
% This is optional. If desired, contributions should be succinctly described in a single short paragraph, using author initials.
% TODO: include funding information
% \paragraph{Funding information}
% Authors are required to provide funding information, including relevant agencies and grant numbers with linked author's initials.

\clearpage
\begin{appendix}

%% Reset counters to label theorems accordingly
% \setcounter{definition}{0}
    \setcounter{corollary}{0}
    \setcounter{proposition}{0}
    \setcounter{lemma}{1}

%% ================================================================================================
    \section{Cumulative Bernoulli distribution}
    \label{app:cumulative}
%% ================================================================================================

    Here we prove some of the statements about global majority voting used in Sections~\ref{subsec:global} and~\ref{subsec:locality} 
    of the main text. To this end, we start with the probability for more than half of $L$ (odd) binary sites $x_i$
    to be error afflicted (i.e., in state $x_i=1$) after $t$ rounds of additive, uncorrelated Bernoulli noise:
    \begin{align}
        P^\xmark_{L,\perrmic}(t)
        &=\sum_{k=\frac{L+1}{2}}^{L}\,\binom{L}{k}\,\left[\perr(t)\right]^k\left[1-\perr(t)\right]^{L-k}
        \label{eq:cumulative_supp}
    \end{align}
    where
    \begin{equation}
        \perr(t)=\h\left[1-\left(1-2\perrmic\right)^t\right]
        \label{eq:perr}
    \end{equation}
    is the renormalized single-site probability for $X(t)=X_1\oplus\dots\oplus X_t$ with $X_i$ Bernoulli random variables with parameter
    $\perrmic$, i.e., 
    \begin{equation}
        \Pr(X(t)=1)=\sum_{k\,\text{odd}}\,\binom{t}{k}\,\left({\perrmic}\right)^k\left(1-\perrmic\right)^{t-k}
        =\h\left[1-\left(1-2\perrmic\right)^t\right]\,.
    \end{equation}

    Eq.~(\ref{eq:cumulative_supp}) is a special case of the cumulative Bernoulli distribution function which is known to be expressible in
    a closed form by the incomplete beta function,
    \begin{equation}
        B(x;a,b) = \int_{0}^{x}\,y^{a-1}(1-y)^{b-1}\,\mathrm{d}y
    \end{equation}
    via~\cite{Wadsworth1960}
    \begin{equation}
        P^\xmark_{L,\perrmic}(t)=I_{\perr(t)}\left(\frac{L+1}{2},L-\frac{L+1}{2}+1\right)
        \label{eq:PI}
    \end{equation}
    where $I_x(a,b)=B(x;a,b)/B(1;a,b)$ is called \textit{regularized} incomplete beta function.
    With $a,b\in\mathbb{N}$ we can use
    \begin{align}
        B(1;a,b)=\frac{\Gamma(a)\Gamma(b)}{\Gamma(a+b)}=\frac{(a-1)!(b-1)!}{(a+b-1)!}
    \end{align}
    to evaluate
    \begin{align}
        B\left(1;\frac{L+1}{2},L-\frac{L+1}{2}+1\right)
        =\left[\frac{L+1}{2}\binom{L}{\frac{L+1}{2}}\right]^{-1}\,.
    \end{align}
    Then Eq.~(\ref{eq:PI}) reads
    \begin{align}
        P^\xmark_{L,\perrmic}(t)
        &=\frac{L+1}{2}\binom{L}{\frac{L+1}{2}}\,\int_{0}^{\perr(t)}[x-x^2]^{\frac{L-1}{2}}\mathrm{d}x
        \label{eq:cumulative_beta}
    \end{align}
    which is a useful form to derive estimates and limits of Eq.~(\ref{eq:cumulative_supp}).
    In particular, we can now derive the limit for $0<\perrmic\leq\h$ and $0< c <\infty$
    \begin{equation}
        \lim_{L\to\infty}P^\xmark_{L,\perrmic}(t=c^{-1}L)=\h
    \end{equation}
    if we use the asymptotic expression (Stirling formula)
    \begin{equation}
        \binom{L}{\frac{L+1}{2}}\sim\sqrt{\frac{2}{\pi}}\,\frac{2^L}{\sqrt{L}}
        \quad\text{for}\quad L\to\infty\,.
    \end{equation}
    Indeed, with $t=\frac{L}{c}$
    \begin{subequations}
        \begin{align}
            P^\xmark_{L,\perrmic}(t)&=\frac{L+1}{2}\binom{L}{\frac{L+1}{2}}\,\int_{0}^{\perr(t)}[x-x^2]^{\frac{L-1}{2}}\mathrm{d}x\\
            &\sim \frac{\sqrt{L}}{\sqrt{2\pi}}\,\int_{0}^{\perr(t)}[2(2x)-(2x)^2]^{\frac{L-1}{2}}\mathrm{d}(2x)\\
            &\stackrel{y=2x}{=} \frac{\sqrt{L}}{\sqrt{2\pi}}\,\int_{0}^{2\perr(t)}[2y-y^2]^{\frac{L-1}{2}}\mathrm{d}y\\
            &\stackrel{u=y-1}{=} \frac{\sqrt{L}}{\sqrt{2\pi}}\,\int_{-1}^{2\perr(t)-1}[1-u^2]^{\frac{L-1}{2}}\mathrm{d}u\\
            &\sim
            \frac{1}{\sqrt{\pi}}\,\int_{-\sqrt{\frac{L-1}{2}}}^{-\sqrt{\frac{L-1}{2}}(1-2\perrmic)^{\frac{L}{c}}}
            \left[1-\frac{x^2}{\frac{L-1}{2}}\right]^{\frac{L-1}{2}}\mathrm{d}x
        \end{align}
    \end{subequations}
    where we used the substitution $x=\sqrt{\frac{L-1}{2}}\,u$ in the last row.
    If we use that ($0<\perrmic\leq\h$)
    \begin{equation}
        \lim_{L\to\infty}\,
        \int_{-\sqrt{\frac{L-1}{2}}(1-2\perrmic)^{\frac{L}{c}}}^{0}
        \left[1-\frac{x^2}{\frac{L-1}{2}}\right]^{\frac{L-1}{2}}\mathrm{d}x
        =0
    \end{equation}
    and
    \begin{equation}
        \lim_{n\to\infty}\,\int_{-n}^{0}\left(1-\frac{x^2}{n^2}\right)^{n^2}\mathrm{d}x =
        \lim_{n\to\infty}\,n\int_{0}^{1}\left(1-y^2\right)^{n^2}\mathrm{d}y =
        \frac{\sqrt{\pi}}{2}\lim_{n\to\infty}\,\frac{n\,\Gamma\left(1+n^2\right)}{\Gamma\left(\frac{3}{2}+n^2\right)}=
        \frac{\sqrt{\pi}}{2}\,,
    \end{equation}
    we find for $0<\perrmic\leq\h$ and $0< c<\infty$ the final result
    \begin{equation}
        \lim_{L\to\infty}P^\xmark_{L,\perrmic}(c^{-1}L)
        =\h
        =\frac{1}{\sqrt{\pi}}\,\int_{-\infty}^{0}e^{-x^2}\mathrm{d}x\,.
    \end{equation}
    Note that $\lim_{L\to\infty}P^\xmark_{L,\perrmic}(c^{-1}L)>0$ only because $\perr(t)$ renormalizes to $\h$ exponentially fast with
    $t$, such that the upper bound of the integral converges to zero. Similarly, for $\lim_{L\to\infty}P^\xmark_{L,\perrmic}(t=1)$ one
    easily re-derives the exponential decay to zero (modified by $\sqrt{L}$ from the integral bounds).

%% ================================================================================================
    \section{Light cone constraint}
    \label{app:lightcone}
%% ================================================================================================

%% ================================================================================================
    \subsection{Derivation}
    \label{app:lightcone_derivation}
%% ================================================================================================

    Here we prove the following upper bound for the decoding probability $\psuccdec$ of a $D$-local
    physical decoder of linear size $L$:
    \begin{equation}
        \psuccdec \leq \left[1+\left(\frac{\perrmic}{1-\perrmic}\right)^{2D+1}\right]^{-\frac{L}{2D+1}}
    \end{equation}
    The microscopic error probability per qubit and time step is $\perrmic$.
    Let the error pattern be described by the vector $\bvec x$ of length $L$ with syndrome $\bvec s=\partial\bvec x$.
    A given $D$-local decoder $\Delta^D$ then calculates a correction $\Delta^D(\bvec s)$ such that $\Delta^D(\bvec s)\oplus\bvec x$ describes
    the new error state after the correction has been applied. For an arbitrary but fixed site $1\leq i\leq L$ we define two sets:
    \begin{subequations}
        \begin{align}
            \Xok_i &\equiv \left\{\,\bvec x\,|\,\Delta^D_i(\bvec s)\oplus x_i=0\,\right\}\\
            \Xfail_i &\equiv\left\{\,\bvec x\,|\,\Delta^D_i(\bvec s)\oplus x_i=1\,\right\}\
        \end{align}
    \end{subequations}
    $\Xok_i$ and $\Xfail_i$ describe the sets of all error patterns that $\Delta^D$ (un)successfully corrects at site
    $i$, respectively. We define the local complement operator $C_i^D$ such that for $\bvec x'=C_i^D\bvec x$
    \begin{equation}
        x_j' = \begin{cases}
            x_j\quad&\text{for}\;|j-i|>D\\
            x_j\oplus 1\quad&\text{for}\;|j-i|\leq D
        \end{cases}\,,
    \end{equation}
    i.e., it inverts the error pattern in a region of radius $D$ around site $i$.
    Clearly $C_i^D\circ C_i^D=\mathds{1}$, such that $C_i^D$ defines a bijection on the total error state space 
    $\X=\Xok_i\dot\cup\Xfail_i=\mathbb{Z}_2^L$. Let $\pi_i^D\,:\,\X\rightarrow\mathbb{Z}_2^{2D+1}$ be the projector that
    slices the range on which $C_i^D$ acts non-trivially from a state $\bvec x$. We have $\partial\bvec x\neq \partial C_i^D\bvec x$ due to
    the boundaries of the partial complement. However, 
    \begin{equation}
        \partial\pi_i^D\bvec x = \partial\pi_i^D C_i^D \bvec x
    \end{equation}
    since the syndrome does not change inside the range of the local complement.
    We can now define the two sets
    \begin{subequations}
        \begin{align}
            \XCok_i &\equiv C_i^D\Xok_i=\left\{\,C_i^D\bvec x\,|\,\bvec x\in\Xok_i\,\right\}\\
            \XCfail_i &\equiv C_i^D\Xfail_i=\left\{\,C_i^D\bvec x\,|\,\bvec x \in\Xfail_i\,\right\}\,.
        \end{align}
    \end{subequations}
    Since $C_i^D$ is a bijection, we still have $\X=\XCok_i\dot\cup \XCfail_i$.

    Now comes a crucial step: Because $\Delta^D$ is $D$-local, its action on site $i$ only depends on the syndromes within
    $\pi_i^D\bvec x$, i.e., $\partial\pi_i^D\bvec x$.
    Therefore we find that if $\bvec x\in\Xok_i$ is successfully corrected at site $i$, then $\Delta^D$ fails to
    correct $C_i^D\bvec x$ because its action on site $i$ is the same. In a nutshell,
    \begin{equation}
        \bvec x\in\Xok_i\;\Leftrightarrow\;C_i^D\bvec x\in\Xfail_i\,.
    \end{equation}
    Thus we have $\XCok_i=\overline{\Xok_i}=\Xfail_i$ and $\XCfail_i=\overline{\Xfail_i}=\Xok_i$ where $\overline{\bullet}$ denotes the complement
    in $\X$.

    For a Bernoulli process, the probability of the error pattern $\bvec x$ is
    \begin{equation}
        \P{\bvec x}=\left(\perrmic\right)^{|\bvec x|}\,\left(1-\perrmic\right)^{L-|\bvec x|}
    \end{equation}
    width $|\bvec x|=\sum_ix_i$ the total number of errors.
    For the local complement, we have
    \begin{align}
        \P{C_i^D\bvec x}=&\left(\perrmic\right)^{|C_i^D\bvec x|}\,\left(1-\perrmic\right)^{L-|C_i^D\bvec x|}
        =\left(\frac{\perrmic}{1-\perrmic}\right)^{|C_i^D\bvec x|-|\bvec x|}\,\P{\bvec x}
    \end{align}
    where
    \begin{equation}
        |C_i^D\bvec x|-|\bvec x|=(2D+1)-2|\pi_i^D\bvec x|
    \end{equation}
    is the change of errors in the light cone due to the local complement.
    Thus
    \begin{equation}
        \P{C_i^D\bvec x}=
        \left(\frac{\perrmic}{1-\perrmic}\right)^{(2D+1)-2|\pi_i^D\bvec x|}\,\P{\bvec x}\,.
    \end{equation}
    We start from the trivial relation
    \begin{equation}
        \sum_{\bvec x\in\Xok_i}\P{\bvec x}
        +\sum_{\bvec x\in\Xfail_i}\P{\bvec x}
        =1
    \end{equation}
    and rewrite the first term
    \begin{align}
        \sum_{\bvec x\in\Xok_i}\P{\bvec x}=&
        \sum_{\bvec x\in \XCok_i}\P{C_i^D\bvec x}
        =\sum_{\bvec x\in \Xfail_i}
        \left(\frac{\perrmic}{1-\perrmic}\right)^{(2D+1)-2|\pi_i^D\bvec x|}\,\P{\bvec x}
    \end{align}
    which yields
    \begin{equation}
        \sum_{\bvec x\in\Xfail_i}\,
        \left[1+\left(\frac{\perrmic}{1-\perrmic}\right)^{(2D+1)-2|\pi_i^D\bvec x|}\right]\,\P{\bvec x}
        =1\,.
    \end{equation}
    So far, all statements are exact and valid for $0\leq\perrmic\leq 1$. 
    Now we assume $0\leq\perrmic\leq\h$ and estimate
    \begin{equation}
        \left(\frac{\perrmic}{1-\perrmic}\right)^{(2D+1)-2|\pi_i^D\bvec x|}
        \leq 
        \left(\frac{1-\perrmic}{\perrmic}\right)^{2D+1}
    \end{equation}
    where we used that $|\pi_i^D\bvec x|\leq 2D+1$ and $\perrmic/(1-\perrmic)\leq 1$ for $\perrmic\leq \h$.
    We have
    \begin{equation}
        1\leq
        \left[1+
            \left(\frac{1-\perrmic}{\perrmic}\right)^{2D+1}
        \right]\cdot
        \sum_{\bvec x\in\Xfail_i}\P{\bvec x}
    \end{equation}
    and therefore the lower bound on the error probability
    \begin{equation}
        \left[1+
            \left(\frac{1-\perrmic}{\perrmic}\right)^{2D+1}
        \right]^{-1}
        \leq\sum_{\bvec x\in\Xfail_i}\P{\bvec x}\,.
    \end{equation}
    Note that this is the probability that an error at site $i$ survives a single correction procedure with $\Delta^D$.
    This lower bound can be easily recast as an upper bound on the probability of successful correction,
    \begin{align}
        \sum_{\bvec x\in\Xok_i}\P{\bvec x}
        =&1-\sum_{\bvec x\in\Xfail_i}\P{\bvec x}
        \leq 
        \left[1+
            \left(\frac{\perrmic}{1-\perrmic}\right)^{2D+1}
        \right]^{-1}\,.
        \label{eq:suc_upper}
    \end{align}
    The last step is to use this result for an upper bound on the \textit{global} correction probability
    \begin{align}
        \psuccdec=&\sum_{\bvec x\in\bigcap_i\Xok_i}\P{\bvec x}
        =\sum_{\bvec x\in\X}\P{\bvec x}\,\prod_i\mathds{1}_{\Xok_i}(\bvec x)
        =\left<\prod_i\mathds{1}_{\Xok_i}\right>
        \neq\prod_i\left<\mathds{1}_{\Xok_i}\right>
    \end{align}
    where $\mathds{1}_{\Xok_i}$ denotes the indicator function of $\Xok_i$.
    The last inequality follows from the fact that $\mathds{1}_{\Xok_i}$ and $\mathds{1}_{\Xok_j}$ may be correlated 
    random variables for $|i-j|<2D+1$, i.e., if their past light cones overlap and they depend on common syndrome measurements.
    This motivates the second estimate
    \begin{equation}
        \left<\prod_i\mathds{1}_{\Xok_i}\right>
        \leq\left<\prod_{k=1}^{L/(2D+1)}\mathds{1}_{\Xok_{k(2D+1)}}\right>
    \end{equation}
    where we assume for simplicity that $L$ is a multiple of $2D+1$.
    We can separate the system into subsystems $\bvec x_k$ of length $2D+1$ such that 
    $\mathds{1}_{\Xok_{k(2D+1)}}(\bvec x)=\mathds{1}_{\Xok_{k(2D+1)}}(\bvec x_k)$ (slight abuse of notation)
    with $\bvec x_k=\pi_{k(2D+1)}^D\bvec x$. Here we use the fact that the correctability of site $k(2D+1)$ only depends on a
    causal region of radius $D$. The last step is to realize that $\P{\bvec x}$ is a product measure due to the uncorrelated Bernoulli
    process,
    \begin{equation}
        \P{\bvec x}=\prod_k\P{\bvec x_k}\,,
    \end{equation}
    so that
    \begin{equation}
        \left<\prod_{k=0}^{L/(2D+1)}\mathds{1}_{\Xok_{k(2D+1)}}\right>
        =\prod_{k=1}^{L/(2D+1)}\left<\mathds{1}_{\Xok_{k(2D+1)}}\right>
    \end{equation}
    factorizes.
    Using translational invariance and our result Eq.~(\ref{eq:suc_upper}), it follows the final result
    \begin{equation}
        \psuccdec\leq
        \left<\mathds{1}_{\Xok_{i}}\right>^{\frac{L}{2D+1}}
        \leq
        \left[1+
            \left(\frac{\perrmic}{1-\perrmic}\right)^{2D+1}
        \right]^{-\frac{L}{2D+1}}\,.
        \label{eq:upper_bound_supp}
    \end{equation}
    Note that this result is generic and we used only that the decoder (1) has only access to the syndrome $\partial\bvec x$ which is
    invariant under complementation of error patterns and (2) the correction of site $i$ only depends on nearby syndromes in
    the neighborhood $\pi_i^D\bvec x$.

%% ================================================================================================
    \subsection{Scaling behavior}
    \label{app:lightcone_scaling}
%% ================================================================================================

    Here we derive some scaling limits of Eq.~(\ref{eq:upper_bound_supp}).
    To this end, we assume $D=D(L)$ to be a function of the linear size $L$ of the code.

    There are three major cases:

    \begin{itemize}

        \item 
            $D=\const$. This describes a truly one-dimensional feed-forward circuit of finite depth $D$.
            We find in the thermodynamic limit
            \begin{align}
                \lim\limits_{L\to\infty}\psuccdec\leq&
                \lim\limits_{L\to\infty}
                \left[1+
                    \left(\frac{\perrmic}{1-\perrmic}\right)^{2D+1}
                \right]^{-\frac{L}{2D+1}}
                =
                \begin{cases}
                    0\quad&\text{for}\quad 0<\perrmic\leq \h\\
                    1\quad&\text{for}\quad \perrmic=0
                \end{cases}\,,
            \end{align}
            i.e., there is no successful decoding possible for any finite microscopic error rate $\perrmic>0$.

        \item 
            $2D+1\sim L^\kappa$ ($\kappa > 0$). 
            This describes a truly two-dimensional feed-forward circuit, possibly slowly growing in the second dimension if
            $\kappa\approx 0$. We find in the thermodynamic limit
            \begin{align}
                \lim\limits_{L\to\infty}\psuccdec\leq&
                \lim\limits_{L\to\infty}
                \left[1+
                    \left(\frac{\perrmic}{1-\perrmic}\right)^{L^\kappa}
                \right]^{-L^{1-\kappa}}
                =
                \begin{cases}
                    1\quad&\text{for}\quad 0\leq\perrmic<\h\\
                    0\quad&\text{for}\quad \perrmic=\h\;\text{and}\;\kappa < 1\\
                    \h\quad&\text{for}\quad\perrmic=\h\;\text{and}\;\kappa = 1\\
                    1\quad&\text{for}\quad\perrmic=\h\;\text{and}\;\kappa > 1\\
                \end{cases}\,,
            \end{align}
            i.e., except for the critical point $\perrmic=\h$, there is no constraint on $\psuccdec$ coming from
            Eq.~(\ref{eq:upper_bound_supp}). At the critical point, the upper bounds depend on whether the second dimension scales slower
            or faster than the length of the chain. For faster scaling depth, there is no constraint, whereas for slower scaling
            depth, non-trivial upper bounds arise. Note that $\psuccdec\geq\h$ follows for $\perrmic=\h$ since a
            completely mixing Bernoulli process destroys all encoded information about the majority. $\psuccdec > \h$ arises whenever the
            decoder fails to get rid of all syndromes. $\psuccdec = \h$ \textit{can} be realized if the decoder succeeds in
            removing all syndromes but still fails to recover the original state in $50\%$ of the cases.

            To prove the result for $0\leq\perrmic<\h$, we write $\perrmic/(1-\perrmic)=q$ with $0\leq q < 1$.
            First, note that
            \begin{equation}
                \lim_{L\to\infty}\left[1+q^{L^\kappa}\right]^{-L^{1-\kappa}}\leq
                \lim_{L\to\infty}1^{-L^{1-\kappa}}=1
            \end{equation}
            because of $q\geq 0$. Furthermore it is $-L^\kappa\log q \geq \log L$ for $q<1$, $\kappa >0$ and $L$ large enough.
            This allows us to estimate
            \begin{align}
                \left[1+q^{L^\kappa}\right]^{L}
                =& 
                \left[1+\left(\frac{1}{e}\right)^{-L^\kappa\log q}\right]^{L}
                \leq
                \left[1+\left(\frac{1}{e}\right)^{\log L}\right]^{L}
                =
                \left[1+\frac{1}{L}\right]^{L}\leq C
            \end{align}
            for some constant $C>0$ and
            where we used that $1/e<1$ and $\lim_{L\to\infty}\left(1+1/L\right)^L=e$.
            With this result, we can find a lower bound as follows:
            \begin{equation}
                \lim_{L\to\infty}\left[1+q^{L^\kappa}\right]^{-L^{1-\kappa}}\geq
                \lim_{L\to\infty}\frac{1}{C^{\frac{1}{L^\kappa}}}=1
            \end{equation}
            for $\kappa > 0$.
            In conclusion, we have shown $\lim_{L\to\infty}\left[1+q^{L^\kappa}\right]^{-L^{1-\kappa}}=1$ for $q<1$ and
            $\kappa>0$.

        \item 
            $2D+1\sim \log L^\kappa$ ($\kappa > 0$). 
            This describes still a two-dimensional feed-forward circuit, but with an exponentially smaller second dimension.
            In a certain sense, it interpolates between the one- and two-dimensional cases above. Indeed,
            \begin{align}
                \lim\limits_{L\to\infty}\psuccdec\leq&
                \lim\limits_{L\to\infty}
                \left[1+
                    \left(\frac{\perrmic}{1-\perrmic}\right)^{\kappa\log L}
                \right]^{-\frac{L}{\kappa\log L}}
                =
                \begin{cases}
                    1\quad&\text{for}\quad\perrmic\leq p_c^\xmark\\
                    0\quad&\text{for}\quad\perrmic> p_c^\xmark\\
                \end{cases}\,,
            \end{align}
            where the critical microscopic error rate is 
            \begin{equation}
                p_c^\xmark=\frac{1}{1+e^{1/\kappa}}\,.
            \end{equation}
            To show this, we write
            \begin{equation}
                \left[1+q^{\kappa\log L}\right]^L
                =\left[1+\frac{1}{L^{-\kappa\log q}}\right]^L
                =\left[1+\frac{1}{L^{\eta}}\right]^L
            \end{equation}
            with $q=\perrmic/(1-\perrmic)$ and $\eta=-\kappa\log q$.
            Using $\lim_{x\to 0}\log(1+x)/x=\lim_{x\to 0}1/(1+x)=1$, we find
            \begin{equation}
                \lim_{L\to\infty}\log\left(1+L^{-\eta}\right)/L^{-\eta}=1
            \end{equation}
            for $\eta > 0$. Hence
            \begin{subequations}
                \begin{align}
                    \lim_{L\to\infty}\left[1+q^{\kappa\log L}\right]^{-\frac{L}{\kappa\log L}}
                    =&\lim_{L\to\infty}\exp\left[-\frac{L^{1-\eta}}{\kappa\log
                    L}\cdot\log\left(1+L^{-\eta}\right)/L^{-\eta}\right]\\
                    =&
                    \begin{cases}
                        0 \quad&\text{for}\quad \eta<1\\
                        1 \quad&\text{for}\quad \eta\geq 1
                    \end{cases}\,.
                \end{align}
            \end{subequations}
            The critical value $\eta_c=1$ corresponds to $-\kappa\log q_c=1\,\Leftrightarrow\,q_c=e^{-1/\kappa}$ and therefore
            $p_c^\xmark=1/(1+e^{1/\kappa})$.

            Whereas constant depth allows for no correction if $\perrmic >0$ and algebraically growing $D$, in principle, imposes
            no restriction at all (except for $\perrmic=\h$ of course), a logarithmically growing depth could still be sufficient for
            low enough error rates $\perrmic \leq p_c^\xmark<\h$.

    \end{itemize}

%% ================================================================================================
    \section{Linear eroders on finite chains}
    \label{app:finite}
%% ================================================================================================

    Regarding the eroder property of a \textit{finite} chain, 
    we have to relax the definition to account for the finiteness of the system
    as there is no qualitative difference between perturbation and background (both of which are finite).
    A possible modification reads as follows:
    \begin{definition}
        A cellular automaton on a finite chain $\L=\{1,\dots,L\}$ (with arbitrary boundary conditions)
        is a  finite-size linear eroder if there exist real constants $0 < a < 1$ and $m\in\mathbb{R}^+$
        such that for any size $L<\infty$ and any finite perturbation of $\bvec 0$ ($\bvec 1$) with diameter $l\leq a\,L$,
        the unperturbed state $\bvec 0$ ($\bvec 1$) is recovered at $\tdec\leq m\,l$.
    \end{definition}
    Here we focus on $\TLVdm$. Clearly, a contiguous cluster of errors touching the mirror 
    is eventually eroded by the modified $\TLV$ rules if it is small enough 
    (so that no signal reaches the opposite boundary before dissolving).
    Since the majority function is monotonic (changing an input bit $0\to 1$ never changes the output bit from $1\to 0$), 
    the evolution of $\TLVdm$ from a non-contiguous, finite cluster of errors can be constructed from the evolution of a contiguous
    cluster of the same size (its convex hull) by erasing errors in the spacetime diagram. 
    This implies that $\TLVdm$ is an eroder in the above sense.

%% ------------------------------------------------------------------------------------------------
    \begin{figure}[tbp]
        \centering
        \includegraphics[width=0.8\linewidth]{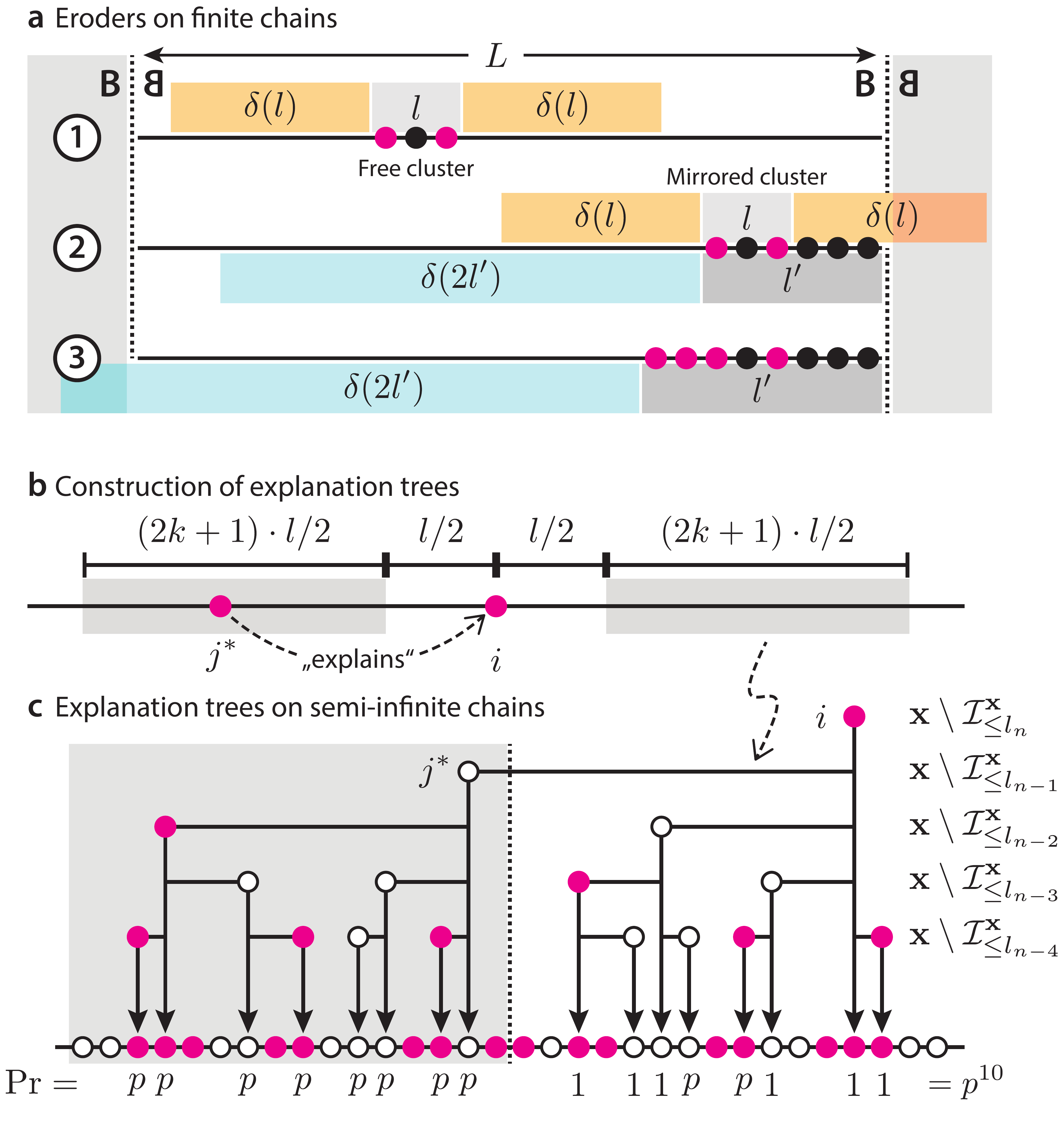}   
        \caption{
            \CaptionMark{Proofs.}
            \tzCaptionLabel{a} Eroder property for a finite system with mirrored boundary conditions.
            Errors in the cluster $I$ are marked red. Without loss of generality, $I$ can be made contiguous by padding holes with additional errors
            (black). An explanation is given in Appendix~\ref{app:finite}.
            \tzCaptionLabel{b} Illustration of the implication in Eq.~(\ref{eq:recursion_pre}): An element $i\in\El_{\leq l}$ requires the
            existence of another element $j^\ast\in\El_{\leq l-1}$ in a specific range given by $k$ and $l$.
            \tzCaptionLabel{c} Explanation tree on the semi-infinite chain with mirrored boundary condition.
            Cells in state $0$ ($1$) are marked white (red). 
            The probability to realize the shown explanation tree by a mirrored Bernoulli process with rate $p=\perrmic$ is labeled by $\Pr$.
            Note that the shown error pattern (red) fails to realize this explanation tree (arrows) since some explanatory
            sites are empty. Details are given in Appendix~\ref{app:sparse}.
        }
        \label{fig:FIG_TLV_PROOF2}
    \end{figure}
%% ------------------------------------------------------------------------------------------------

    We can make this statement more rigorous [see Fig.~\ref{fig:FIG_TLV_PROOF2}~(a)]:
    It is straightforward to verify that a finite cluster $I$ of diameter $\|I\|=l$ (without loss of generality contiguous, due to monotonicity) 
    is eroded by $\TLV$ in a spacetime rectangle of dimensions $[(2Rm+1)\,l]\times (m\,l)$ for appropriately chosen
    $m\in\mathbb{R}^+$ (for $\TLV$ it is $m=1$ and $R=4$, see Appendix~\ref{app:parameters}).
    This holds also for $\TLVdm$ if $I$ is separated from the edges by more than $\delta(l)\equiv Rml$ sites since
    $(2Rm+1)l=l+2\delta(l)$ guarantees that the cluster is eroded before the boundaries can have any effect, 
    see Fig.~\ref{fig:FIG_TLV_PROOF2}~(a-1). 
    Necessary for this situation is 
    \begin{equation}
        (2Rm+1)l\leq L\,\Leftrightarrow\,l\leq \frac{L}{2Rm+1}\,.
        \label{eq:ineq_l}
    \end{equation}
    If, on the other hand, $I$ is \textit{closer} than $\delta(l)$ to one of the edges, 
    we can no longer guarantee that it can be eroded in the neighborhood 
    given by $\delta(l)$ due to possible interactions with its mirror image.
    We define a padded interval $I'\supseteq I$ of length $l\leq l' < l+\delta(l)=(Rm+1)l$ 
    that closes the gap between $I$ and the critical edge.
    Now we know that this interval is eroded in a spacetime box of dimensions $[(2Rm+1)\,2l']\times (m\,2l')$ due to the mirror.
    In the ``real'' chain, this accounts for an interval of length $l'+\delta(2l')=(2Rm+1)l'$ adjacent to the corresponding edge.
    If the latter does \textit{not} make contact with the opposite edge, the original cluster $I$ is guaranteed to be eroded, see
    Fig.~\ref{fig:FIG_TLV_PROOF2}~(a-2). We have the sufficient condition
    \begin{equation}
        (2Rm+1)l'\leq L\,\Leftrightarrow\, l'\leq\frac{L}{2Rm+1}\,.
        \label{eq:ineq_lprime}
    \end{equation}
    If we require instead
    \begin{equation}
        l+\delta(l)\leq\frac{L}{2Rm+1}
        \,\Leftrightarrow\,
        l\leq \frac{L}{(2Rm+1)(Rm+1)}\,,
    \end{equation}
    this implies Eq.~(\ref{eq:ineq_lprime}) for all critical lengths $l'<l+\delta(l)$ and Eq.~(\ref{eq:ineq_l}) trivially.
    We conclude that with $a\equiv (2Rm+1)^{-1}(Rm+1)^{-1}$ any cluster of diameter $l\leq aL$ is eroded in finite time
    (linear in $l$). For $\TLVdm$ we find $a=1/45\approx 0.02$ (which is an extremely conservative lower bound; $\TLVdm$ allows for
    much larger values of $a$ as simulations suggest).

    Note that if the interval $l'+\delta(2l')$ is larger than the system [Fig.~\ref{fig:FIG_TLV_PROOF2}~(a-3)], 
    it is possible that the cluster relaxes into non-homogeneous fixed points or non-trivial cycles.
    This is a consequence of the two mirrors which allow for the periodic reflection of messages in this ``CA cavity''.

%% ================================================================================================
    \section{Fixed points}
    \label{app:fixedpoints}
%% ================================================================================================

%% ------------------------------------------------------------------------------------------------
    \begin{figure}[tbp]
        \centering
        \includegraphics[width=0.6\linewidth]{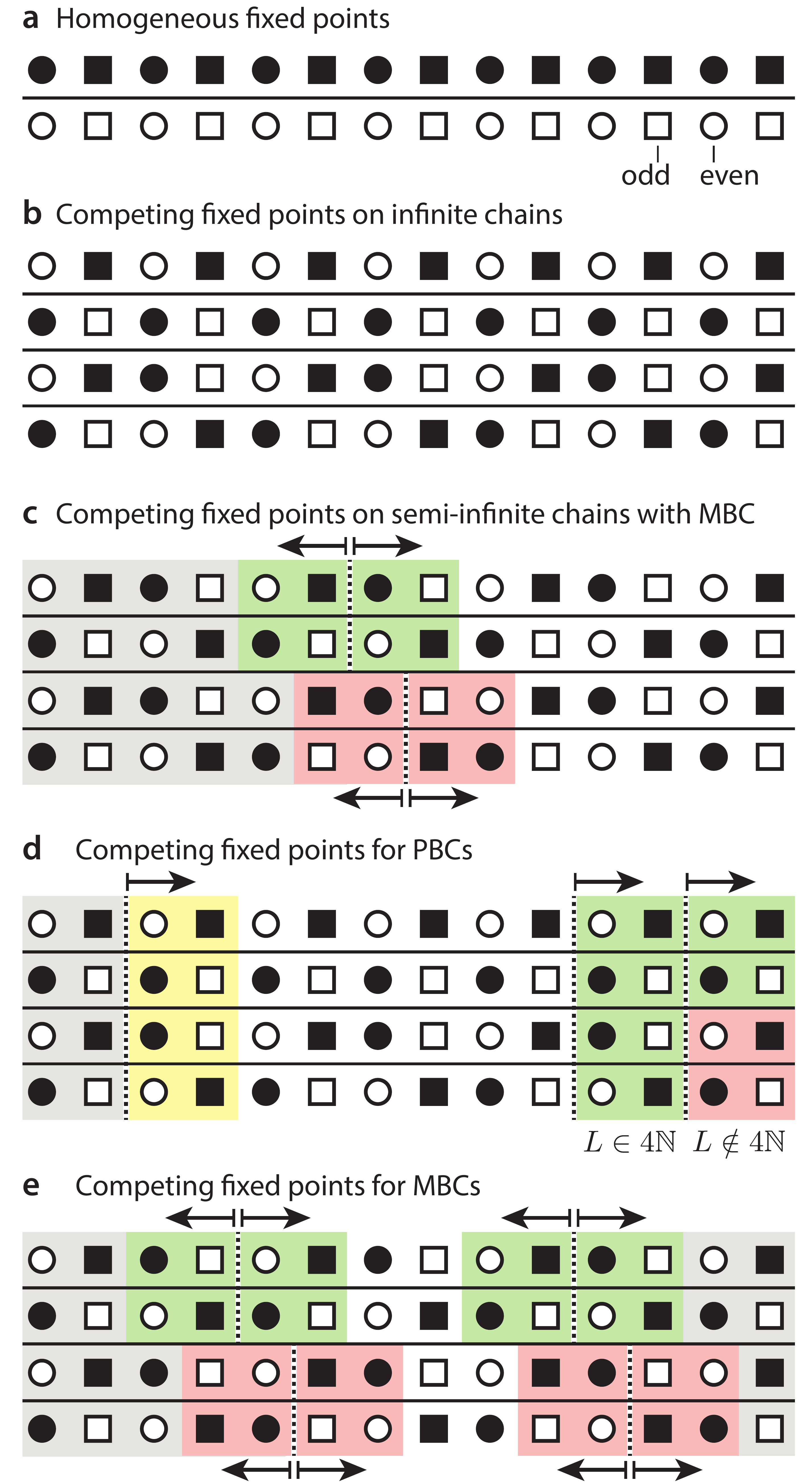}   
        \caption{
            \CaptionMark{Fixed points.}
            \tzCaptionLabel{a} The homogeneous fixed points are always present
            and the only stable ones (due to the eroder property).
            Competing (unstable) fixed points are possible but depend on the boundary conditions:
            \tzCaptionLabel{b} Infinite chain 
            (4 additional fixed points).
            \tzCaptionLabel{c} Semi-infinite chain with mirrored boundary condition 
            (2 additional fixed points if the first cell is even, none otherwise).
            \tzCaptionLabel{d} Periodic boundary conditions 
            (4 additional fixed points if $L$ is a multiple of $4$, 2 otherwise)
            \tzCaptionLabel{e} Mirrored boundary conditions 
            (2 additional fixed points if the first cell is even and the last is odd, none otherwise).
        }
        \label{fig:FIG_TLV_FIXED}
    \end{figure}
%% ------------------------------------------------------------------------------------------------

    In addition to the two homogeneous fixed points (which are stable due to the eroder property), 
    $\TLV$-based automata can feature up to 4 unstable fixed points, depending on the boundary conditions imposed.
    As shown rigorously in Appendix~\ref{app:sparse}, these are no threat to the decoding capabilities because their occurrence in a
    Bernoulli random process is exponentially suppressed. For the sake of completeness, we discuss them in the following:
    \begin{itemize}

        \item \textit{Infinite chain.}
            The original $\TLV$ features six fixed points~\cite{Toom1995}: The two homogeneous (stable) ones and, in addition,
            the four periodic (unstable) configurations shown in Fig.~\ref{fig:FIG_TLV_FIXED}~(b).

        \item \textit{Semi-infinite chain with mirrored boundary.}
            A modified $\TLVsm$ with a single mirrored boundary features 2 of the 4 unstable fixed points of the infinite chain.
            Note that the first two patterns in Fig.~\ref{fig:FIG_TLV_FIXED}~(b) are not bond-inversion symmetric and therefore cannot
            be interpreted as a valid configuration on the semi-infinite chain. However, the latter two are bond-inversion symmetric
            if the mirror is placed such that the first cell is even. In contrast, if the cell next to the mirror is odd [lower two patterns in
            Fig.~\ref{fig:FIG_TLV_FIXED}~(c)], there are no additional fixed points.

        \item \textit{Finite chain with periodic boundaries.}
            If $\TLV$ is placed on a closed ring of length $L\in 2\mathbb{N}$, potential fixed points can be used to construct
            periodic ones on the infinite chain. Since there are only the four depicted in Fig.~\ref{fig:FIG_TLV_FIXED}~(b), we have
            to check which of those remains invariant under periodic boundary conditions.
            As illustrated in Fig.~\ref{fig:FIG_TLV_FIXED}~(d), if $L$ is a multiple of 4, all four fixed points in (b) can be
            transfered to the finite chain with PBCs. However, if $L\notin 4\mathbb{N}$, the two 4-periodic patterns are no longer
            invariant and only the two 2-periodic patterns survive (compare the yellow patterns on the left with the colored
            patterns on the right).

        \item \textit{Finite chain and mirrored boundaries.}
            If $\TLV$ is placed on a chain of length $L\in 2\mathbb{N}$ with mirrored boundaries, we can infer from the
            semi-infinite case in Fig.~\ref{fig:FIG_TLV_FIXED}~(c) that only if the first (left) cell is even and the last
            (right) cell is odd, two additional fixed points survive. 
            Otherwise the homogeneous configurations are the only ones, Fig.~\ref{fig:FIG_TLV_FIXED}~(e).
            This is the modification $\TLVdm$ we use in this paper.

    \end{itemize}

    None of the additional fixed points are relevant for the correction of Bernoulli random patterns because the probability of their
    occurrence is exponentially suppressed with $L$. This follows directly from the fact that there are no non-trivial preimages of
    these fixed points, i.e., their attractors are trivial.
    The only way to end up in one of them is that the noise gives rise to its pattern by chance.
    We checked this for finite chains of $\TLVdm$ by solving the corresponding systems of 
    boolean equations to determine all fixed points and their preimages.

%% ================================================================================================
    \section{Sparse errors and correction time}
    \label{app:sparse}
%% ================================================================================================

    Here we prove a central statement of this work: 
    The probability for a chain of length $L$, ruled by $\TLVdm$ with MBC to be in a non-empty state
    $\bx(t)\neq \bvec 0$ after $t\propto L^\kappa$ time steps vanishes exponentially with $L$ for arbitrary $\kappa > 0$
    if the initial state is a Bernoulli random configuration with single-site error probability $\perrmic < \perrcrit$ 
    for some critical value $0<\perrcrit\leq\h$.

    For convenience, we reproduce the definitions from the main text:
    Let $\bvec x\subseteq\Z$ be an arbitrary subset (error pattern).
    A finite subset $I\subseteq\bvec x$ is called \textit{cluster} of diameter $\|I\|=\max\{|x-y|\,|\,x,y\in I\}$. 
    If we fix an integer $k>0$ (the \textit{sparseness parameter}, to be chosen later),
    the \textit{territory} $T_k(I)$ is defined as the interval of integers with distance at most $k\|I\|$ from $I$.
    Two clusters $I_1$ and $I_2$ are called \textit{independent} if at least one does not intersect the territory of the other, i.e.,
    $I_2\cap T_k(I_1)=\emptyset$ or $I_1\cap T_k(I_2)=\emptyset$ (or both); since $I\subset T_k(I)$, this implies $I_1\cap I_2=\emptyset$.
    If there exists a partition of $\bvec x$ into a family $\I=\{I_a\}$ of pairwise independent clusters $I_a$, 
    $\bx=\bigcup_a I_a$, then $\bx$ is called \textit{sparse}. A cluster $I\subseteq\bx$ with $T_k(I)\cap\bvec x=I$ is
    called \textit{independent in $\bx$} and we write $I\sqsubseteq\bvec x$.

    To state our first result, we need some additional terminology:
    First, $\I_{\leq l}$ denotes the family of clusters $I\in\I$ with diameter $\|I\|\leq l$ and
    $\bx\setminus\I_{\leq l}\equiv \bx\setminus \bigcup_{I\in \I_{\leq l}} I$ is the subset of sites for given $\bx$ that
    remains after cleaning all clusters of diameter at most $l$.
    Second, a \textit{(infinite) mirrored} Bernoulli random configuration $\bx\subseteq\bZ$ is defined by the single-site probability $\Pr(x_i=1)=\perrmic$ for sites
    $i>0$ and the mirror constraint $x_i=x_{1-i}$.

    We can now state our main result (an adaptation of Theorem 4 in Ref.~\cite{Taati2015} for mirrored Bernoulli random
    configurations):

    \begin{proposition}
        Consider \textbf{infinite mirrored} Bernoulli random configurations $\bx$ with single-site probability $\perrmic$.
        Let $k\in\mathbb{N}$ be a given sparseness parameter.

        Then, for each instance $\bvec x$, there exists a constructive family $\I^{\bx}$ of pairwise independent clusters
        (it is not necessarily $\bx=\bigcup_{I\in\I^\bx}I$, i.e., $\bvec x$ does not have to be sparse)
        such that the probability of a site $i\in\Z$ to be in $\bx$ and remain uncovered by independent clusters of 
        diameter $l$ or less (write $\I^{\bx}_{\leq l}$) is upper-bounded by
        \begin{equation}
            \Pr\left(i\in \bx\setminus\I^{\bx}_{\leq l}\right)\leq \alpha^{l^\beta}
            \label{eq:Pdecay_supp}
        \end{equation}
        for $\beta =\ln(2)/\ln(4k+3)$ (and therefore $0<\beta<1$) and $\alpha=(2k)(4k+3)\sqrt{\perrmic}$.
        If we define the critical value
        \begin{equation}
            \pterrcrit \equiv [(2k)(4k+3)]^{-2}\,,
            \label{eq:critp_supp}
        \end{equation}
        for $\perrmic <\pterrcrit$ it is $\alpha<1$ 
        and Eq.~(\ref{eq:Pdecay_supp}) becomes an exponentially decaying upper bound.
        \label{thm:expdecay_supp}
    \end{proposition}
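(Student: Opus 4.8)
The plan is to realise the three-step scheme announced in the main text, carrying Taati's argument~\cite{Taati2015} over to the mirror-correlated measure $x_i=x_{1-i}$. \emph{Construction.} I would build $\I^\bx$ recursively in the diameter scale, starting from $\I^\bx_{\leq 0}=\emptyset$. At stage $l\geq 1$ I inspect the residual pattern $\El_{\leq l-1}$ (the errors not yet covered by clusters of diameter $<l$) and add to the family every cluster $I\subseteq\El_{\leq l-1}$ of diameter $\|I\|\leq l$ that is independent in the residual set, i.e.\ $T_k(I)\cap\El_{\leq l-1}=I$ (in the notation of the main text, $I\sqsubseteq\El_{\leq l-1}$). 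These clusters are removed to define $\El_{\leq l}$. Since each collected cluster has diameter at most $l$, the families $\I^\bx_{\leq l}$ are nested and the diameters of their members are non-decreasing in $l$.

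\emph{Independence.} I would then check that $\I^\bx$ is pairwise independent. Suppose $I$ is collected at stage $l$ and $J$ at a later stage $l'>l$. All sites of $J$ still lie in the residual set $\El_{\leq l-1}$ at stage $l$, while $T_k(I)$ meets that residual set only in $I$ and $J\cap I=\emptyset$; hence $J\cap T_k(I)=\emptyset$. Because independence only requires one of the two territories to avoid the other cluster, the smaller (earlier) cluster always supplies the witness, and the whole family is pairwise independent. Note that $\bx=\bigcup_{I\in\I^\bx}I$ need not hold, which is why the statement bounds only the uncovered remainder.

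\emph{Probability bound.} The heart of the proof is the estimate of $\Pr(i\in\El_{\leq l})$ by \emph{explanation trees}. The deterministic input is the implication sketched in Fig.~\ref{fig:FIG_TLV_PROOF2}~(b): a site that survives to a given scale cannot be cleaned because its would-be cluster meets another residual error inside its territory, so its survival forces residual errors at the next coarser scale within a window of width set by $k$ and $l$. Iterating this implication over the geometric sequence of scales $l,\, l/(4k+3),\, l/(4k+3)^2,\dots$ produces a branching witness structure—an explanation tree as in Fig.~\ref{fig:FIG_TLV_PROOF2}~(c)—whose existence is \emph{necessary} for $i\in\El_{\leq l}$. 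Bounding $\Pr$ over all admissible trees by the union bound, I would count the trees (the positional freedom of each witness is bounded, giving at most $[(2k)(4k+3)]^{\#\mathrm{nodes}}$ of them) and bound the probability of each by a product over its occupied sites. With scale ratio $(4k+3)$ and branching $2$, a tree reaching scale $l$ carries $\sim l^{\beta}$ sites with $\beta=\ln 2/\ln(4k+3)$, so that the count and the per-site probability combine into a bound of the form $\alpha^{l^\beta}$.

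The main obstacle is precisely the probabilistic factor in this last step under the mirror constraint. For an i.i.d.\ Bernoulli measure each occupied witness would contribute $\perrmic$; but $x_i=x_{1-i}$ makes a site and its mirror partner perfectly correlated, so an explanation tree that reuses mirror pairs pays only $\perrmic$ for two occupied sites. Assigning each site the weaker factor $\sqrt{\perrmic}$ covers this worst case and is exactly what turns the per-node weight into $(2k)(4k+3)\sqrt{\perrmic}=\alpha$ and squares the threshold into $\pterrcrit=[(2k)(4k+3)]^{-2}$. The delicate point will be to organise the sites of each tree into genuinely independent (or mirror-paired) groups so that the product bound $\prod\sqrt{\perrmic}$ over $\sim l^\beta$ sites is rigorous rather than heuristic; once $\perrmic<\pterrcrit$ gives $\alpha<1$, the geometric sum over trees is dominated by the decaying per-tree probability and Eq.~(\ref{eq:Pdecay_supp}) follows.
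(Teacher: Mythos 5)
Your proposal follows essentially the same route as the paper's proof: the same recursive construction of $\I^{\bx}$ by diameter scale, the same asymmetric-witness argument for pairwise independence, and the same explanation-tree estimate with scale ratio $4k+3$, branching $2$, tree count $\sim[(2k)(4k+3)]^{\#\text{leaves}}$, and the $\sqrt{\perrmic}$-per-site worst case for mirror-paired leaves yielding $\alpha^{l^\beta}$. The one ``delicate point'' you flag is resolved in the paper exactly as you anticipate (the scales $l_m=(4k+3)^{m-1}$ force disjoint subtrees, so all $2^n$ leaves are distinct sites of $\Z$ and at most pairwise mirror-correlated, giving at least $2^n/2$ independent Bernoulli variables), and your variant of collecting all independent clusters of diameter $\leq l$ at stage $l$ merely absorbs the paper's separate ``completeness'' step into the construction.
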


    \begin{proof}
        Let $\bx \subseteq\bZ$ be an arbitrary mirrored configuration. 
        Fix the sparseness parameter $k\in\bN$.
        We proceed stepwise:
        \begin{enumerate}

            \item \textit{Construction of $\I^\bx$.}

                Ignoring the inversion symmetry of $\bx$, we construct the family of cluster $\I^\bx$ recursively:
                Set $\I_0^\bx:=\emptyset$ and $\I_{<l}:=\bigcup_{0\leq k<l}\I_k^\bx$.
                Define $\I_l^\bx$ ($l>0$) as the family of all independent clusters $I\sqsubseteq(\bx\setminus\I^\bx_{<l})$
                of diameter $\|I\|=l$;
                i.e., $I$ consists of errors in $\bx$ that do not belong to already defined smaller clusters and
                $I$ is independent in $\bx$ after deleting all these smaller clusters.
                The construction of $\I_l^\bx$ is well-defined because two different clusters $I_1$ and $I_2$ of diameter $l$ cannot
                intersect due to $T_k(I_a)\cap(\bx\setminus\I^\bx_{<l})=I_a$, $a=1,2$ (note that a cluster $I_a$ can be
                thought of as an interval of length $l$ with ``holes'' away from the edges).
                Then we define $\I^\bx:=\bigcup_{l}\I^\bx_l$.

            \item \textit{Independence of $\I^\bx$.}

                We show that distinct clusters in $\I^\bx$ are pairwise independent by construction.

                Since being independent is an \textit{asymmetric} relation between two clusters of unequal diameter
                (if $\|I_1\|\geq\|I_2\|$, then $I_2\cap T_k(I_1)=\emptyset\Rightarrow I_1\cap T_k(I_2)=\emptyset$ 
                and it is enough to check for $I_1\cap T_k(I_2)=\emptyset$),
                it is sufficient to check that (1) all equal-sized clusters in $\I^\bx_l$ are pairwise independent and (2) they do
                not intersect the territories of the smaller clusters in $\I^\bx_{<l}$.

                (1) follows immediately from the well-defined construction of $\I^\bx_l$ (see previous paragraph).

                (2) follows because $I\in \I^\bx_l$ belongs to $\El_{<l'}$ for all $l'<l$ and hence $I'\cap I=\emptyset$ for
                all $I'\in\I^\bx_{<l}$.
                Since $I'$ was chosen independent from all other elements in $\El_{<l'}$, 
                it follows that $T_k(I')\cap I=\emptyset$, i.e., $I'$ and $I$ are independent.

            \item \textit{Completeness of $\I^\bx$.}

                Here we show that the constructed $\I^\bx$ is complete in the sense that
                \begin{equation}
                    i\in\El_{\leq l}\Rightarrow\forall_{I\sqsubseteq (\El_{<l}),\|I\|\leq l}\,:\;i\notin I
                    \label{eq:connect}
                \end{equation}
                which we will use in the step 4 below. 

                Eq.~(\ref{eq:connect}) is a bit subtle because only
                \begin{equation}
                    i\in\El_{\leq l}\Rightarrow\forall_{I\sqsubseteq (\El_{<l}),\|I\|=l}\,:\;i\notin I
                \end{equation}
                follows trivially from the construction of $\I^\bx_l$ (see step 1 above).
                To prove Eq.~(\ref{eq:connect}), we show that $\forall_{I\sqsubseteq (\El_{<l})}\,:\,\|I\|\geq l$, i.e., our
                construction never recreates clusters of smaller diameter (which could, in principle, happen because we are
                successively deleting clusters):

                Assume $\exists_{I^*\sqsubseteq (\El_{<l})}\,:\,\|I^*\|=l^*<l$. It must have been
                $T_k(I^*)\cap(\El_{<l^*})\supset I^*$ because otherwise our prescription demands $I^*\in \I^\bx_{l^*}$
                and we had $I^*\cap(\El_{<l})=\emptyset$. Since $T_k(I^*)\cap(\El_{<l})=I^*$ by assumption, there must have
                been a cluster $\tilde I\in \I^\bx_{\tilde l}$ with $l^*\leq\tilde l<l$ and $T_k(I^*)\cap\tilde I\neq\emptyset$.
                But because $\|\tilde I\|\geq\|I^*\|$, this implies $T_k(\tilde I)\cap I^*\neq\emptyset$. Since
                $I^*\nsubseteq\tilde I$ and $I^*\subseteq\El_{<\tilde l}$, this contradicts the independence of $\tilde I$
                in $\El_{<\tilde l}$ and we are done.

            \item \textit{Explanation trees.}

                Clearly $(\bx\setminus\I^\bx_{\leq l})\subseteq (\bx\setminus\I^\bx_{<l})$, i.e., fewer and fewer errors in $\bx$ survive
                with increasing $l$ because on each level additional clusters are deleted from $\bx$.
                This monotonicity holds also for all monotonic sequences $(l_n)\in\bN^\bN$ with $l_n>l_{n-1}$ for all $n\in\bN$:
                $(\bx\setminus\I^\bx_{\leq l_{n}})\subseteq (\bx\setminus\I^\bx_{\leq l_{n-1}})$.
                In the end, we aim to upper-bound the probability of an arbitrary site $i\in\Z$ to belong to $\El_{\leq l}$.
                We first prove this for $\El_{\leq l_n}$ instead, where $(l_n)$ will be specified below, and generalize our result
                (with some tradeoff) to $l_n=n$ in the next (and last) step 5.

                For the sake of simplicity, let $l_n$ be an odd integer for $n\geq 1$ and define $l_0\equiv 0$ in the following.
                To bound the probability for $i\in\El_{\leq l_n}$ from above, we start with a trivially true, \textit{sufficient} condition: 
                For an arbitrary configuration $\bvec y$ with $i\in\bvec y$, we have ($n\geq 1$)
                \begin{align}
                    &\forall_{j\in\bvec y}\,:\;|i-j|\leq\frac{l_n}{2}\;\vee\;|i-j|>\left(k+\h\right)l_n\nonumber\\
                    &\Rightarrow\; \exists_{I^\ast\sqsubseteq\bvec y,\|I^\ast\|\leq l_n}\,:\;i\in I^\ast
                \end{align}
                ($I^*$ includes all $j\in\bvec y$ with $|i-j|\leq l_n/2$;
                the strict ``$>$'' becomes important only for \textit{even} $l_n$).

                The (equivalent) contraposition reads
                \begin{align}
                    &\forall_{I\sqsubseteq\bvec y,\|I\|\leq l_n}\,:\;i\notin I\nonumber\\
                    &\Rightarrow\; \exists_{j^*\in\bvec y}\,:\;\frac{l_n}{2}<|i-j^*|\leq \left(k+\h\right)l_n\,.
                \end{align}
                If we now set $\bvec y = \El_{\leq l_n-1}=\El_{< l_n}$ and use Eq.~(\ref{eq:connect}) with $l=l_n$
                (this is the crucial step that exploits the structure of $\I^\bx$),
                we end up with the sequence of implications
                \begin{subequations}
                    \begin{align}
                        &i\in\El_{\leq l_n}\nonumber\\
                        &\Rightarrow\; \forall_{I\sqsubseteq (\El_{<l_n}),\|I\|\leq l_n}\,:\;i\notin I\\
                        &\Rightarrow\; \exists_{j^*\in\El_{\color{red}{\leq l_n-1}}}\,:\;\frac{l_n}{2}<|i-j^*|\leq
                        \left(k+\h\right)l_n
                        \label{eq:recursion_pre}\\
                        &\Rightarrow\; \exists_{j^*\in\El_{\color{red}{\leq l_{n-1}}}}\,:\;\frac{l_n}{2}<|i-j^*|\leq
                        \left(k+\h\right)l_n
                        \label{eq:recursion}
                    \end{align}
                \end{subequations}
                where we used $\El_{\leq l_n-1}\subseteq \El_{\leq l_{n-1}}$ (since $l_{n-1}\leq l_n-1$) in the last line;
                this is illustrated in Fig.~\ref{fig:FIG_TLV_PROOF2}~(b).
                In combination with $i\in\El_{\leq l_n}\Rightarrow i\in\El_{\leq l_{n-1}}$, Eq.~(\ref{eq:recursion}) gives rise to a binary tree
                of depth $n$ (with sites as vertices) that explains the existence of $i\in\El_{\leq l_n}$ at its root if the
                sites at all its leafs belong to $\El_{\leq 0}=\bx$; it is aptly called \textit{explanation tree} (ET) \cite{Durand2012a}.
                An example is shown in Fig.~\ref{fig:FIG_TLV_PROOF2}~(c).

                By counting possible ETs and calculating their probability of being realized based on the (mirrored)
                Bernoulli distribution on their leafs, it is possible to upper bound the probability for $i\in\El_{\leq l_n}$ because
                the existence of at least one realized explanation tree is a \textit{necessary} condition.
                Counting explanation trees and computing their probability (by counting their leafs) is complicated by the fact that
                for arbitrary $l_n>l_{n-1}>l_{n-2}>\dots>l_0$ the allowed ranges for $j^\ast$ on different levels $n$ intersect.
                Therefore the number of leafs is not fixed and only upper bounded by $2^n$ (reducing the number of leafs can be
                achieved by ``reusing'' a site to explain more than one other site). This complicates the derivation of the
                probability for the existence of a realized explanation tree considerably. If, 
                in contrast, $(l_n)$ is chosen so that
                different subtrees cannot intersect, the number of leafs for any ET is fixed at
                $2^n$. This can be guaranteed if on each level $1\leq m\leq n$ the distance between any site $i$ and its explanatory
                site $j^\ast$ is larger than the maximum width of the subtrees emanating from each of them.
                Formally,
                \begin{equation}
                    \frac{l_m}{2} \geq 2\left(k+\h\right)\,\sum_{k=1}^{m-1}l_k
                    \label{eq:treecond}
                \end{equation}
                where the factor of $2$ is necessary because \textit{two} subtrees (one at $i$ and one at $j^*$) grow independently.
                Equating both sides yields the tightest solution via the recursion
                \begin{align}
                    l_{m+1}&=4\left(k+\h\right)\,\sum_{k=1}^{m}l_k
                    =4\left(k+\h\right)l_m+l_m
                    =(4k+3)l_m
                \end{align}
                which is solved by
                \begin{equation}
                    l_m = (4k+3)^{m-1}
                    \label{eq:lm}
                \end{equation}
                with $l_1=1$. 

                Let $N_n$ denote the number of possible ETs of depth $n$ (i.e., the number of ETs that can explain
                $i\in\El_{\leq l_n}$, all of which are \textit{realized} in the completely filled state $\bx=\bvec 1$).
                It holds recursively that $N_m\leq 2kl_m N_{m-1}^2$ for all $1\leq m \leq n$ (for non-overlapping trees this is an equality).
                One factor, $N_{m-1}$, counts the possible subtrees attached to $i$ while another factor, $2kl_m N_{m-1}$, counts
                the possible positions for the new root $j^\ast$ [$2kl_m$, see Eq.~(\ref{eq:recursion})] 
                and the possible subtrees attached to this root
                ($N_{m-1}$). Clearly
                \begin{equation}
                    \log N_m \leq a\,(m-1)+b+2\log N_{m-1}
                    \label{eq:treesizeest}
                \end{equation}
                with $a=\log (4k+3)>0$ and $b=\log(2k)>0$ for our specific choice of $l_m$ in Eq.~(\ref{eq:lm}).
                An upper bound on $\log N_m$ can be found by solving the corresponding \textit{equality} which corresponds to the
                affine recursion of two sequences
                \begin{equation}
                    \begin{bmatrix}
                        g_m \\ f_m
                    \end{bmatrix}
                    =
                    \begin{bmatrix}
                        2 & a \\ 0 & 1
                    \end{bmatrix}
                    \cdot
                    \begin{bmatrix}
                        g_{m-1} \\ f_{m-1}
                    \end{bmatrix}
                    +
                    \begin{bmatrix}
                        b \\ 1
                    \end{bmatrix}
                \end{equation}
                with $g_m=\log N_m$ and $f_m=m$ and initial conditions $g_0=0=f_0$ ($N_0=1$).
                Diagonalization of the matrix yields
                \begin{equation}
                    \begin{bmatrix}
                        \chi_m \\ \varphi_m
                    \end{bmatrix}
                    =
                    \begin{bmatrix}
                        2 & 0 \\ 0 & 1
                    \end{bmatrix}
                    \cdot
                    \begin{bmatrix}
                        \chi_{m-1} \\ \varphi_{m-1}
                    \end{bmatrix}
                    +
                    \begin{bmatrix}
                        a+b \\ \sqrt{1+a^2}
                    \end{bmatrix}
                \end{equation}
                with $\chi_m=a f_m+g_m$ and $\varphi_m=\sqrt{1+a^2} f_m$.
                Now we can use that recursions of the form $X_n=A\, X_{n-1}+B$ are generically solved by
                \begin{equation}
                    X_n=X_0\,A^n+B\frac{1-A^n}{1-A}
                    \quad\text{for}\quad A\neq 1
                \end{equation}
                and $X_n=X_0+B\,n$ for $A=1$.
                With the initial conditions, we find immediately $\varphi_m=\sqrt{1+a^2}\,m$ and $\chi_m=(2^m-1)(a+b)$.
                Transforming $\chi_m$ and $\varphi_m$ back into $g_m$ and $f_m$ yields the result $f_m=m$ and
                \begin{equation}
                    g_m=(2^m-1)(a+b)-am
                \end{equation}
                which is the solution of the recursive \textit{equality} in Eq.~(\ref{eq:treesizeest}).
                It is automatically an upper bound, thus
                \begin{align}
                    N_m&\leq \exp\left[(2^m-1)(a+b)-am\right]
                    \leq \exp\left[2^{m}(a+b)\right]
                \end{align}
                and we find $N_m\leq [(2k)(4k+3)]^{2^{m}}$ with $e^{a+b}=(2k)(4k+3)$.

                Now comes the only step where we use the mirror symmetry of the Bernoulli configuration $\bx$: The probability for all
                $2^n$ leafs of a particular ET to be occupied is $(\perrmic)^{2^n}$ for sites that are \textit{independent} Bernoulli
                random variables.
                The mirror symmetry, however, introduces perfect correlations between pairs of sites. 
                Since all leafs are distinct sites (on $\mathbb{Z}$) there are at least $2^n/2$ 
                independent Bernoulli random variables associated to an
                ET (the worst case being a completely mirror-symmetric explanation tree).
                Therefore the probability for an arbitrary ET to be realized is upper bounded by $\sqrt{\perrmic}^{2^n}$ (as compared to
                $(\perrmic)^{2^n}$ in systems without mirror symmetry). This reflects the fact that mirrors ``enlarge'' error clusters
                artificially by their mirror images. This is illustrated in Fig.~\ref{fig:FIG_TLV_PROOF2}~(c).

                In conclusion, we find an upper bound 
                \begin{align}
                    \Pr\left(i\in\El_{\leq l_n}\right)&\leq N_n\sqrt{\perrmic}^{2^n}
                    \leq \left[(2k)(4k+3)\sqrt{\perrmic}\right]^{2^n}
                \end{align}
                for an arbitrary site $i\in\Z$ to be in $\bx$ but uncovered by clusters up to diameter $l_n$ of the constructive family $\I^\bx$.
                This follows from the subadditivity of probability measures and the statement that $i\in\El_{\leq l_n}$ if there is
                \textit{at least} one of $N_n$ possible ETs realized by $\bvec x$.

                If we define $(2k)(4k+3)\sqrt{\pterrcrit}=1\,\Leftrightarrow \pterrcrit\equiv[(2k)(4k+3)]^{-2}$, 
                it follows with $\alpha\equiv(2k)(4k+3)\sqrt{\perrmic}$
                \begin{equation}
                    \Pr\left(i\in\El_{\leq l_n}\right)\leq \alpha^{2^n}\,.
                    \label{eq:doubleexpupperbound}
                \end{equation}
                For on-site probabilities $\perrmic<\pterrcrit\,\Leftrightarrow \alpha < 1$, this leads to a \textit{double-exponential} decay of the
                probability to remain uncovered on level $l_n$. 

            \item \textit{Upper bound.}

                Above we showed that $\Pr\left(i\in\El_{\leq l_n}\right)\leq \alpha^{2^n}$ with $l_n=(4k+3)^{n-1}$.
                The double-exponential decay of the probability with $n$ and the exponential growth of the level $l_n$ suggest that
                there is an exponentially decaying upper bound with $l$ (recall that our choice of $l_n$ was technically
                motivated: it is easier to count the leafs of ETs if the branches do not intersect).

                Indeed, if we use the monotonicity $\Pr\left(i\in\El_{\leq l}\right)\leq \Pr\left(i\in\El_{\leq l-1}\right)$, it follows that
                \begin{equation}
                    \Pr\left(i\in\El_{\leq l}\right)\leq \alpha^{l^\beta}
                \end{equation}
                if we require $\alpha^{l_n^\beta}\stackrel{!}{=}\alpha^{2^{n-1}}$ for $\beta >0$, 
                because for $l\in[l_{n-1},l_n]$ we know that 
                \begin{equation}
                    \Pr\left(i\in\El_{\leq l}\right) \leq \alpha^{2^{n-1}}
                    \quad\text{and}\quad\alpha^{2^{n-1}}\leq \alpha^{l^\beta}
                \end{equation}
                per construction (because $\alpha < 1$).

                This determines $\beta$ via $(4k+3)^{\beta (n-1)}\stackrel{!}{=}2^{n-1}$, i.e.,
                \begin{equation}
                    \beta=\frac{\ln(2)}{\ln(4k+3)}<1\,.
                \end{equation}

        \end{enumerate}
        This concludes the proof.
    \end{proof}

    Note that for $\perrmic > \pterrcrit$ the upper bounds become trivial
    which still allows for an exponential decay of $\Pr\left(i\in\El_{\leq l}\right)$.
    Therefore we conclude that there is a critical value $\perrcrit$ with $0<\pterrcrit\leq\perrcrit$ such that
    $\Pr\left(i\in\El_{\leq l}\right)$ vanishes exponentially for $l\to\infty$ if $\perrmic < \perrcrit$. 
    Simulations suggest that $\perrcrit=\h$ so that
    $\pterrcrit\ll 1$ is a rather weak lower bound on the true critical value, see Appendix~\ref{app:parameters}.

    Eventually we want to employ Prop.~\ref{thm:expdecay_supp} to derive an upper bound 
    for the probability of errors to survive the first $t$ steps of
    $\TLVdm$ on a \textit{finite} chain with mirrored boundaries.
    To this end, we first need a consequence of Prop.~\ref{thm:expdecay_supp}:

    \begin{lemma}
        Consider a \textbf{semi-infinite} chain on $\L=\mathbb{N}$ governed by $\TLVsm_\L$ with 
        initial configurations $\bvec x(0)\subseteq\L$ drawn from a Bernoulli distribution with parameter $\perrmic$.
        Let $\mathcal{J}\subset\L$ be an arbitrary finite interval on the chain.

        Then the probability of $\bvec x(t)=\TLVsm^t_\L(\bvec x(0))$ to be non-empty on $\mathcal{J}$ is upper-bounded by
        \begin{align}
            &\Pr\left(\bvec x(t)\cap\mathcal{J}\neq\emptyset\right)\leq
            (2tR+|\mathcal{J}|)\,\exp\left(-\gamma\,\floor{t/m}^\beta\right)
        \end{align}
        with $\gamma=-\log(\alpha)$ ($\gamma > 0$ for $\perrmic<\pterrcrit$), 
        and $0<\beta<1$ as in Prop.~\ref{thm:expdecay_supp}.
        Here the sparseness parameter is given by $k=2Rm=8$ where $m=1$ and $R=4$ 
        are the eroder parameter and the radius of $\TLV$, respectively.
        \label{lemma:halfinfinite_supp}
    \end{lemma}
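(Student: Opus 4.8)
The plan is to reduce the semi-infinite problem to the infinite mirrored setting of Prop.~\ref{thm:expdecay_supp} and then couple a light-cone (causal locality) argument with the linear eroder property. First I would invoke the equivalence established in Subsec.~\ref{subsec:boundary}: since $\TLVsm_\L$ on $\L=\mathbb{N}$ coincides with the unmodified $\TLV$ on $\Z$ restricted to the bond-inversion symmetric sector $\K_0$ [recall Eq.~(\ref{eq:Ks})], a Bernoulli-$\perrmic$ initial configuration on $\mathbb{N}$ lifts precisely to an \emph{infinite mirrored} Bernoulli configuration $\bx\subseteq\Z$ with $\Pr(x_i=1)=\perrmic$ for $i>0$ and the constraint $x_i=x_{1-i}$. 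Hence Prop.~\ref{thm:expdecay_supp}, applied with the sparseness parameter $k=2Rm$, furnishes a constructive family $\I^\bx$ of pairwise independent clusters satisfying $\Pr\left(i\in\El_{\leq l}\right)\leq\alpha^{l^\beta}$ for every site $i\in\Z$ and every diameter cutoff $l$, with $\alpha$ and $0<\beta<1$ as in that proposition.

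The crux is a deterministic implication that converts ``$\mathcal{J}$ not yet clean'' into the occurrence of an uncovered error. I would fix $l=\floor{t/m}$ (so that $ml\leq t$) and claim: if every error of $\bx$ inside the causal neighborhood $U_{tR}(\mathcal{J})=\{j\,|\,\mathrm{dist}(j,\mathcal{J})\leq tR\}$ belongs to a cluster of $\I^\bx$ of diameter at most $l$, then $\bvec x(t)\cap\mathcal{J}=\emptyset$. Contrapositively, $\bvec x(t)\cap\mathcal{J}\neq\emptyset$ forces the existence of some $j\in U_{tR}(\mathcal{J})$ with $j\in\El_{\leq l}$. Two ingredients feed this implication: (i) causal locality, namely that $\TLV$ has radius $R$, so the restriction of $\bvec x(t)$ to $\mathcal{J}$ is a function of the initial data on $U_{tR}(\mathcal{J})$ alone and errors outside this window are irrelevant; and (ii) the eroder--sparseness mechanism of Fig.~\ref{fig:FIG_TLV_PROOF}, whereby a cluster $I$ of diameter $\|I\|\leq l$ is erased within $m\|I\|\leq ml\leq t$ steps inside a spacetime trapezoid whose signals travel at most $Rm\|I\|$ sites from $I$; since $k=2Rm$ makes the territories $T_k(I)$ (and hence the trapezoids of distinct independent clusters) disjoint, each cluster erodes as if isolated on the $\bvec 0$ background and leaves $\mathcal{J}$ clean at time $t$.

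Granting this implication, the estimate follows from subadditivity of the probability measure (a union bound over the causal window) together with Prop.~\ref{thm:expdecay_supp}:
\begin{align}
    \Pr\left(\bvec x(t)\cap\mathcal{J}\neq\emptyset\right)
    &\leq \sum_{j\in U_{tR}(\mathcal{J})}\Pr\left(j\in\El_{\leq l}\right)
    \leq |U_{tR}(\mathcal{J})|\,\alpha^{l^\beta}\nonumber\\
    &= (2tR+|\mathcal{J}|)\,\exp\left(-\gamma\,\floor{t/m}^\beta\right)\,,
\end{align}
where $|U_{tR}(\mathcal{J})|=2tR+|\mathcal{J}|$ for an interval $\mathcal{J}$, $l=\floor{t/m}$, and $\gamma=-\log\alpha>0$ whenever $\perrmic<\pterrcrit$. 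Specializing to $\TLV$ gives $m=1$ and $R=4$, hence $k=2Rm=8$, reproducing the stated constants.

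I expect the main obstacle to be the deterministic implication of the second paragraph, specifically making rigorous that independent clusters of diameter $\leq l$ erode without mutual interference while \emph{larger or uncovered} clusters lying outside $U_{tR}(\mathcal{J})$ genuinely cannot corrupt $\mathcal{J}$ by time $t$. The delicate points are clusters straddling the boundary of the light cone and the exact trapezoid geometry that guarantees disjointness under $k=2Rm$; this is precisely where the radius-$R$ causal bound and the sparseness parameter must be dovetailed, and where I would lean on the spacetime accounting sketched in Fig.~\ref{fig:FIG_TLV_PROOF}.
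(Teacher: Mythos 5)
Your proposal is correct and follows essentially the same route as the paper's own proof: lifting to the mirrored Bernoulli configuration to invoke Prop.~\ref{thm:expdecay_supp}, choosing $k=2Rm$ so that independent clusters of diameter at most $\floor{t/m}$ erode without mutual interference, establishing the deterministic implication that cleanliness of $U_{tR}(\mathcal{J})$ under $\I^{\bvec x(0)}_{\leq\floor{t/m}}$ forces $\bvec x(t)\cap\mathcal{J}=\emptyset$, and finishing with the union bound over the $2tR+|\mathcal{J}|$ sites of the causal window. The concerns you flag at the end are exactly the points the paper handles via the territory argument (signals from $I$ and its complement can only meet after traversing $T_k(I)\setminus I$, which takes at least $m\|I\|$ steps), so no gap remains.
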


    \begin{proof}
        Because $\TLVsm$ is an eroder, there is a constant $m$ such that any cluster of errors $I$ on a background of zeros is
        erased for $t\geq m\|I\|$. During this process, signals emitted beyond the boundaries of $I$ can at most travel
        $Rm\|I\|$ sites where $R$ is the radius of the local rules (or the propagation speed of information). 
        If we set $k=2Rm$ as sparseness parameter, an error cluster $I\sqsubseteq\bvec x(0)$ that is independent in $\bvec x(0)$, 
        is safely erased after at most $m\|I\|$ time steps without interfering with its environment.
        This follows because signals from $I$ and $\bvec x(0)\setminus I$ can meet only after traversing the void
        territory $T_k(I)\setminus I$ which takes at least $k\|I\|/(2R)=m\|I\|$ time steps---but the last trace of $I$ is erased
        after $m\|I\|$ time steps. Therefore the evolution of $\bvec x(t)$ for $t\geq m\|I\|$ is completely independent of
        the configuration within the boundaries of $I$ (this motivates the notion of \textit{independent} clusters).
        See Fig.~\ref{fig:FIG_TLV_PROOF} of the main text for an illustration.

        If $\bvec x(0)$ is a mirrored Bernoulli random configuration with parameter $\perrmic<\pterrcrit$ 
        with $k$ set as above, we know from Prop.~\ref{thm:expdecay_supp} that the probability of any site $i\in\mathbb{N}$ 
        to be uncovered by clusters in $\I^{\bvec x(0)}$ of diameter at most $l$ is upper-bounded by
        \begin{equation}
            \Pr\left(i\in \bvec x(0)\setminus\I^{\bvec x(0)}_{\leq l}\right)\leq \alpha^{l^\beta}
        \end{equation}
        with $0<\alpha,\beta<1$.
        By subadditivity, an analogous bound holds for any finite subset $J\subset\mathbb{N}$,
        \begin{equation}
            \Pr\left(J\cap\left(\bvec x(0)\setminus\I^{\bvec x(0)}_{\leq l}\right)\neq\emptyset\right)\leq |J|\,\alpha^{l^\beta}\,.
        \end{equation}
        Let $U_r(\mathcal{J})$ be the interval of all sites within distance $r\geq 0$ of $\mathcal{J}$ and set
        $J=U_{tR}(\mathcal{J})$ for time $t\geq 0$.
        Then
        \begin{equation}
            \Pr\left(U_{tR}(\mathcal{J})\cap\left(\bvec x(0)\setminus\I^{\bvec x(0)}_{\leq l}\right)\neq\emptyset\right)\leq
            (2tR+|\mathcal{J}|)\,\alpha^{l^\beta}\,.
        \end{equation}
        This holds for all $l\in\mathbb{N}$, especially for $l=\floor{t/m}$ ($\floor{\bullet}$ is the floor function):
        \begin{equation}
            \Pr\left(U_{tR}(\mathcal{J})\cap\left(\bvec x(0)\setminus\I^{\bvec x(0)}_{\leq \floor{t/m}}\right)\neq\emptyset\right)\leq
            (2tR+|\mathcal{J}|)\,\alpha^{\floor{t/m}^\beta}
        \end{equation}
        If we exploit that no signal from outside $U_{tR}(\mathcal{J})$ can reach $\mathcal{J}$ up to
        time $t$ and that all errors that belong to independent clusters of diameter $l\leq \floor{t/m}\leq t/m$ are erased at time $t$,
        we can conclude that
        \begin{align}
            &U_{tR}(\mathcal{J})\cap\left(\bvec x(0)\setminus\I^{\bvec x(0)}_{\leq \floor{t/m}}\right)=\emptyset\nonumber\\
            \Rightarrow\;&\bvec x(t)\cap\mathcal{J}= \emptyset
        \end{align}
        and consequently
        \begin{align}
            &\Pr\left(\bvec x(t)\cap\mathcal{J}\neq \emptyset\right)
            \leq\Pr\left(U_{tR}(\mathcal{J})\cap\left(\bvec x(0)\setminus\I^{\bvec x(0)}_{\leq \floor{t/m}}\right)\neq\emptyset\right)\,.
        \end{align}
        Therefore we find
        \begin{equation}
            \Pr\left(\bvec x(t)\cap\mathcal{J}\neq \emptyset\right)\leq (2tR+|\mathcal{J}|)\,\alpha^{\floor{t/m}^\beta}\,.
        \end{equation}
        If we define $\gamma = -\log(\alpha)$ (with $\gamma > 0$ for $\perrmic < \pterrcrit$),
        it follows
        \begin{equation}
            \Pr\left(\bvec x(t)\cap\mathcal{J}\neq \emptyset\right)\leq (2tR+|\mathcal{J}|)\,\exp\left(-\gamma\,\floor{t/m}^\beta\right)
        \end{equation}
        and we are done.
    \end{proof}

    With Lemma~\ref{lemma:halfinfinite_supp},
    we are ready to tackle the case of \textit{finite} chains:

    \begin{lemma}
        Consider a \textbf{finite} chain of length $L$ on $\L=\{1,\dots,L\}$ governed by $\TLVdm$ with mirrored boundaries
        and initial configurations $\bvec x(0)\subseteq\L$ drawn from a Bernoulli distribution with parameter $\perrmic$.

        Then the probability of $\bvec x(t)=\TLVdm^t_\L(\bvec x(0))$ to be non-empty is upper-bounded by
        \begin{align}
            &\Pr\left(\bvec x(t)\neq \emptyset\right)\leq
            (4R\,\bt+L)\,\exp\left(-\gamma\,\floor{\bt/m}^\beta\right)
        \end{align}
        with $\bt\equiv\min\{t,\tL\}$ and $\tL=\lfloor L/2R\rfloor$.
        The parameters are the same as in Prop.~\ref{thm:expdecay_supp} and Lemma.~\ref{lemma:halfinfinite_supp}.
        \label{thm:finite_supp}
    \end{lemma}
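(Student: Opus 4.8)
The plan is to split the time axis at $\tL=\floor{L/2R}$, as dictated by the definition $\bt=\min\{t,\tL\}$, and to treat the two regimes by genuinely different arguments: a causal reduction to the semi-infinite chain for $t\le\tL$, and a monotonicity (finiteness) argument for $t>\tL$.

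For $t\le\tL$ the key structural fact---already highlighted in the main text---is that no site of the finite chain has \emph{both} mirrors inside its radius-$R$ past light cone up to time $t$; the central site becomes ``aware'' of the second mirror only at $\tL+1$, which is exactly what the choice $\tL=\floor{L/2R}$ guarantees. I would therefore partition $\L=\{1,\dots,L\}$ at its center into a left block $\mathcal{J}_{\mathrm L}$ and a right block $\mathcal{J}_{\mathrm R}$, so that for $t\le\tL$ every site of $\mathcal{J}_{\mathrm L}$ feels only the \emph{left} mirror and every site of $\mathcal{J}_{\mathrm R}$ only the \emph{right} mirror. On $\mathcal{J}_{\mathrm L}$ the evolution then coincides, up to time $t$, with that of a semi-infinite $\TLVsm$ chain carrying the same Bernoulli initial data reflected at the single left mirror, because the right mirror and its cavity images lie outside the relevant light cones (the fictitious sites used to extend $\mathcal{J}_{\mathrm L}$ to a full semi-infinite chain may be filled with independent Bernoulli data without changing the in-block evolution). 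Lemma~\ref{lemma:halfinfinite_supp} then applies verbatim with $\mathcal{J}=\mathcal{J}_{\mathrm L}$, and its left--right reflection gives the analogous bound on $\mathcal{J}_{\mathrm R}$. Subadditivity together with $|\mathcal{J}_{\mathrm L}|+|\mathcal{J}_{\mathrm R}|=L$ yields
\begin{align}
\Pr\left(\bvec x(t)\neq\emptyset\right)
&\le \Pr\left(\bvec x(t)\cap\mathcal{J}_{\mathrm L}\neq\emptyset\right)+\Pr\left(\bvec x(t)\cap\mathcal{J}_{\mathrm R}\neq\emptyset\right)\nonumber\\
&\le \left(2tR+|\mathcal{J}_{\mathrm L}|\right)e^{-\gamma\floor{t/m}^\beta}+\left(2tR+|\mathcal{J}_{\mathrm R}|\right)e^{-\gamma\floor{t/m}^\beta}\nonumber\\
&= \left(4Rt+L\right)e^{-\gamma\floor{t/m}^\beta}\,,
\end{align}
which is the claimed estimate with $\bt=t$; the factor $4R$ is simply $2R$ per mirror.

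For $t>\tL$ I would abandon any attempt to control the genuine cavity dynamics directly (the infinitely many perfectly correlated mirror partners make a naive sparseness argument break down) and instead exploit finiteness. Since $\bvec 0$ is a fixed point of $\TLVdm$, an empty chain stays empty, so $\{\bvec x(t)\neq\emptyset\}\subseteq\{\bvec x(t')\neq\emptyset\}$ for $t>t'$ and the map $t\mapsto\Pr(\bvec x(t)\neq\emptyset)$ is non-increasing. Evaluating the $t\le\tL$ bound at $t=\tL$ therefore controls all later times, and the two regimes merge into the single statement with $\bt=\min\{t,\tL\}$ and $\floor{\bt/m}$ in the exponent.

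The main obstacle I expect is making the causal reduction of the first step fully rigorous at the borderline time $t=\tL$: one must verify that the center-split blocks evolve identically to the respective single-mirror semi-infinite chains for \emph{all} $t\le\tL$, not merely $t<\tL$. The point is that the discrepancy between a bulk rule and the modified MBC rules at the far boundary originates \emph{at} that boundary and propagates back inward at speed at most $R$ per step, so the central site can only feel it after roughly $L/(2R)+1$ steps; it is precisely this one-step margin, encoded in $\tL=\floor{L/2R}$, that makes $t=\tL$ admissible even though the geometric light cone already grazes the far end. Once this light-cone bookkeeping is pinned down, the remaining subadditivity and monotonicity steps are routine.
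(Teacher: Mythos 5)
Your proposal follows essentially the same route as the paper's proof: split the chain at its midpoint, couple each half (for $t\le\tL$) to a single-mirror semi-infinite chain carrying matching Bernoulli data so that Lemma~\ref{lemma:halfinfinite_supp} applies with $|\mathcal{J}|=L/2$ on each side, sum by subadditivity to get the prefactor $4Rt+L$, and then use that $\emptyset$ is an absorbing fixed point of the finite chain to extend the bound at $t=\tL$ to all later times. The argument is correct, including the light-cone bookkeeping that makes $t=\tL=\floor{L/2R}$ admissible.
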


    \begin{proof}
        Let $\bx_0=\bx(0)\subseteq\mathbb{Z}\cap [1,L]$ be an arbitrary configuration of length $L$ and 
        $\bvec y_0=\bvec y(0) \subset \mathbb{Z}\cap [L+1,\infty)$ another arbitrary, half-infinite configuration.
            Denote by $\bx_0\bvec y_0=\bx_0\cup \bvec y_0$ the extension of the finite chain $\bx_0$ by $\bvec y_0$ to a
            half-infinite chain.
            If we write $\overline{\bx} (t)=\TLVdm^t_{[1,L]}(\bx_0)$ and 
            $\overrightarrow{\bx}(t)\bvec y (t)=\TLVsm^t_\mathbb{N}(\bx_0\bvec y_0)$,
            it is clear that ($L$ even)
            \begin{equation}
                \overline{x}_i(t)=\overrightarrow{x}_i(t)
                \quad\text{for}\quad 1\leq i \leq \frac{L}{2}\,,\; 0\leq t \leq \tL
            \end{equation}
            with $\tL=\lfloor L/2R\rfloor$
            due to the finite speed $R$ of information transfer.
            The put it in a nutshell: the leftmost half of a finite chain evolves exactly like the corresponding section of a
            half-infinite chain adjacent to the mirrored boundary for $t\leq\tL$. This is obvious because these sites cannot be
            influenced by the existence/non-existence of the rightmost boundary as long as it does not enter their past light cone
            (which happens at $t\sim L/2R$ or later). If we combine this with the fact that, 
            for Bernoulli distributed initial states,
            $\bx_0$ and $\bvec y_0$ are uncorrelated, it follows immediately that all results on $\TLVsm$ hold also for
            $\TLVdm$ as long as only times $t\leq\tL$ and sites in $[1,L/2]$ are concerned. 
            In particular, Lemma~\ref{lemma:halfinfinite_supp} tells us that
            \begin{align}
                &\Pr\left(\overline{\bx}(t)\cap[1,L/2]\neq \emptyset\right)\leq
                (2tR+L/2)\,\exp\left(-\gamma\,\floor{t/m}^\beta\right)
            \end{align}
            for $t\leq\tL$.
            On account of the reflection symmetry of $\TLV$, all statements hold also for the rightmost
            half $[L/2+1,L]$ with a mirrored boundary to the right (then with a reflected, half-infinite set of rules
            $\overleftarrow{\operatorname{TLV}}$). Therefore subadditivity yields
            \begin{equation}
                \Pr\left(\overline{\bx}(t)\neq\emptyset\right)\leq (4tR+L)\,\exp\left(-\gamma\,\floor{t/m}^\beta\right)
            \end{equation}
            for $t\leq\tL$. 

            Here comes the crucial step: Since the chain is \textit{finite} and $\overline{\bx}=\emptyset$ is a fixed point of
            $\TLVdm$, it is $\overline{\bx}(\tL)=\emptyset\,\Rightarrow\,\overline{\bx}(t)=\emptyset$ for all $t>\tL$.
            It follows that 
            \begin{equation}
                \Pr\left(\overline{\bx}(t)\neq\emptyset\right)\leq\Pr\left(\overline{\bx}(\tL)\neq\emptyset\right) 
            \end{equation}
            for $t\geq \tL$.
            This leads to
            \begin{equation}
                \Pr\left(\overline{\bx}(t)\neq\emptyset\right)\leq
                (4R\,\bt+L)\,\exp\left(-\gamma\,\floor{\bt/m}^\beta\right)
            \end{equation}
            with $\bt\equiv\min\{t,\tL\}$ for all $t\geq 0$.
        \end{proof}

        Note that the lower-bounded decay of the probability is to be expected for a \textit{finite} system:
        Due to the finite state space, there is an upper bound for $t$ (depending on $L$) such that the system either (1) relaxed to the clean state, (2)
        to a non-clean fixed point, or (3) entered a non-trivial cycle. In the first case, it is clean forever, whereas in the latter two
        cases, it can never become clean. Therefore the probability to be not clean cannot decrease arbitrarily and must be bounded
        from below for fixed $L$ and $t\to\infty$.

        However, if we are interested in the thermodynamic limit, $L\to\infty$, we can ask how long one has to wait for $\TLVdm$ to
        clean the system almost surely. 
        This leads us to our main result:

        \begin{corollary}
            Consider a \textbf{finite} chain of length $L$ on $\L=\{1,\dots,L\}$ governed by $\TLVdm$ with mirrored boundaries
            and initial configurations $\bvec x(0)\subseteq\L$ drawn from a Bernoulli distribution with parameter
            $\perrmic$. 

            For $\kappa\in\mathbb{R}$ with $0<\kappa<1$, the probability of $\bvec x(t)=\TLVdm^t_\L(\bvec x(0))$ to be non-empty after
            \begin{equation}
                \tmax(L)\equiv\lfloor L^\kappa\rfloor
                \label{eq:runtime_supp}
            \end{equation}
            time steps is upper-bounded by
            \begin{align}
                &\Pr\left(\bvec x({\tmax})\neq \emptyset\right)\leq
                (4R+1)\,L\,\exp\left(-\gamma\,{\floor{L^\kappa/m}^\beta}\right)
            \end{align}
            for $L\geq L_{R}$ with $0<L_{R}<\infty$ a $R$-dependent constant.
            For $\perrmic < \pterrcrit$ it follows that
            \begin{align}
                \Pr\left(\bvec x({\tmax(L)})\neq \emptyset\right)
                &\to 0\quad\text{for}\quad L\to\infty
            \end{align}
            exponentially fast.
            The parameters are the same as in Prop.~\ref{thm:expdecay_supp} and Lemma.~\ref{lemma:halfinfinite_supp}.
            \label{cor:finite_supp}
        \end{corollary}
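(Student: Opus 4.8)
The plan is to reduce the claim directly to Lemma~\ref{thm:finite_supp}, which already bounds $\Pr(\bvec x(t)\neq\emptyset)$ for every $t$ by $(4R\,\bt+L)\exp(-\gamma\,\floor{\bt/m}^\beta)$ with $\bt=\min\{t,\tL\}$ and $\tL=\floor{L/2R}$. First I would specialise this to the allotted runtime $t=\tmax(L)=\floor{L^\kappa}$. The only thing that needs checking is that the truncation in $\bt=\min\{\tmax(L),\tL\}$ is inactive, i.e.\ that $\tmax(L)\leq\tL$; this is exactly where the hypothesis $\kappa<1$ is used. Since $L^\kappa/(L/2R)=2R\,L^{\kappa-1}\to 0$ as $L\to\infty$, the gap $\tL-\tmax(L)$ diverges, so there exists an $R$- and $\kappa$-dependent constant $L_R<\infty$ with $\floor{L^\kappa}\leq\floor{L/2R}$ for all $L\geq L_R$. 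For such $L$ we have $\bt=\tmax(L)=\floor{L^\kappa}$ and Lemma~\ref{thm:finite_supp} applies with no truncation.

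Next I would clean up the prefactor and the exponent. For the prefactor, $\floor{L^\kappa}\leq L^\kappa\leq L$ (using $\kappa<1$ and $L\geq 1$) gives $4R\,\bt+L\leq 4R\,L+L=(4R+1)\,L$. For the exponent I would invoke the elementary floor identity $\floor{\floor{x}/m}=\floor{x/m}$, valid for every $m\in\mathbb{N}$; here in fact $m=1$, so $\floor{\bt/m}=\bt=\floor{L^\kappa}$. Substituting both estimates turns the bound of Lemma~\ref{thm:finite_supp} into $(4R+1)\,L\,\exp(-\gamma\,\floor{L^\kappa/m}^\beta)$, which is the stated inequality.

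Finally I would read off the asymptotics. The bound factorises into a polynomial prefactor $(4R+1)L$ and an exponential $\exp(-\gamma\,\floor{L^\kappa/m}^\beta)$. For $\perrmic<\pterrcrit$ Proposition~\ref{thm:expdecay_supp} guarantees $\alpha<1$, hence $\gamma=-\log\alpha>0$, while $0<\beta<1$. Because $\floor{L^\kappa/m}^\beta\sim(L^\kappa/m)^\beta=m^{-\beta}L^{\kappa\beta}$ grows like the positive power $L^{\kappa\beta}$ of $L$, the exponential suppression eventually dominates the linear prefactor: the logarithm $\log[(4R+1)L]-\gamma\,\floor{L^\kappa/m}^\beta$ behaves like $\log L-\gamma\,m^{-\beta}L^{\kappa\beta}\to-\infty$. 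Thus $\Pr(\bvec x(\tmax(L))\neq\emptyset)\to 0$ as $L\to\infty$, exponentially fast in $L^{\kappa\beta}$.

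The argument is essentially bookkeeping on top of Lemma~\ref{thm:finite_supp}, so I expect no deep obstacle once that lemma is available. The single load-bearing point---and where I would be most careful---is the existence of $L_R$ guaranteeing $\tmax(L)\leq\tL$. Conceptually this encodes that a sublinear runtime $L^\kappa$ is always short compared with the light-crossing time $\tL\sim L/2R$ at which the two mirrors first enter each other's past light cone; below that time the finite $\TLVdm$ is locally indistinguishable from the semi-infinite $\TLVsm$ underlying Lemma~\ref{lemma:halfinfinite_supp}, which is what makes the entire chain of estimates applicable. This is precisely the feature that allows decoding to succeed within a sublinear, rather than linear, time budget.
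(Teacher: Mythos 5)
Your proposal is correct and follows essentially the same route as the paper's own proof: specialise Lemma~\ref{thm:finite_supp} to $t=\tmax(L)=\floor{L^\kappa}$, use $\kappa<1$ to ensure $\tmax(L)<\tL$ for $L\geq L_R$ so the truncation $\bt=\min\{t,\tL\}$ is inactive, bound the prefactor by $(4R+1)L$ via $\floor{L^\kappa}\leq L$, apply $\floor{\floor{L^\kappa}/m}=\floor{L^\kappa/m}$, and note that $\gamma>0$ for $\perrmic<\pterrcrit$ makes the stretched-exponential decay dominate the linear prefactor. Your added explicit justification for the existence of $L_R$ is a detail the paper leaves implicit, but the argument is the same.
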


        \begin{proof}
            Use the result of Lemma~\ref{thm:finite_supp} with $t=\tmax(L)<\tL$
            for $L>L_{R}$ where $L_{R}$ is a finite $R$-dependent constant.
            Then $\{\tmax(L)\}=\min\{\tmax(L),\tL\}=\tmax(L)$ and we have
            \begin{align}
                &\Pr\left(\bx(\tmax)\neq\emptyset\right)\leq
                (4R\,\tmax+L)\,\exp\left(-\gamma\,\floor{\tmax/m}^\beta\right)\,.
            \end{align}
            If we use that $\floor{\floor{L^\kappa}/m}=\floor{L^\kappa/m}$ (for $m\in\mathbb{N}$) 
            this yields the final result
            \begin{subequations}
                \begin{align}
                    &\Pr\left(\bx(\tmax)\neq\emptyset\right)\nonumber\\
                    \leq\,&(4R\,\floor{L^\kappa}+L)\,\exp\left(-\gamma\,\floor{L^\kappa/m}^\beta\right)\\
                    \leq\,&(4R+1)\,L\,\exp\left(-\gamma\,\floor{L^\kappa/m}^\beta\right)
                \end{align}
            \end{subequations}
            which vanishes for $L\to\infty$ for $\kappa,\beta>0$ and $\gamma > 0$, i.e., if $\perrmic < \pterrcrit$.
        \end{proof}

        Note that one could get rid of the remaining floor function via
        \begin{equation}
            \floor{L^\kappa/m} \geq L^\kappa/m-1 \geq (1-\varepsilon)\,L^\kappa/m
        \end{equation}
        where the last lower bound requires
        \begin{equation}
            \varepsilon\,L^\kappa/m\geq 1
        \end{equation}
        which holds for all $\varepsilon >0$ if $L > L_\varepsilon$ is large enough.
        Then, for $L>\max\{L_R,L_\varepsilon\}$, one finds
        \begin{align}
            &\Pr\left(\bx(\tmax)\neq\emptyset\right)
            \leq (4R+1)\,L\,\exp\left(-\gamma[(1-\varepsilon)/m]^\beta\,L^{\kappa\beta}\right)
        \end{align}
        which clearly vanishes exponentially fast for $L\to\infty$ if $0<\varepsilon < 1$ and $\gamma,\kappa,\beta >0$.

        As a concluding remark, we note that the growth of $\tmax$ with $L$ can be much slower, namely (poly-)logarithmic,
        \begin{equation}
            \tmax (L)=\floor{(\ln L)^\kappa}
        \end{equation}
        for large enough $\kappa >0$. 
        Indeed, the probability still vanishes for $L\to\infty$ (but now sub-exponentially),
        \begin{subequations}
            \begin{align}
                \Pr\left(\bvec x(\tmax)\neq \emptyset\right)
                &\lesssim
                L\,\exp\left[-\tilde\gamma\,(\ln L)^{\kappa\beta}\right]\\
                &=
                L^{1-\tilde\gamma/\ln(L)^{1-\kappa\beta}}\;
                \longrightarrow\;0\quad\text{for}\quad L\to\infty
            \end{align}
        \end{subequations}
        if $\kappa\beta>1\,\Leftrightarrow\,\kappa >\beta^{-1}=\ln(4k+3)/\ln(2)$ and for some $\tilde\gamma >0$.

%% ================================================================================================
        \section{Parameters}
        \label{app:parameters}
%% ================================================================================================

%% ------------------------------------------------------------------------------------------------
        \begin{figure}[tb]
            \centering
            \includegraphics[width=0.7\linewidth]{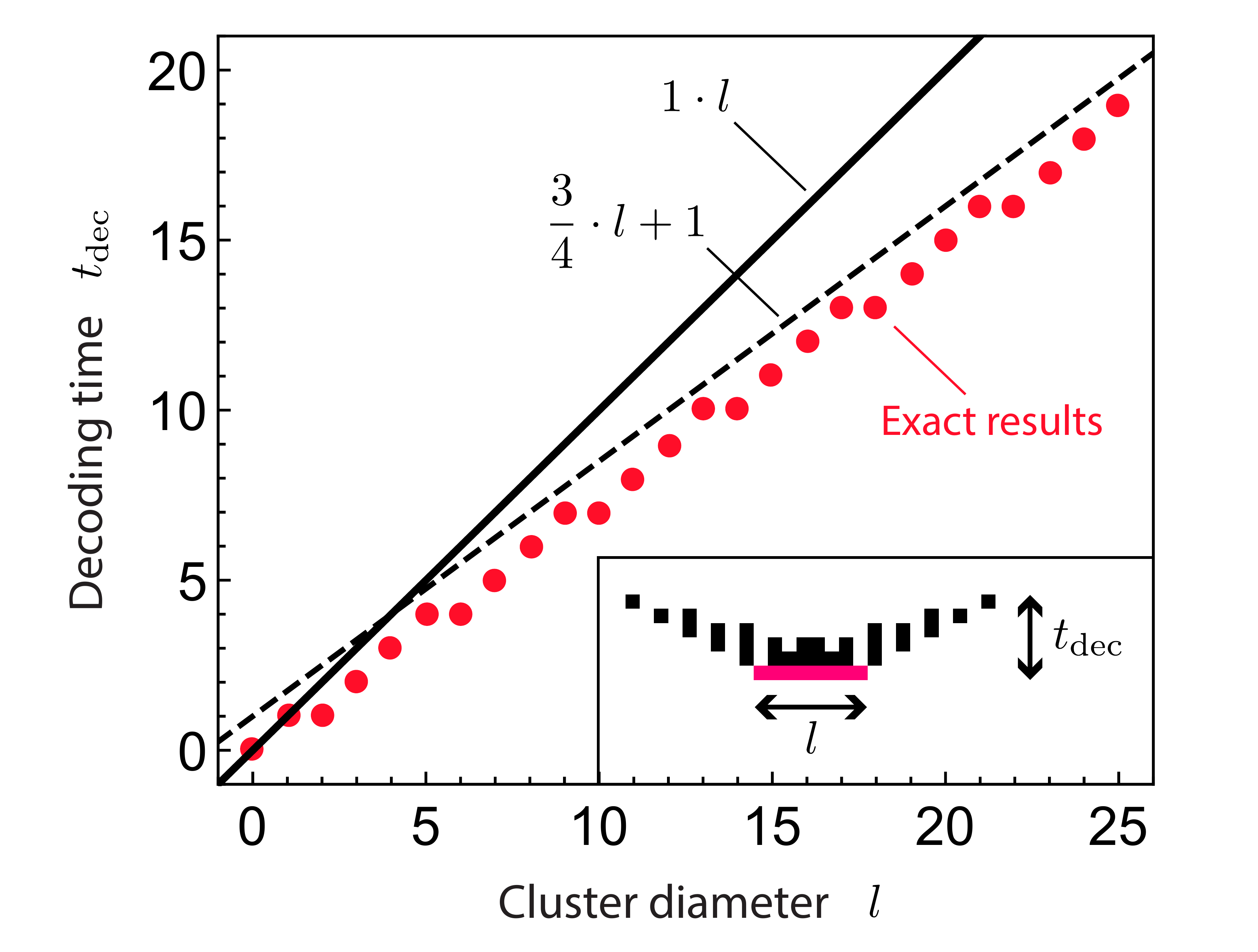}   
            \caption{
                \CaptionMark{Eroder parameter.}
                Time $\tdec$ needed by $\TLV$ to erase a homogeneous cluster of diameter $l$ completely. The red bullets mark exact results
                from simulations, featuring a $4$-periodic structure that derives from the rules of radius $R=4$.
                The most stringent upper bound is given by $\tdec\leq \frac{3}{4}\cdot l+1$ (dashed line) but we use $\tdec\leq
                1\cdot l$ (solid line) for the sake of simplicity (i.e., $m=1$). These bounds are also valid for non-homogeneous clusters due
                to the monotonicity of $\TLV$.
            }
            \label{fig:FIG_TLV_ERODER}
        \end{figure}
%% ------------------------------------------------------------------------------------------------

        $\TLV$ is a linear eroder, i.e., clusters on a background of zeros/ones with diameter $l$ are erased after at most
        $ml$ time steps, where $m\in\mathbb{R}^+$ is a rule-specific constant:
        \begin{equation}
            \tdec \leq m\|I\|
        \end{equation}
        for arbitrary (independent) clusters $I$.
        To determine $m$, it is easiest to simulate the evolution of homogeneous clusters of ones on a background of zeros for
        increasing diameter $l$. The monotonicity of $\TLV$ (holes in the initial cluster entail holes in the spacetime
        diagram) makes the inferred upper bound valid for arbitrary clusters.

        In Fig.~\ref{fig:FIG_TLV_ERODER} we show results for $0\leq l \leq 25$. The exact results feature a $4$-periodic
        structure which is related to the rules of radius $R=4$. The most stringent upper bound reads
        \begin{equation}
            \tdec \leq \frac{3}{4}\cdot l+1
        \end{equation}
        which does not exactly fit our needs of linearity (due to the affine offset, it can be recast into a purely linear upper
        bound for large enough $l$ with slightly increased prefactor $3/4+\varepsilon$).
        For the sake of simplicity, we choose
        \begin{equation}
            \tdec \leq 1\cdot l\,,
        \end{equation}
        such that the eroder parameter becomes $m=1$.
        Fig.~\ref{fig:FIG_TLV_ERODER} confirms that this upper bound on $\tdec$ is valid for all $l\geq 0$.

        With Lemma~\ref{lemma:halfinfinite_supp}, Prop.~\ref{thm:expdecay_supp} and the radius $R=4$ we find the (non-optimal)
        sparseness parameter $k=2Rm=8$, and the lower bound on $\perrcrit$ evaluates to
        \begin{equation}
            \pterrcrit=\frac{1}{[(2k)(4k+3)]^2}\approx 3.2\times 10^{-6}\,.
        \end{equation}
        Note that this result does not convey any information on the true critical value $\perrcrit$ (below which successful
        decoding is possible) but that it is larger than the above value and therefore non-zero.
        In fact, simulations suggest that $\perrcrit=\h$ and $\pterrcrit$ is only a weak lower bound 
        (which explains why there is no much sense in optimizing $m$ slightly from $m=1$ to $m=3/4+\varepsilon$).

    \end{appendix}

    % \bibliography{catqec}

    \nolinenumbers

    \end{document}